\newcommand{\lp}{\mathsf{left}}
\newcommand{\rp}{\mathsf{right}}
\newcommand{\lnk}{\mathsf{link}}
\newcommand{\ct}{\mathsf{cat}}
\renewcommand{\P}{\mathcal{P}}
\title{Improved bounds for multipass pairing heaps and path-balanced binary search trees}
\titlerunning{Improved bounds for multipass pairing heaps and path-balanced binary search trees} 
\author{Dani Dorfman}{Blavatnik School of Computer Science, Tel Aviv University, Israel}{dannatand@mail.tau.ac.il}{}{}
\author{Haim Kaplan}{Blavatnik School of Computer Science, Tel Aviv University, Israel}{haimk@post.tau.ac.il}{}{Research supported by The Israeli Centers of Research Excellence (I-CORE) program (Center No.\ 4/11), Israel Science Foundation grant no.\ 1841-14.}
\author{L\'{a}szl\'{o} Kozma}{Eindhoven University of Technology, The Netherlands}{lkozma@gmail.com}{}{Supported by ERC grant no.\ 617951. Work done while at Tel Aviv University, research supported by
The Israeli Centers of Research Excellence (I-CORE) program (Center
No.\ 4/11).}
\author{Seth Pettie}{University of Michigan}{pettie@umich.edu}{}{Supported by NSF grants CCF-1514383 and CCF-1637546.}
\author{Uri Zwick}{Blavatnik School of Computer Science, Tel Aviv University, Israel}{zwick@tau.ac.il}{}{Research supported by BSF grant no.\ 2012338 and by
The Israeli Centers of Research Excellence (I-CORE) program (Center
No.\ 4/11).}
\authorrunning{D.\,Dorfman, H.\,Kaplan, L.\,Kozma, S.\,Pettie and U.\,Zwick} 
\subjclass{F.2.2 Nonnumerical Algorithms, E.1 Data
Structures}
\keywords{data structure, priority queue, pairing heap, binary search tree}
\date{}
\begin{document}

\maketitle

\begin{abstract}
We revisit \emph{multipass} pairing heaps and \emph{path-balanced} binary search trees (BSTs), two classical algorithms for data structure maintenance. The pairing heap is a simple and efficient ``self-adjusting'' heap, introduced in 1986 by Fredman, Sedgewick, Sleator, and Tarjan. In the multipass variant (one of the original pairing heap variants described by Fredman et al.) the minimum item is extracted via repeated \emph{pairing rounds} in which neighboring siblings are linked.   

Path-balanced BSTs, proposed by Sleator (cf.\ Subramanian, 1996), are a natural alternative to Splay trees (Sleator and Tarjan, 1983). In a path-balanced BST, whenever an item is accessed, the search path leading to that item is re-arranged into a balanced tree.  

Despite their simplicity, both algorithms turned out to be difficult to analyse. Fredman et al.\ showed that operations in multipass pairing heaps take amortized $O(\log{n} \cdot \log\log{n} / \log\log\log{n})$ time. For searching in path-balanced BSTs, 
Balasubramanian and Raman showed in 1995 the same amortized time bound of $O(\log{n} \cdot \log\log{n} / \log\log\log{n})$, using a different argument.

In this paper we show an explicit connection between the two algorithms and improve the two bounds to $O\left(\log{n} \cdot 2^{\log^{\ast}{n}} \cdot \log^{\ast}{n}\right)$, respectively $O\left(\log{n} \cdot 2^{\log^{\ast}{n}} \cdot (\log^{\ast}{n})^2 \right)$, where $\log^{\ast}(\cdot)$ denotes the very slowly growing iterated logarithm function. These are the first improvements in more than three, resp.\ two decades, approaching in both cases the information-theoretic lower bound of $\Omega(\log{n})$.  

 \end{abstract}

\section{Introduction}
Binary search trees (BSTs) and heaps are the canonical comparison-based implementations of the well-known \emph{dictionary} and \emph{priority queue} data types. 

In a balanced \textbf{BST} all standard dictionary operations (\emph{insert}, \emph{delete}, \emph{search}) take $O(\log{n})$ time, where $n$ is the size of the dictionary. Early research has mostly focused on structures that are kept (approximately) balanced throughout their usage. (AVL-, red-black-trees, and randomized treaps are important examples, see e.g., \cite[\S\,6.2.2]{Knuth3}). These data structures re-balance themselves when necessary, guided by auxiliary data stored in every node.  

By contrast, Splay trees (Sleator, Tarjan, 1983~\cite{ST85}) achieve $O(\log{n})$ amortized time per operation without any explicit balancing strategy and with no bookkeeping whatsoever. Instead, Splay trees re-adjust the search path \emph{after every access}, in a way that depends only on the shape of the search path, ignoring the global structure of the tree. Besides the $O(\log{n})$ amortized time, Splay trees are known to satisfy stronger, adaptive properties (see~\cite{in_pursuit, landscape} for surveys). They are, in fact, conjectured to be optimal on every sequence of operations (up to a constant factor); this is the famous ``dynamic optimality conjecture''~\cite{ST85}. Splay trees and data structures of a similar flavor (i.e., local restructuring, adaptivity, no auxiliary data) are called ``self-adjusting''.

The efficiency of Splay trees is intriguing and counter-intuitive. They re-arrange the search path by a sequence of double rotations (``zig-zig'' and ``zig-zag''), bringing the accessed item to the root. It is not hard to see that this transformation results in ``approximate depth-halving'' for the nodes on the search path; the connection between this depth-halving and the overall efficiency of Splay trees is, however, far from obvious.  

An arguably more natural approach for BST re-adjustment would be to turn the search path, after every search, into a balanced tree.\footnote{The restriction to touch only the search path is natural, as the cost of doing this is proportional to the \emph{search cost}. (A BST can be changed into any other BST with a linear number of rotations~\cite{STT88}.)}  
This strategy combines the idea of self-adjusting trees with the more familiar idea of balancedness. 
Indeed, this algorithm was proposed early on by Sleator (see e.g., \cite{Subramanian96, pathbalance}). We refer to BSTs maintained in this way as \emph{path-balanced} BSTs (see Figure~\ref{fig_pb}). 

Path-balanced BSTs turn out to be surprisingly difficult to 
analyse. In 1995, Balasubramanian and Raman~\cite{pathbalance} 
showed the upper bound of 
$O(\log{n} \cdot \log\log{n} / \log\log\log{n})$ on the cost of 
operations in path-balanced BSTs. This bound has not been improved 
since. Thus, path-balanced BSTs are not known to match the 
$O(\log{n})$ amortized cost (let alone the stronger adaptive 
properties) of Splay. This is surprising, because broad classes of 
BSTs are known to match several guarantees of Splay trees~\cite{Subramanian96, ESA15}, path-balanced BSTs, however, fall outside these classes.\footnote{Intuitively, path-balance is different, and more difficult to analyse than Splay, because it may increase the depth of a node by an additive $O(\log{n})$, whereas Splay may increase the depth of a node by at most $2$. In a precise sense, path-balance is not a \emph{local} transformation (see~\cite{ESA15}).} Without evidence to the contrary, one may even conjecture path-balanced BSTs to achieve dynamic optimality; yet our current upper bounds do not even match those of a \emph{static} balanced tree. This points to a large gap in our understanding of a natural heuristic in the fundamental BST model.

In this paper we show that the amortized time of an access\footnote{We only focus on successful search operations (i.e., accesses). The results can be extended to other operations at the cost of technicalities. For simplicity, we assume that the keys in the tree are unique.} in a path-balanced BST is $O \left( \log{n} \cdot \left( \log^{\ast}{n} \right)^2 \cdot 2^{\log^{\ast}{n}} \right)$. The result, probably not tight, comes close to the information-theoretic lower bound of $\Omega(\log{n})$. Closing the gap remains a challenging open problem. \\ 

\begin{figure}
	\begin{center}
\includegraphics[width=0.6\textwidth]{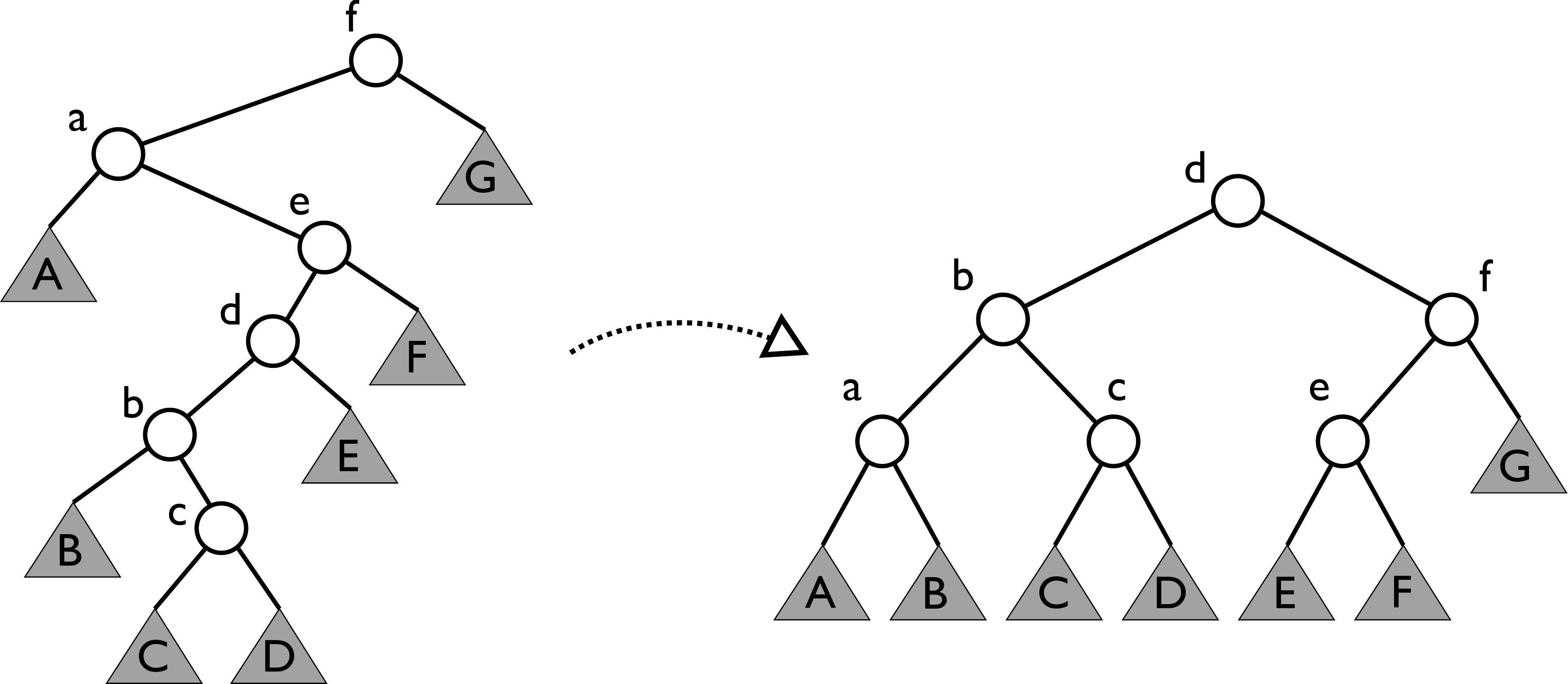}
	\end{center}
\caption{Access in a path-balanced BST. Search path $(f,a,e,d,b,c)$ from root $f$ to accessed item $c$ is re-arranged into a balanced tree with subtrees (denoted by capital letters) re-attached.
\label{fig_pb}}
\end{figure}

\textbf{Priority queues} support the operations \emph{insert}, \emph{delete-min}, and possibly \emph{meld}, \emph{decrease-key} and others. Pairing heaps, a popular priority queue implementation, were proposed in the 1980s by Fredman, Sedgewick, Sleator, and Tarjan~\cite{pairing} as a simpler, self-adjusting alternative to Fibonacci heaps~\cite{Fibonacci}. Pairing heaps maintain a multi-ary tree whose nodes (each with an associated key) are in heap order. Similarly to Splay trees, pairing heaps only perform key-comparisons and simple local transformations on the underlying tree, with no auxiliary data stored. Fredman et al.\ showed that in the standard pairing heap all priority queue operations take $O(\log{n})$ time. They also proposed a number of variants, including the particularly natural \emph{multipass pairing heap}. In multipass pairing heaps, the crucial \emph{delete-min} operation is implemented as follows. After the root of the heap  (i.e., the minimum) is deleted, repeated pairing rounds are performed on the new top-level roots, reducing their number until a single root remains. In each pairing round, neighboring pairs of nodes are \emph{linked}. Linking two nodes makes the one with the larger key the \emph{leftmost} child of the other (Figure~\ref{fig_mp}). 

Pairing heaps perform well in practice~\cite{Stasko,Moret,Tarjan14}. However, Fredman~\cite{FredmanLB} showed that all of their standard variants (including the multipass described above) fall short of matching the theoretical guarantees of Fibonacci heaps (in particular, assuming $O(\log{n})$ cost for delete-min, the average cost of \emph{decrease-key} may be $\Omega(\log\log{n})$, in contrast to the $O(1)$ guarantee for Fibonacci heaps). The exact complexity of the standard pairing heap on sequences of intermixed delete-min, insert, and decrease-key operations remains an intriguing open problem, with significant progress through the years (see e.g., \cite{IaconoUB, Pettie}). However, for the multipass variant, even the basic question of whether deleting the minimum takes $O(\log{n})$ amortized time remains open, the best upper bound to date being the $O(\log{n} \cdot \log\log{n} / \log\log\log{n})$ originally shown by Fredman et al. 
Similarly to the case of path-balanced BSTs, we have thus a basic combinatorial transformation on trees, whose complexity is not well-understood. 

\begin{figure}
	\begin{center}
\includegraphics[width=0.7\textwidth]{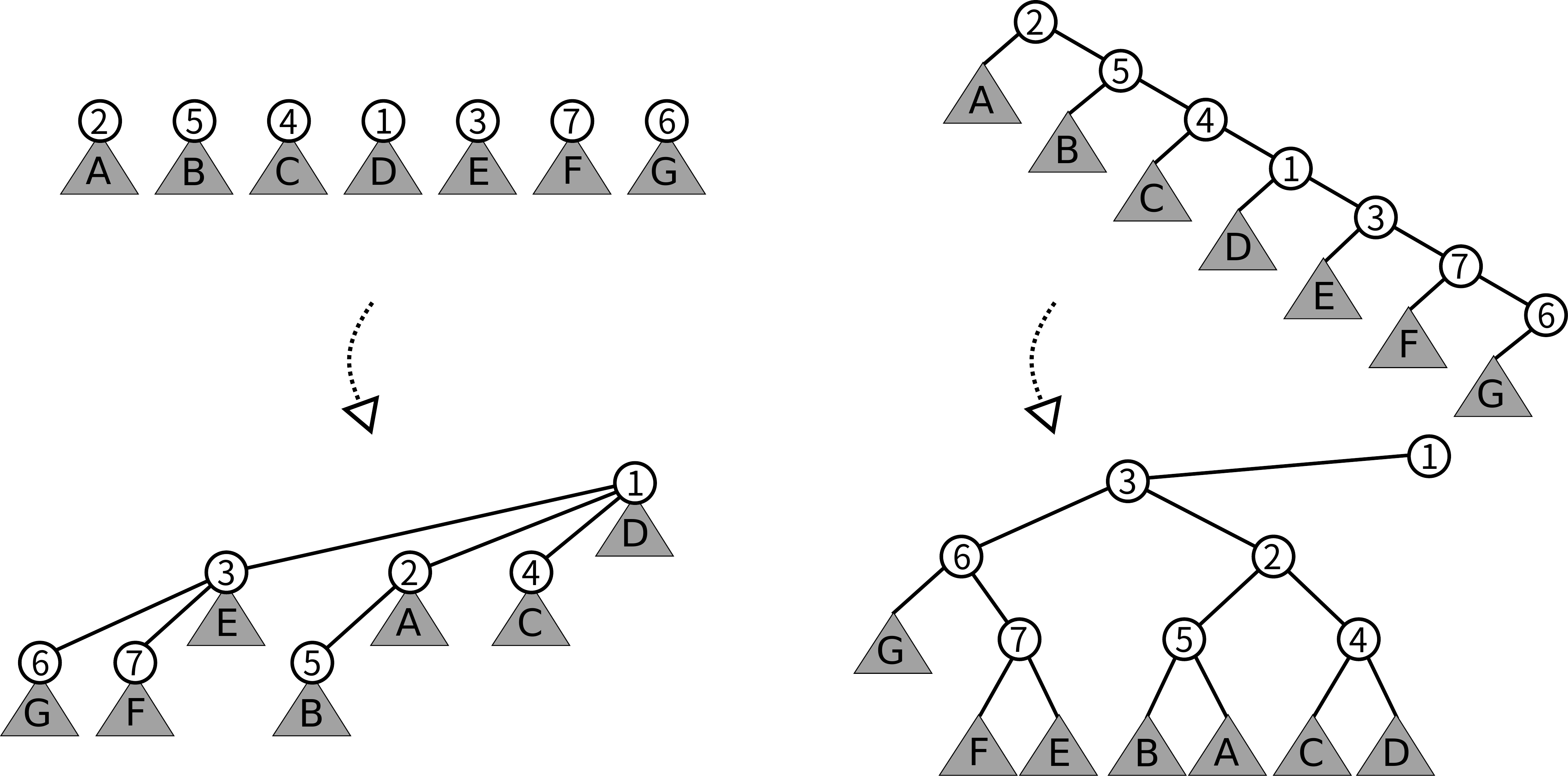}
	\end{center}
\caption{Delete-min in a multipass pairing heap. (above) state after deleting the root, with list of siblings; (below) state after three pairing rounds, with links $(2,5)$, $(4,1)$, $(3,7)$, $(2,1)$, $(3,6)$, $(1,3)$. (left) multi-ary view; (right) binary view. Numbers denote keys, capital letters denote subtrees.
\label{fig_mp}}
\end{figure}

In this paper we show that in multipass pairing heaps delete-min\footnote{To keep the presentation simpler, we only focus on \emph{delete-min} operations, omitting the extension of the result to other operations.} takes amortized time $O \left( \log{n} \cdot \log^{\ast}{n}  \cdot 2^{\log^{\ast}{n}} \right)$, the first improvement since the original paper of Fredman et al.
The improvement is, from a practical perspective, not significant. Nonetheless, it reduces the gap to the theoretical optimum from $(\approx \log^{(2)}{n})$ to less than $\log^{(k)}{n}$ for any fixed $k$. 

The reader may notice that the old bounds for multipass pairing heaps and path-balanced BSTs are the same. The two data structures are, indeed, quite similar: if one views multipass pairing heaps as binary trees (see e.g., \cite[\S\,2.3.2]{Knuth}), the multipass re-adjustement is equivalent to balancing the right-spine of a binary tree.\footnote{We note that the previous analysis of path-balanced BSTs~\cite{pathbalance} did not use this correspondence. By connecting the two data structures, we also simplify (to some extent) the proof of~\cite{pathbalance}.} The multipass analysis, however, does not immediately transfer to path-balanced BSTs; the fact that the BST search path may be arbitrary (not necessarily right-leaning) complicates the argument for path-balanced BSTs. 

Our analysis of multipass pairing heaps (\S\,\ref{sec:mp}) is based on a new, fine-grained scaling of the sum-of-logs potential function used by Sleator and Tarjan in the analysis of Splay trees, and by Fredman et al.\ in the analysis of pairing heaps. At a high level, we argue that certain link operations are information-theoretically efficient, and that such links happen sufficiently often. The subsequent, rather intricate analysis notwithstanding, we believe that the ideas of the proof may have further applications in the analysis of data structures.

In \S\,\ref{sec:pb} we show our result for path-balanced BSTs. Informally, we decompose the path-balancing operation into several stages, each of which resembles the multipass transformation, allowing us to adapt and reuse the result of \S\,\ref{sec:mp}.

\section{Multipass pairing heaps}
\label{sec:mp}
A \emph{pairing heap} is a multi-ary heap, storing a key in each node, with the regular (min)heap-condition: the key of a node is smaller than the keys of its children. Priority queue operations are implemented using the unit-cost \emph{linking} step. Given nodes $x,y$, $\lnk(x,y)$ ``hangs'' the node with the larger key as the \emph{leftmost} child of the other. The operations \emph{insert}, \emph{meld}, and \emph{decrease-key} can be implemented in a straightforward way using a single \emph{link} (we refer to~\cite{pairing} for details). The only nontrivial operation is \emph{delete-min}. Here, after deleting the root, we are left with a number of top-level nodes, which we combine into a single tree via a sequence of \emph{links}. In multipass pairing heaps we achieve this by performing repeated \emph{pairing rounds}, until a single top-level node remains (i.e., the new root of the heap). A single pairing round is as follows. Let $x_1, \dots, x_\ell$ be the top-level nodes, ordered left-to-right, before the round. For all $1\leq i \leq \lfloor \ell/2 \rfloor$ we perform $\lnk(x_{2i-1},x_{2i})$. Observe that if $\ell$ is odd, then the rightmost node is unaffected in the current round. The number of rounds is $\lceil \log(k) \rceil$, where $k$ is the number of children of the (deleted) root.\footnote{The function $\log(\cdot)$ is base $2$ everywhere, the base $e$ logarithm is written as $\ln(\cdot)$.} (See Figure~\ref{fig_mp}.)

We now analyse delete-min operations implemented by multipass pairing heaps. Let $k$ be the number of children of the deleted root, defined to be the real cost of the operation (observe that the number of links is exactly $k-1$). Let $n$ be the size of the heap before the operation. We use the binary tree view of multi-ary heaps, where the \emph{leftmost child} and \emph{next sibling} pointers are interpreted as \emph{left child} and \emph{right child}. 
A single link operation is shown in Figure~\ref{fig1}.  Let $a$, $b$, $c$ denote the sizes of subtrees $A$, $B$, and $C$, respectively. 

\begin{figure}[H]
    \begin{center}
\includegraphics[width=0.95\textwidth]{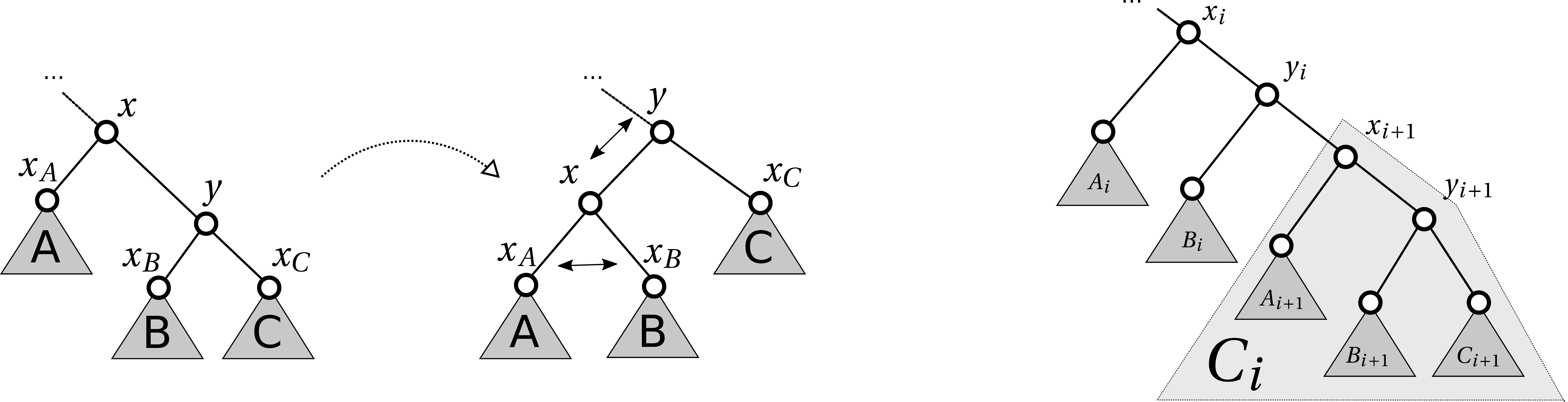}
    \end{center}
\caption{Left: $\lnk(x,y)$ in binary tree view. Dots (...) indicate the sequence of nodes that have already been linked in the current round, subtree $C$ contains the yet-to-be-linked nodes. Arrows indicate possible switching depending on the outcome of the comparison between $x$ and $y$. The roots of $A$,$B$, and $C$ are denoted $x_A$, $x_B$, and $x_C$. Right: $i$-th link in a round (between $x_i$ and $y_i$). The subtree rooted at the right child of $y_i$ is denoted $C_i$; observe that $C_i$ contains $C_{i+1}$. 
\label{fig1}}
\end{figure}

 We define a potential function that refines the Sleator-Tarjan ``sum-of-logs'' potential~\cite{ST85}. 
Let $\Phi = \sum_{x \in T}{\phi(x)}$, over all nodes $x$ of the heap $T$, where $$\phi(x) = \frac{H(x)}{\log^2{(2+H(x))}},~~\mbox{and}~~H(x) = \log{\left( \frac{ s(p(x))}{s(x)} \right)},$$ where $s(x)$ denotes the size of the subtree rooted at $x$, and $p(x)$ is the parent of $x$.\footnote{Using $\phi(x) = H(X)$ instead, would essentially recover the original ``sum-of-logs'' potential. Such an ``edge-based'' potential function was used earlier, e.g., in~\cite{GeorgakopoulosM04,Mehlhorn84}.} Note that both \emph{subtrees} and \emph{parents} are meant in the binary tree view.

For convenience, define the functions $$f(x) = \log{x}/\log^2{(2+\log{x})},~~\mbox{ and}~~g(x) = x/\log^2{(2+x)}.$$ With this notation, $f(x) = g(\log{(x)})$, and $\displaystyle\phi(x) = f\left(\frac{s(p(x))}{s(x)}\right)$.
Clearly, both $f(x)$ and $g(x)$ are positive, monotone increasing, and concave, for all $x\geq 1$, respectively, $x \geq 0$. 

By simple arithmetic, the increase in potential due to a single link (as in Figure~\ref{fig1}) is:
\begin{align}
\Delta \Phi \: = 
\begin{split}
f\left(\frac{a+b+1}{a}\right) + f\left(\frac{a+b+1}{b}\right) + f\left(\frac{a+b+c+2}{a+b+1}\right) + f\left(\frac{a+b+c+2}{c}\right) \label{eq:1} \\
- f\left(\frac{a+b+c+2}{a}\right) - f\left(\frac{a+b+c+2}{b+c+1}\right) - f\left(\frac{b+c+1}{b}\right) - f\left(\frac{b+c+1}{c}\right).
\end{split}
\end{align}

For a suitably large constant $\gamma$ (for concreteness let $\gamma=3000$), we consider the quantities $\gamma^2 a$, $\gamma b$, and $c$, i.e., the scaled sizes of the subtrees $A$, $B$, and $C$. We distinguish different kinds of links, depending on the ordering of the three quantities (breaking ties arbitrarily).
We first look at the cases when $\gamma^2 a$ or $\gamma b$ is the largest (called respectively type-(1) and type-(2) links), and show that the possible increase in potential due to such links is small. In particular, for type-(1) links, $\Delta \Phi$ is dominated by a term $f(a/c)$, and for type-(2) links the positive and negative contributions cancel out, leaving $\Delta \Phi = O(1)$. The proofs use standard (although somewhat delicate) analysis; we defer most of the calculations to Appendix~\ref{appa}.

\begin{lemma}[\ref{lem1proof}]\label{lema}
A type-(1) link \emph{($\gamma^2 a \geq \max{\{\gamma b, c \} }$)} increases the potential $\Phi$ by at most $2 \cdot g\bigl( \log{(a/c)} + O(1) \bigr)$, where   
the $O(1)$ term is a constant independent of $a$, $b$, $c$, $n$, and $k$.
\end{lemma}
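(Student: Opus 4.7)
The plan is to rewrite each $f$-value as $g$ applied to a log-ratio (recall $f(x)=g(\log x)$), regroup the eight terms of $\Delta\Phi$ in \eqref{eq:1} into three bundles $G_A,G_B,G_C$, and exploit the following \emph{subadditivity} property of $g$: since $g$ is concave on $[0,\infty)$ with $g(0)=0$, one has $g(u+v)\le g(u)+g(v)$ for all $u,v\ge 0$. This is the key structural property that will produce cancellations between the positive and the negative $f$-terms.

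The proposed grouping is
\[
G_A \;:=\; f\!\left(\frac{a+b+c+2}{c}\right) - f\!\left(\frac{a+b+c+2}{b+c+1}\right) - f\!\left(\frac{b+c+1}{c}\right),
\]
\[
G_B \;:=\; f\!\left(\frac{a+b+1}{a}\right) + f\!\left(\frac{a+b+c+2}{a+b+1}\right) - f\!\left(\frac{a+b+c+2}{a}\right),
\]
\[
G_C \;:=\; f\!\left(\frac{a+b+1}{b}\right) - f\!\left(\frac{b+c+1}{b}\right),
\]
and a direct inspection shows $\Delta\Phi = G_A+G_B+G_C$, with each of the eight terms of \eqref{eq:1} accounted for exactly once.

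For $G_A$, setting $u=\log\tfrac{b+c+1}{c}$ and $v=\log\tfrac{a+b+c+2}{b+c+1}$ gives $u+v=\log\tfrac{a+b+c+2}{c}$, and subadditivity yields $G_A = g(u+v)-g(u)-g(v) \le 0$. For $G_B$, writing $\alpha=\log\tfrac{a+b+1}{a}$ and $\beta=\log\tfrac{a+b+c+2}{a+b+1}$ gives $G_B = g(\alpha)+g(\beta)-g(\alpha+\beta) \le g(\alpha+\beta)$ since $g\ge 0$, and the type-(1) hypothesis $\gamma^2 a \ge \max\{\gamma b,c\}$ forces $\tfrac{a+b+c+2}{a} \le 3+\gamma+\gamma^2$, so $G_B = O(1)$. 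For $G_C$, if $a\le c$ then $G_C\le 0$ by monotonicity of $g$; otherwise subadditivity gives $G_C \le g\!\left(\log\tfrac{a+b+1}{b+c+1}\right)$, and using $b\le\gamma a$ together with $b+c+1\ge c$ yields $G_C \le g(\log(a/c)+O(1))$.

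Summing, $\Delta\Phi \le g(\log(a/c)+O(1)) + O(1)$. The trailing $O(1)$ is absorbed by enlarging the additive constant inside $g$: since $c\le\gamma^2 a$ implies $\log(a/c)\ge -2\log\gamma$, a sufficiently large but absolute constant $C$ ensures $g(\log(a/c)+C)$ exceeds the residual $O(1)$, yielding $\Delta\Phi \le 2g(\log(a/c)+O(1))$, as claimed. The main obstacle is finding the right grouping: any alternative pairing tends to leave an isolated positive $f$-value whose argument can grow in $b$ or $c$ without being controlled by the type-(1) condition. The grouping above is engineered so that $G_A$ lines up exactly as a subadditivity defect (hence nonpositive), the log-ratio in $G_B$ is precisely the one made bounded by the type-(1) hypothesis, and $G_C$ supplies the single remaining difference, which delivers the announced $g(\log(a/c)+O(1))$ dominant term.
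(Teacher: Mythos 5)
Your proof is correct, and it takes a genuinely different route from the paper's. The paper's argument splits into the cases $\gamma b \geq c$ and $c \geq \gamma b$ (the latter again into $a\leq c$ and $a>c$), in each case discarding positive--negative pairs of terms of Equation~(\ref{eq:1}) whose difference is nonpositive and then estimating what survives via the technical Lemma~\ref{lem123}(i),(ii),(iv). Your three-bundle decomposition replaces almost all of that case analysis with one structural observation: $G_A=f(XY)-f(X)-f(Y)$ with $X=\frac{a+b+c+2}{b+c+1}\ge 1$ and $Y=\frac{b+c+1}{c}\ge 1$ is exactly a subadditivity defect of $g$ (this is precisely Lemma~\ref{lem123}(ii)), hence nonpositive; $G_B$ is $O(1)$ because the type-(1) hypothesis bounds $\frac{a+b+c+2}{a}$ by the absolute constant $3+\gamma+\gamma^2$; and $G_C$ supplies the dominant $g\bigl(\log(a/c)+O(1)\bigr)$ by one further application of subadditivity (your internal split on $a\le c$ versus $a>c$ is genuinely needed there, since subadditivity requires nonnegative arguments, but it is harmless, and in the branch $a\le c$ the bound $\log(a/c)\ge -2\log\gamma$ from $c\le\gamma^2 a$ still lets the right-hand side absorb the residual constant, as you note). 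The one cosmetic slip is the justification ``$G_B\le g(\alpha+\beta)$ since $g\ge 0$''; dropping $-g(\alpha+\beta)$ gives $G_B\le g(\alpha)+g(\beta)\le 2g(\alpha+\beta)$ by monotonicity, which is still $O(1)$, so nothing is lost. What your route buys is brevity and transparency: it isolates the only properties of $g$ that matter (subadditivity from concavity with $g(0)=0$, monotonicity, and $g(x)\to\infty$), and it makes visible that the type-(1) condition enters solely to control $G_B$ and the additive constant. What the paper's route buys is uniformity with the rest of Appendix~\ref{appa}, where the same term-dropping case analysis is reused for the type-(3) lemmas, in which finer cancellations (not just sign information) must be extracted.
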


\begin{lemma}[\ref{lem2proof}]\label{lemb}

A type-(2) link \emph{($\gamma b \geq \max{\{\gamma^2 a, c \} }$)} increases $\Phi$ by at most $O(1)$.

\end{lemma}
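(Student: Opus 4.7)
The plan is to group the eight terms of $\Delta\Phi$ in (\ref{eq:1}) into four contributions, each bounded by $O(1)$, using the type-(2) hypotheses $a \leq b/\gamma$ and $c \leq \gamma b$. Two structural properties of $f$ and $g$ drive the argument: $f$ is monotone increasing, and $g$ is concave on $[0,\infty)$ with $g(0) = 0$, hence subadditive, so $g(u+v) - g(u) \leq g(v)$ for all $u,v \geq 0$.

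First, by monotonicity of $f$,
\[
f\!\left(\tfrac{a+b+1}{a}\right) - f\!\left(\tfrac{a+b+c+2}{a}\right) \;\leq\; 0.
\]
Second, I pair the two $f$-terms with denominator $c$ and rewrite their difference as $g(u+v) - g(u)$, where $u = \log\tfrac{b+c+1}{c}$ and $v = \log\bigl(1 + \tfrac{a+1}{b+c+1}\bigr)$. By subadditivity this is at most $g(v)$, and since $a \leq b/\gamma$ one has $\tfrac{a+1}{b+c+1} \leq \tfrac{1}{\gamma} + 1 = O(1)$, so $v = O(1)$ and the pair contributes $g(v) = O(1)$.

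The remaining two positive terms I would bound individually by writing them as $g\bigl(\log(1+\tfrac{a+1}{b})\bigr)$ and $g\bigl(\log(1+\tfrac{c+1}{a+b+1})\bigr)$. The type-(2) bounds give $\tfrac{a+1}{b} \leq \tfrac{1}{\gamma}+\tfrac{1}{b} = O(1)$ and $\tfrac{c+1}{a+b+1} \leq \gamma + O(1)$, so both $g$-arguments are $O(1)$ and both terms are $O(1)$. Finally, the two remaining negative terms $-f\bigl(\tfrac{a+b+c+2}{b+c+1}\bigr)$ and $-f\bigl(\tfrac{b+c+1}{b}\bigr)$ are non-positive contributions, so dropping them preserves the upper bound. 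Summing the four contributions yields $\Delta\Phi = O(1)$.

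The main obstacle I anticipate is the careful bookkeeping around the small additive constants in (\ref{eq:1}) and the verification of the edge cases where a subtree has very small size (e.g., $b \leq 1$, where the ``$1/b$'' slack would otherwise appear to spoil the bound); these are handled by noting that whenever a log-ratio is bounded by an absolute constant, the continuous function $g$ returns an $O(1)$ value. The key qualitative point is that in a type-(2) link $b$ dominates both $a$ (by a factor $\gamma$) and $c$ (up to a factor $\gamma$), so each log-ratio in $\Delta\Phi$ either is bounded by an absolute constant or cancels with its partner by monotonicity; no information-theoretic charging against search cost is needed, in contrast to Lemma~\ref{lema}.
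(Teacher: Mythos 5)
Your proof is correct and takes essentially the same route as the paper's: the paper also pairs the positive and negative $b/a$-type terms and the positive and negative $b/c$-type terms, and absorbs the remaining terms into $O(1)$ using the fact that $b=\Omega(a+c)$ together with $f(O(x))=f(x)+O(1)$. Your version merely makes the cancellations explicit (monotonicity for the $a$-denominator pair, subadditivity of $g$ for the $c$-denominator pair) where the paper's two-line sketch invokes them implicitly; there is no gap.
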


The case when $c$ is the greatest of the three quantities (called type-(3) link) is the most favorable. Here, the potential of $x_A$, $x_B$ before the linking is (roughly) the logarithm of $s(x_C)$ (very large) divided by $s(x_A),s(x_B)$; after the linking, the potential becomes (essentially) the logarithm of the ratio between $s(x_A)$ and $s(x_B)$ (much smaller), resulting in a significant saving in potential. We use this saving to ``pay'' for the operations. First we make the following, easier claim. 

\begin{lemma}[\ref{lemcproof}]\label{lemc}
A type-(3) link \emph{($c \geq \max{\{\gamma^2 a, \gamma b \} }$)} can not increase $\Phi$.
\end{lemma}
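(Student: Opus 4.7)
The plan is to verify $\Delta\Phi \leq 0$ directly from formula~(\ref{eq:1}) under the hypothesis $c \geq \gamma^2 a$ and $c \geq \gamma b$. I would introduce shorthands for the seven logarithms appearing in (\ref{eq:1}): $u = \log((a+b+1)/a)$, $u' = \log((a+b+1)/b)$, $v = \log((a+b+c+2)/(a+b+1))$, $V = \log((b+c+1)/b)$, $w = \log((a+b+c+2)/(b+c+1))$, $t_1 = \log((a+b+c+2)/c)$, $t_2 = \log((b+c+1)/c)$, together with the telescoping identities $u+v = \log((a+b+c+2)/a)$, $u'+v = V+w$, and $t_1 = t_2 + w$. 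Using $f = g \circ \log$, equation~(\ref{eq:1}) becomes
\[
\Delta\Phi \;=\; g(u) + g(u') + g(v) + g(t_1) \;-\; g(u+v) - g(V) - g(w) - g(t_2).
\]

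The first step is to dispose of the ``small-argument'' contributions. Under the hypothesis $(b+1)/c$, $(a+b+2)/c$, and $(a+1)/(b+c+1)$ are all $O(1/\gamma)$, hence $t_1, t_2, w = O(1/\gamma)$. Since $g$ is smooth at the origin with $g(0) = 0$ and $g'(0) = 1/\log^2{2} = 1$, a first-order Taylor expansion combined with the cancellation $t_1 = t_2 + w$ gives $g(t_1) - g(t_2) - g(w) = O(1/\gamma^2)$. Similarly, $V = u' + v - w$ together with $|g'| \leq g'(0) = 1$ yields $g(V) = g(u'+v) + O(1/\gamma)$. After these substitutions the problem reduces to
\[
g(u) + g(u') + g(v) \;\leq\; g(u+v) + g(u'+v) + O(1/\gamma).
\]

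The reduced inequality is a statement purely about the concave function $g$ on $[0,\infty)$, equivalent to
\[
\bigl[g(u+v) - g(u)\bigr] + \bigl[g(u'+v) - g(u')\bigr] \;\geq\; g(v) - O(1/\gamma),
\]
asserting that the two ``increments'' of $g$ of length $v$ (starting at $u$ and $u'$) together recoup the full $g(v)$, up to error. Each increment is at most $g(v)$ by subadditivity (concavity with $g(0) = 0$) and decreases in the base point because $g'$ is decreasing, with the explicit rate $g'(t) \asymp 1/\log^2(2+t)$. The main obstacle I foresee is the intermediate regime where $u$ and $u'$ are both comparable to $v$, so neither increment is close to $g(v)$ on its own; here one should exploit the quantitative lower bound $v \geq \log \gamma - O(1)$ (which follows from $c \geq \gamma(a+b)/2$) together with the slow, polylogarithmic decay of $g'$ to show that the sum of the two increments does exceed $g(v)$ by a margin large enough to absorb the $O(1/\gamma)$ error from the first step. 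The choice $\gamma = 3000$ should supply enough slack for this estimate to close uniformly across all sub-cases.
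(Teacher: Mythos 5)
Your reduction is a legitimate reorganization of Equation~(\ref{eq:1}) — the telescoping identities are correct, the terms $t_1,t_2,w$ are indeed $O(1/\gamma)$ under the type-(3) hypothesis, and discarding $g(t_1)-g(t_2)-g(w)\leq 0$ by subadditivity and replacing $g(V)$ by $g(u'+v)$ up to $w\cdot\sup g'$ is fine. (Note this pairs the positive and negative terms differently from the paper, which drops $g(v)-g(u+v)\leq 0$ and plays the remaining positive terms against $-g(V)$ via the derivative bound of Lemma~\ref{lem123}(iii).) However, the endgame as you state it cannot close, for two reasons. First, your displayed targets put the error on the wrong side: proving $g(u)+g(u')+g(v)\leq g(u+v)+g(u'+v)+O(1/\gamma)$ only yields $\Delta\Phi\leq O(1/\gamma)$, not $\Delta\Phi\leq 0$; you need the two increments to exceed $g(v)$ \emph{by at least} the error you discarded.

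Second, and this is the genuine gap: there is no uniform positive margin to absorb a uniform $O(1/\gamma)$ error. The quantity $\bigl[g(u+v)-g(u)\bigr]+\bigl[g(u'+v)-g(u')\bigr]-g(v)$ has infimum $0$ over the admissible range. Indeed $u$ and $u'$ are coupled by $2^{-u}+2^{-u'}=(a+b)/(a+b+1)<1$, and in the lopsided regime $a/b\to\infty$ (so $u\to 0$, $u'\to\infty$) with $v$ held near its minimum $\approx\log\gamma$, the first increment tends to $g(v)$ while the second, $g(u'+v)-g(u')\approx v\,g'(u')\approx v/\log^{2}(2+u')$, tends to $0$. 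So the margin vanishes, and it cannot dominate any fixed $O(1/\gamma)$. The inequality is still true there, but only because the discarded error $g(V+w)-g(V)\leq w\,g'(V)$ \emph{also} vanishes in that regime at the comparable rate $\approx 2^{-v}/\log^{2}(2+u'+v)$, losing against the margin only by the factor $v2^{v}$. Establishing this requires keeping the dependence of $w$, $t_1$, $t_2$ on $a,b,c$ (rather than bounding them uniformly by $O(1/\gamma)$) and comparing two quantities that both decay polylogarithmically — which is precisely the case analysis (on the relative sizes of $a/b$ and $c/a$, with the derivative lower bound for $f'$ and the explicit $4.5/y+4.5/y^{2}$ budgets) that the paper's proof of this lemma carries out. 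The symmetric regime $u\approx u'\approx 1$ that you flag as the main obstacle is actually the easy part; the lopsided regime is where your stated plan breaks.
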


It remains to balance the \emph{decrease} in potential due to type-(3) links and the \emph{increase} in potential due to all other links. First, we show that almost all links are type-(3).

\begin{lemma}\label{lem5}
There are at most $O(\log{n})$ type-(1) and type-(2) links within a pairing round.
\end{lemma}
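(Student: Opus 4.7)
My plan is to identify a single monotone potential that drops geometrically at every type-(1) and type-(2) link; then a telescoping argument immediately yields the $O(\log n)$ bound. The natural candidate, suggested directly by the right-hand side of Figure~\ref{fig1}, is $c_i := |C_i|$, the size of the ``yet-to-be-linked'' remainder at the $i$-th link of the round ($i = 1, \ldots, K$ with $K = \lfloor \ell/2 \rfloor$). The first step is to observe that $C_{i-1}$ is precisely the union of the two about-to-be-linked nodes $x_{2i-1}, x_{2i}$ together with their left subtrees $A_i, B_i$ and the further remainder $C_i$, giving the exact recurrence
\[
c_{i-1} \;=\; a_i + b_i + c_i + 2.
\]
Setting $c_0 := n-1$ (the size of the entire subtree rooted at $x_1$) makes this valid for $i = 1$ as well, and in particular $c_0 \ge c_1 \ge \cdots \ge c_K \ge 0$.

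Next, I would convert the type-(1) and type-(2) hypotheses into a geometric drop along the sequence $(c_i)$. A type-(1) link at step $i$ forces $a_i \ge c_i/\gamma^2$; a type-(2) link forces $b_i \ge c_i/\gamma \ge c_i/\gamma^2$. Either way $a_i + b_i \ge c_i/\gamma^2$, and substitution into the recurrence gives
\[
c_{i-1} + 1 \;\ge\; (c_i + 1)\left(1 + 1/\gamma^2\right),
\]
while for any other (type-(3)) link one still has $c_{i-1} + 1 \ge c_i + 1$ trivially. The ``$+1$'' shift is included only so that the inequality survives the boundary case $c_K = 0$, which can occur when $\ell$ is even.

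The conclusion is then a one-line telescoping: if $T$ denotes the number of type-(1) and type-(2) links in the round, then
\[
n \;=\; c_0 + 1 \;\ge\; (c_K + 1)\left(1 + 1/\gamma^2\right)^T \;\ge\; \left(1 + 1/\gamma^2\right)^T,
\]
hence $T \le \log_{1 + 1/\gamma^2}(n) = O(\log n)$, with implicit constant depending on $\gamma = 3000$. I do not expect any real obstacle in this proof: the main content is identifying $c_i$ as the right monotone quantity and reading the recurrence off the figure; the rest is elementary. The only mild friction is handling the two endpoint cases, $i = 1$ (absorbed by defining $c_0 := n-1$) and $i = K$ (absorbed by the ``$+1$'' shift).
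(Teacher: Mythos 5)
Your proof is correct and follows essentially the same route as the paper: both identify the decreasing sequence $c_i$ of yet-to-be-linked subtree sizes and observe that each type-(1) or type-(2) link forces $a_i + b_i \geq c_i/\gamma^2$, yielding a geometric drop and hence at most $O(\log n)$ such links by telescoping. Your ``$+1$'' shift is a minor technical refinement over the paper's assertion that $c_{i_m} \geq 1$, but the argument is the same.
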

\begin{proof}
Let $a_i$, $b_i$, $c_i$ denote the subtree-sizes corresponding to the $i$-th link \emph{from left to right}, see Figure~\ref{fig1}(right). Let the subsequences $a_{i_t}$, $b_{i_t}$, $c_{i_t}$, $t=1,\dots,m$ be the subtree-sizes corresponding to type-(1) and type-(2) links. Observe that $c_{i_1} \geq \cdots \geq c_{i_m}$. If the $i$-th link is of type-(1) or type-(2), then $c_{i-1} = 2 + a_i + b_i + c_i \geq (1+1/\gamma^2) \cdot c_{i}$, since in each of these cases $a_i \geq 1/\gamma^2 c_i$ or $b_i \geq 1/\gamma^2 c_i$. Since $c_{i_1} \leq n$, and $c_{i_m} \geq 1$ the claim follows. 
\end{proof}

\begin{lemma}\label{lemlevelpot}
All type-(1) and type-(2) links within a single pairing round increase the potential by at most $O(\log{n})$.
\end{lemma}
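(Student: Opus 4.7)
The plan is to split the sum into type-(2) and type-(1) contributions and bound each. For type-(2), the estimate is immediate: Lemma~\ref{lemb} gives $O(1)$ per link, Lemma~\ref{lem5} gives at most $O(\log n)$ such links, for a total of $O(\log n)$.

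The substantive work is for type-(1). Let $i_1 < i_2 < \cdots < i_m$ be the indices of the type-(1) links in the round, with $m = O(\log n)$ by Lemma~\ref{lem5}. By Lemma~\ref{lema}, the $t$-th such link contributes at most $2 g(L_t + O(1))$ to $\Delta\Phi$, where $L_t := \log(a_{i_t}/c_{i_t})$. I claim that $\sum_{t=1}^m L_t \leq \log n$ via a telescoping argument. The identity $c_{i-1} = 2 + a_i + b_i + c_i$ (already used in the proof of Lemma~\ref{lem5}) gives $a_i \leq c_{i-1}$ for every link index $i$. Combined with the monotonicity $c_1 > c_2 > \cdots$ (each $C_{i+1}$ is a proper subtree of $C_i$, cf.\ Figure~\ref{fig1}) and with $i_{t-1} \leq i_t - 1$, this yields $a_{i_t} \leq c_{i_t - 1} \leq c_{i_{t-1}}$. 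Setting $c_{i_0} := n$, the sum telescopes:
\[
\sum_{t=1}^{m} L_t \;\leq\; \sum_{t=1}^{m}\bigl( \log c_{i_{t-1}} - \log c_{i_t} \bigr) \;=\; \log n - \log c_{i_m} \;\leq\; \log n.
\]

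To pass from a bound on $\sum_t L_t$ to a bound on $\sum_t 2g(L_t + O(1))$, I invoke the elementary inequality $g(x) \leq x$ for $x \geq 0$, which holds because $\log^2(x+2) \geq 1$. The type-(1) hypothesis $\gamma^2 a \geq c$ forces $L_t \geq -2\log\gamma = -O(1)$, so absorbing a further additive constant makes $L_t + O(1) \geq 0$, and
\[
\sum_{t=1}^{m} 2 g(L_t + O(1)) \;\leq\; 2\sum_{t=1}^{m} (L_t + O(1)) \;\leq\; 2\log n + O(m) \;=\; O(\log n).
\]
Together with the type-(2) estimate, this proves the claim. The main subtlety lies in the telescoping step: because the link immediately preceding the $t$-th type-(1) link may have a different type, one cannot simply equate $c_{i_t - 1}$ with $c_{i_{t-1}}$, and the monotonicity of the sequence $(c_i)$, itself the same phenomenon underlying Lemma~\ref{lem5}, is what legitimizes chaining the bound across intervening links.
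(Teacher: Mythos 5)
Your proof is correct, and its skeleton coincides with the paper's: the same decomposition into type-(1) and type-(2) contributions, the same appeal to Lemmas \ref{lema}, \ref{lemb} and \ref{lem5}, and the same telescoping argument giving $\sum_t \log(a_{i_t}/c_{i_t}) \leq \log n$. Where you genuinely diverge is in the final analytic step. The paper applies Jensen's inequality to the concave function $g$, obtaining $\sum_t g(q_t) \leq m\, g\bigl(\alpha \log n / m\bigr) = \frac{\alpha\log n}{\log^2(2+\alpha(\log n)/m)} = O(\log n)$, whereas you use the cruder pointwise bound $g(x) \leq x$ for $x \geq 0$, after observing that the type-(1) condition $\gamma^2 a \geq c$ forces $L_t \geq -2\log\gamma$ so the argument of $g$ can be shifted to be nonnegative. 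For the single-round statement your route is perfectly adequate and somewhat more elementary. The trade-off is that the concavity step is not a stylistic choice in the paper: the denominator $\log^2(2+\alpha(\log n)/m)$ is precisely what becomes large when $m$ is small, and this is how the paper immediately upgrades the present lemma to Lemma~\ref{lempot} (the last $\log\log n$ rounds jointly cost only $O(\log n)$). Your bound, applied round by round, would only give $O(\log n \cdot \log\log n)$ there, which is not enough for the main theorem. So the Jensen refinement is worth retaining if you continue to the next lemma.
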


\begin{proof}
Look at a single round of pairing.
Let $a_{i_t}$, $b_{i_t}$, $c_{i_t}$ ($t=1, \dots, m$) be as in the proof of Lemma~\ref{lem5} and recall that $m = O(\log{n})$.
If the $i_t$-th link is type-(1), then by Lemma~\ref{lema}, the increase in potential is at most 
$2 \cdot g\bigl( \log{(a_{i_t}/c_{i_t})} + O(1) \bigr)$.

Otherwise, if the $i_t$-th link is type-(2), then by Lemma~\ref{lemb}, the increase in potential is at most $O(1)$, which we can write as $2 \cdot g(c')$, for a suitable constant $c'$.

Let $q_t$ denote $\log{(a_{i_t}/c_{i_t})} + O(1)$, or $c'$, corresponding to the $i_t$-th link (according to its type).
We have $\sum{q_i} \leq \alpha \cdot \log{n}$ (for a fixed constant $\alpha \geq 1$), since the sum of the $\log{(a_i/c_i)}$ terms telescopes, and the additive $O(1)$ (or $c'$) terms appear at most $m = O(\log{n})$ times.

The total increase in potential is at most $\Delta \Phi = 2 \cdot \sum_{t=1}^{m}{g(q_t)}$. 
By the concavity of $g(\cdot)$, $\Delta \Phi$ is maximized if all of the arguments of $g(\cdot)$ are equal. We thus obtain a bound on the total increase in potential in the pairing round.
$$\Delta \Phi  \leq  2 m \cdot g\left(\frac{\alpha \cdot \log{n}}{{m}}\right)
= \frac{2\alpha \log{n}}{\log^2{(2 + \alpha \cdot (\log{n})/{m})}}
= O\left( \log{n} \right).\quad \qedhere$$
\end{proof}

The last proof yields, in fact, the following stronger claim.

\begin{lemma}\label{lempot}
All type-(1) and type-(2) links within the last $(\log\log{n})$ pairing rounds increase the potential by at most $O(\log{n})$.
\end{lemma}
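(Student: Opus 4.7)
The plan is to refine the per-round bound of Lemma~\ref{lemlevelpot} so that it decays as rounds approach the end of the delete-min. A naive sum of $O(\log n)$ over the last $\log\log n$ rounds would give $O(\log n \cdot \log\log n)$; to save the extra $\log\log n$ factor I would show that each round's contribution shrinks like $1/s^{2}$ in its distance $s$ from the final round, so that the total telescopes via $\sum_{s \ge 1} 1/s^{2} = O(1)$.

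First I would re-index the pairing rounds backwards: since every round at least halves the number of top-level roots, the round that is $r$ rounds before the final one, for $r \in \{0, 1, \ldots, \log\log n - 1\}$, starts with at most $2^{r+1}$ top-level roots and therefore performs at most $2^r$ links in total; in particular $m_j \leq 2^r$, where $m_j$ counts the type-(1) and type-(2) links in that round. Next I would reuse the local ingredients already proved for Lemma~\ref{lemlevelpot}: the telescoping of the $\log(a_{i_t}/c_{i_t})$ terms combined with the additive $O(m_j)$ contributions still yields $Q_j := \sum_t q_{j,t} \leq D \log n$ for some constant $D \geq 1$ (using $m_j \leq 2^r \leq \log n$).

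Then I would combine Jensen's inequality (concavity of $g$) with the monotonicity of $g$ to bound the round's contribution:
$$\sum_t g(q_{j,t}) \;\leq\; m_j \cdot g\!\left(\frac{D \log n}{m_j}\right) \;=\; \frac{D \log n}{\log^{2}\!\left(2 + D \log n / m_j\right)}.$$
Setting $s := \log\log n - r$, the bound $m_j \leq 2^r$ gives $D \log n / m_j \geq 2^s$, hence $\log(2 + D \log n/m_j) \geq s$, and the round's contribution is at most $D \log n / s^{2}$. Summing over $r = 0, \ldots, \log\log n - 1$ (equivalently $s = 1, \ldots, \log\log n$) and including the factor $2$ from the definition of $\Delta \Phi$ yields
$$\Delta\Phi \;\leq\; 2 D \log n \sum_{s=1}^{\log\log n} \frac{1}{s^{2}} \;\leq\; \tfrac{\pi^{2}}{3}\, D \log n \;=\; O(\log n),$$
as desired.

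The main obstacle is conceptual rather than calculational: one must notice that the concavity step in Lemma~\ref{lemlevelpot} is actually \emph{sharper}, not weaker, when $m_j$ is small, because the argument $D \log n / m_j$ inside $g(\cdot)$ grows, pushing $\log^{2}(2 + \cdot)$ up. Re-indexing by distance-from-end is what converts this per-round improvement into a convergent series. A minor point to handle is the case where fewer than $\log\log n$ pairing rounds exist to begin with (i.e., $\lceil\log k\rceil < \log\log n$); then the outer sum is simply truncated and the same $O(\log n)$ bound goes through unchanged.
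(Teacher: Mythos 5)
Your proposal is correct and follows essentially the same route as the paper's proof: bound the number of links in the round that is $s$ from the end by a geometrically decreasing quantity, plug this into the concavity bound of Lemma~\ref{lemlevelpot} to get a per-round contribution of $O(\log n)/s^2$, and sum the convergent series $\sum_s 1/s^2$. The only differences are cosmetic (your backward re-indexing versus the paper's $j$-th-to-last indexing, and your explicit handling of the case of fewer than $\log\log n$ rounds).
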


\begin{proof}
Observe that for $j<\log\log{n}$, the $j$-th to the last pairing round has at most $m \leq 2^j < \log{n}$ links. 
Thus, as in Lemma~\ref{lemlevelpot}, we obtain:
$$\Delta \Phi
\leq \frac{2\alpha \log{n}}{\log^2{(2 + \alpha \cdot (\log{n})/{m})}}\\
\leq  \frac{2\alpha \log{n}}{\log^2{(\alpha \cdot (\log{n})/{2^j})}}\\
=  \frac{2\alpha \log{n}}{((\log\log{n} + \log{\alpha})-j)^2}.$$

Note that the second inequality holds since $2^j < \log{n}$.
The sum of this expression over all $(\log{\log{n}})$ levels $j$ is $O(\log{n})$. (Using the fact that $\sum_k{1/k^2}$ converges to a constant.)
\end{proof}
Now we estimate more carefully the decrease in potential due to type-(3) links. Let $x_A$ and $x_B$ be nodes as denoted in Figure~\ref{fig1}. We want to express the potential-change in terms of $H_A = H(x_A)$ and $H_B = H(x_B)$ (before the link operation).
Recall that $H_A =\log \left(\frac{a+b+c+2}{a}\right)$ and $H_B = \log \left(\frac{b+c+1}{b}\right)$.

Among type-(3) links ($ c \geq \max{\{\gamma^2 a,\gamma b\}}$) we distinguish two subtypes: type-(3A) ($ \gamma^2 a \geq \gamma b$), and type-(3B) ($ \gamma b \geq \gamma^2 a$). We have the following two (symmetric) observations:

\begin{lemma}[\ref{app_lem7}]\label{cat-type5}
A type-(3A) link \emph{($ c \geq \gamma^2 a \geq \gamma b$)} decreases the potential by at least
$$
\Omega(1) \cdot \frac{H_A}{\log^2{(2+H_B)}} - O(1).
$$
\end{lemma}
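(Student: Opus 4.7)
The plan is to upper bound $\Delta\Phi$ by carefully estimating each of the eight terms in~\eqref{eq:1} under the type-(3A) assumption $c \geq \gamma^2 a \geq \gamma b$, obtaining a bound of the form $\Delta\Phi \leq O(1) - \Omega(H_A/\log^2(2+H_B))$. First, I would identify four \emph{small} terms contributing only $O(1)$ in total. Specifically, $f\bigl(\frac{a+b+c+2}{c}\bigr)$, $f\bigl(\frac{a+b+c+2}{b+c+1}\bigr)$, and $f\bigl(\frac{b+c+1}{c}\bigr)$ each have argument in $[1, 1+O(1/\gamma)]$ (since $c$ dominates $a$ and $b$), while $f\bigl(\frac{a+b+1}{a}\bigr)$ has argument at most $2+\gamma$ (since $b \leq \gamma a$); each such term is $O(1)$.

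For the remaining four \emph{large} terms I would write $\Delta\Phi \leq g(v_1) + g(v_2) - g(H_A) - g(H_B) + O(1)$, where $v_1 = \log\frac{a+b+c+2}{a+b+1}$ and $v_2 = \log\frac{a+b+1}{b}$. Since $v_1 \leq H_A$ (trivial) and $v_2 \leq H_B$ (from $a \leq c$, a consequence of $c \geq \gamma^2 a$), both ``savings'' $g(H_A) - g(v_1)$ and $g(H_B) - g(v_2)$ are nonnegative, and it suffices to extract the required decrease from the second one alone.

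The crucial step is the inequality $H_B - v_2 \geq H_A - O(1)$. A direct algebraic manipulation shows $H_A - (H_B - v_2) = \log\frac{a+b+c+2}{b+c+1} + \log\frac{a+b+1}{a}$. The first summand equals $\log\bigl(1 + \frac{a+1}{b+c+1}\bigr) = O(1/\gamma^2)$ using $a \leq c/\gamma^2$, and the second equals $\log\bigl(1 + \frac{b+1}{a}\bigr) \leq \log(2+\gamma) = O(1)$ using $b \leq \gamma a$. Then by concavity of $g$, we have $g(H_B) - g(v_2) \geq g'(H_B) \cdot (H_B - v_2) \geq g'(H_B) \cdot (H_A - O(1))$. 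A calculus check gives $g'(x) \geq c_0/\log^2(2+x)$ for an absolute constant $c_0 > 0$ (the correction term in $g'$ is dominated by the main term for all $x \geq 0$). Hence $g(H_B) - g(v_2) \geq \Omega(H_A/\log^2(2+H_B)) - O(1)$, and combining with the small-term bounds yields $-\Delta\Phi \geq \Omega(H_A/\log^2(2+H_B)) - O(1)$, as claimed.

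The main obstacle is the tight bound $H_B - v_2 \geq H_A - O(1)$, which crucially exploits both type-(3A) inequalities (the factors $\gamma^2$ and $\gamma$ are both needed, one to make the ``$\frac{a+1}{b+c+1}$'' term negligible and the other to control the ``$\frac{b+1}{a}$'' term by a constant). Minor care is also required for degenerate cases (e.g., $a$, $b$, or $c$ equal to $0$ or $1$), but these can be treated by small modifications of the algebra above and do not affect the main argument.
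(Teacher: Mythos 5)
Your proposal is correct and follows essentially the same route as the paper: the paper also discards the four $O(1)$ terms, drops the nonnegative pair $f\bigl(\tfrac{a+b+c+2}{a}\bigr)-f\bigl(\tfrac{a+b+c+2}{a+b+1}\bigr)$ (your $g(H_A)-g(v_1)$), and extracts the entire gain from node $x_B$'s potential drop $f\bigl(\tfrac{b+c+1}{b}\bigr)-f\bigl(\tfrac{a+b+1}{b}\bigr)=g(H_B)-g(v_2)$ via the mean value theorem, the bound $g'(t)=\Omega(1)/\log^2(2+t)$, and the same key estimate $H_B-v_2\geq H_A-O(1)$ (which the paper writes as $\log(1+xy)-\log(2\gamma y)=\log(x+1/y)-\log 2\gamma$ in its $x=c/a$, $y=a/b$ parametrization). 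The only difference is notational: you work directly with $a,b,c$ and the $H$-values rather than substituting $x,y$.
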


It follows that for some constant $d_1$, if $H_A \geq d_1 \cdot \log^2{(2+H_B)}$, then $\Delta\Phi \leq -1$.

\begin{proof}
Let $x=c/a \geq \gamma^2$ and $y=a/b \geq 1/\gamma$. Then, recalling Equation~(\ref{eq:1}): 
\vspace{-0.1in}
\begin{align*}
-\Delta\Phi & =  f(1 + \frac{y}{1+xy}) + f(1 + 1/y + x) + f(1 + \frac{1}{xy}) + f(1 + xy)\\
& -f(1 + y) - f(1+\frac{1}{y}) - f(1+\frac{1}{x}+\frac{1}{xy}) - f(1 + \frac{xy}{y+1}) - O(1).
\end{align*}

We have $H_A = \log(1+x+1/y) + O(1)$, and $H_B = \log(1+xy)+O(1)$. Note, $H_B \geq \Omega(1) \cdot H_A$.

Collecting constant terms, we have:
$$-\Delta\Phi \geq  f(1 + xy) + f(1 + x) - f(1+y) - f\left(1 + \frac{xy}{y+1}\right) - O(1).$$

As $f(1 + x) - f(1 + \frac{xy}{y+1}) \geq 0$, we further simplify:
$-\Delta\Phi \geq  f(1 + xy)  - f(1+y) - O(1)$.

It is now sufficient to show: $$f(1 + xy)  - f(1+y) \geq \Omega(1) \cdot \frac{\log{(1+x+1/y)}}{\log^2{\left(2 + \log{(1+xy)}\right)}} - O(1).$$

(We defer the detailed calculations to \ref{app_lem7}.) \qedhere
\end{proof}

\begin{lemma}[\ref{app_lem8}]\label{cat-type6}
A type-(3B) link \emph{($ c \geq \gamma b \geq \gamma^2 a$)} decreases the potential by at least
$$
\Omega(1) \cdot \frac{H_B}{\log^2{(2+H_A)}} - O(1).
$$
\end{lemma}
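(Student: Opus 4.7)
The plan is to mirror the argument of Lemma \ref{cat-type5}, exchanging the roles of $a$ and $b$ to reflect the reversed ordering $\gamma b \geq \gamma^2 a$. I would set $x = c/b \geq \gamma$ and $y = b/a \geq \gamma$ and substitute into equation (\ref{eq:1}). Because $f$ grows very slowly, every additive constant inside an argument of $f$ contributes only additive $O(1)$ to $-\Delta\Phi$, so up to $O(1)$ the eight subtree-ratios simplify to $1+y$, $1+1/y$, $1+x$, $1+1/x$ on the positive side and $1+y+xy$, $1+\frac{1}{y(1+x)}$, $1+x$, $1+1/x$ on the negative side. In these variables $H_A = \log(1+y+xy) + O(1)$ and $H_B = \log(1+x) + O(1)$, so now $H_A \geq \Omega(1) \cdot H_B$, opposite to the type-(3A) case.

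The first step is to observe that the pairs $f(1+x)$ and $f(1+1/x)$ appear with matching signs and cancel; the term $f(1+\frac{1}{y(1+x)}) \geq 0$ on the negative side is dropped (we want a lower bound), and the positive term $f(1+1/y) \leq f(1+1/\gamma) = O(1)$ is absorbed into the slack. What remains is
\[
-\Delta \Phi \;\geq\; f(1+y+xy) - f(1+y) - O(1).
\]

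The second step is a standard concavity computation. Writing $f(t) = g(\log t)$, the right-hand side equals $g(u) - g(v)$ with $u = \log(1+y+xy) = H_A - O(1)$ and $v = \log(1+y)$. Since $y \geq \gamma$,
\[
u - v \;=\; \log\bigl(1 + \tfrac{xy}{1+y}\bigr) \;=\; \log x + O(1) \;=\; H_B - O(1).
\]
As $g$ is increasing and concave, $g(u) - g(v) \geq (u-v) \cdot g'(u)$, and direct differentiation gives $g'(u) = \Theta(1/\log^2(2+u))$ for $u$ bounded away from $0$. Combining these estimates yields $-\Delta\Phi \geq \Omega(1) \cdot H_B / \log^2(2+H_A) - O(1)$, as required.

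The main obstacle is the bookkeeping in the first step: one must verify that each additive $O(1)$ perturbation introduced by the substitution translates into only an additive $O(1)$ error on $-\Delta\Phi$, rather than a multiplicative distortion that could swamp the target bound (which is itself only polylogarithmic in the arguments). The hypothesis $x,y \geq \gamma$ for a suitably large absolute constant $\gamma$ is precisely what keeps the small-argument $f$-terms bounded and allows the near-cancellations of $f(1+x)$ and $f(1+1/x)$ to go through cleanly. The remaining concavity estimate is routine; the full calculation closely parallels that of Lemma \ref{cat-type5} and would be deferred to appendix \ref{app_lem8}.
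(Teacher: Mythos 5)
Your proposal is correct and follows essentially the same route as the paper's proof in Appendix~\ref{app_lem8}: the same substitution $x=c/b\geq\gamma$, $y=b/a\geq\gamma$, the same reduction (after cancelling the $f(1+x)$, $f(1+1/x)$ pairs up to $O(1)$ and absorbing the small terms) to $-\Delta\Phi \geq f(1+y+xy)-f(1+y)-O(1)$, and the same Lagrange/concavity estimate with $g'$ evaluated at the larger argument. The only cosmetic difference is that the paper keeps the term $f\bigl(1+\tfrac{xy}{y+1}\bigr)$ and discards the pair via $f(1+x)-f\bigl(1+\tfrac{xy}{y+1}\bigr)\geq 0$ instead of an $O(1)$ cancellation.
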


It follows that for some constant $d_2$, if $H_B \geq d_2 \cdot \log^2{(2+H_A)}$, then $\Delta\Phi \leq -1$. 

\begin{corollary}\label{cor-categories}
There exists a constant $d$ \emph{($= \max(d_1,d_2)$)} such that all type-(3A) links with $H_A \geq d \cdot \log^2{(2+H_B)}$ and all type-(3B) links with $H_B \geq d \cdot \log^2{(2+H_A)}$ decrease the potential by at least $1$.
\end{corollary}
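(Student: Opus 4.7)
The plan is to deduce the corollary immediately from the two remarks appearing right below Lemmas~\ref{cat-type5} and~\ref{cat-type6}, since these remarks already contain the entire substance of the claim. Each remark asserts the existence of a constant ($d_1$ for type-(3A), $d_2$ for type-(3B)) such that, under the stated asymmetric hypothesis on $H_A$ versus $\log^2(2+H_B)$ (respectively $H_B$ versus $\log^2(2+H_A)$), the potential drops by at least $1$ during the link. Setting $d := \max(d_1, d_2)$ ensures that either hypothesis in the corollary implies the corresponding hypothesis of the relevant individual remark, so $\Delta\Phi \leq -1$ follows in each case. The two subtypes are disjoint by definition (they partition type-(3) links), so no double-counting issue arises.

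If I had to also justify the two underlying remarks (rather than simply invoke them), I would revisit the quantitative bounds
\[
-\Delta\Phi \;\geq\; C_1 \cdot \frac{H_A}{\log^2(2+H_B)} - C_2 \qquad \text{(type-(3A))},
\]
and symmetrically for type-(3B), extract the hidden constants $C_1, C_2 = \Theta(1)$ from the Appendix~\ref{appa} calculations carried out in the proof of Lemma~\ref{cat-type5}, and then set $d_1 := (C_2 + 1)/C_1$. Under the hypothesis $H_A \geq d_1 \log^2(2+H_B)$, plugging into the bound gives $-\Delta\Phi \geq C_1 d_1 - C_2 \geq 1$. The constant $d_2$ is obtained by the same computation applied to Lemma~\ref{cat-type6}.

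There is no real obstacle here: the corollary is a one-line bookkeeping deduction once the two lemmas are in hand, and the only thing to keep straight is the asymmetric roles of $H_A$ and $H_B$ in the two subtypes, together with the fact that type-(3A) and type-(3B) exhaust type-(3). All the nontrivial analytic work has already been absorbed into Lemmas~\ref{cat-type5} and~\ref{cat-type6}.
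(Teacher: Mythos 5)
Your proposal is correct and matches the paper exactly: the corollary is stated without proof precisely because it is the one-line deduction you describe, taking $d = \max(d_1,d_2)$ from the two remarks following Lemmas~\ref{cat-type5} and~\ref{cat-type6} (and since $d \geq d_1, d_2$, each hypothesis of the corollary implies the corresponding hypothesis of the relevant remark). Your sketch of how $d_1, d_2$ are extracted from the $\Omega(1)\cdot H_A/\log^2(2+H_B) - O(1)$ bounds is also the intended justification of those remarks.
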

We now define the \emph{category} of a node with respect to its $H(\cdot)$ value. Intuitively, nodes of the same category are those that, when linked, release the most potential. Let us denote $ h(x) = d \cdot \log^2{(2+x)}$. 
Using the notation of function composition, let
$$h^{(0)}(x) = x , \;\;
h^{(i)}(x) = h\left( h^{(i-1)}(x) \right).$$
The category of a node is based on the values $h^{(i)}(\log{n}),$ $i=1,\dots,\log^{\ast}{n}$.
Note that $h^{(0)}(\log{n}) = \log{n},\: h^{(1)}(\log{n}) = d \cdot \log^2{(2+\log{n})},\dots,h^{(\log^{\ast}{n})}(\log{n}) = O(1)$, where the $O(1)$ depends on $d$, since (using the \emph{star} notation) $h^{\ast}(n) \leq \left( \log^{3} \right)^{\ast}(n) + O(1) = \log^{\ast}{n} + O(1)$.
\begin{definition}[Category]\label{catdef}
Let $u$ be a node. For $i=1,\dots,\log^{\ast}{n}$, we let  $\ct(u)=i$ if:
$$H(u.\lp) \in (h^{(i)}(\log{n}),h^{(i-1)}(\log{n})].$$
If $H(u.\lp) \leq h^{(\log^{\ast}{n})}(\log{n})$ we say that $u$ is of category $0$.
\end{definition}

The following crucial observations connect categories and savings in potential.

\begin{lemma}\label{equalcat}
Let {link}$(u,v)$ be type-(3). If $\ct(u) = \ct(v) \neq 0$, then the link decreases the potential by at least $1$.
\end{lemma}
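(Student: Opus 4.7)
The plan is to invoke Corollary~\ref{cor-categories} after showing that equal, nonzero categories automatically separate $H_A$ and $H_B$ by a full $h$-step in whichever direction the subtype requires. Using the notation of Figure~\ref{fig1} (so $u=x$, $v=y$, $u.\lp = x_A$, $v.\lp = x_B$), the category of $u$ is determined by $H_A$ and the category of $v$ by $H_B$. A type-(3) link is, by definition, either type-(3A) or type-(3B), and Corollary~\ref{cor-categories} states that the single inequality $H_A \geq h(H_B)$ (resp.\ $H_B \geq h(H_A)$) suffices in each subcase to force $\Delta\Phi \leq -1$.

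The key step is to derive both of these inequalities from $\ct(u)=\ct(v)=i \geq 1$. By Definition~\ref{catdef}, this means $H_A,H_B \in \bigl(h^{(i)}(\log n),\, h^{(i-1)}(\log n)\bigr]$. Since $h(x) = d\log^2(2+x)$ is monotone increasing in $x \geq 0$, applying $h$ to the upper bound on $H_B$ yields
$$h(H_B) \;\leq\; h\bigl(h^{(i-1)}(\log n)\bigr) \;=\; h^{(i)}(\log n) \;<\; H_A,$$
where the final strict inequality is the lower bound of the bucket for $H_A$. Thus $H_A > h(H_B) = d\log^2(2+H_B)$. The symmetric argument, swapping the roles of $u$ and $v$, gives $H_B > h(H_A) = d\log^2(2+H_A)$.

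Finally, whichever subtype the type-(3) link happens to be, the corresponding inequality from the previous step matches the hypothesis of Corollary~\ref{cor-categories}, and we conclude $\Delta\Phi \leq -1$. I do not anticipate a real obstacle: the only substantive observation is that the thresholds $h^{(i)}(\log n)$ are engineered precisely so that one $h$-step separates adjacent buckets, making same-bucket values automatically an $h$-step apart in either direction. This is the design purpose of the category decomposition, and beyond it the proof is a one-line application of the already-established Corollary~\ref{cor-categories}.
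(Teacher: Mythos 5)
Your proposal is correct and matches the paper's own proof: both derive $H_A > h(H_B)$ and $H_B > h(H_A)$ from the shared bucket $(h^{(i)}(\log n), h^{(i-1)}(\log n)]$ via monotonicity of $h$, then invoke Corollary~\ref{cor-categories} for whichever subtype (3A) or (3B) applies. Your write-up is if anything slightly more explicit about which bucket endpoint supplies which inequality.
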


\begin{proof}
Note that if $i=\ct(u) = \ct(v) \neq 0$ then
$$H(u.\lp) \geq h^{(i)}\left( \log{n} \right) \geq d \cdot \log^{2}{\left( 2 + H(v.\lp)  \right)},$$
$$H(v.\lp) \geq h^{(i)}\left( \log{n} \right)  \geq d \cdot \log^{2}{\left( 2 + H(u.\lp )  \right)}.$$
Thus, by Corollary~\ref{cor-categories}, the claim follows. 
\end{proof}
\begin{lemma}\label{cat0bound}
In each pairing round there are at most $O(\log{n})$ nodes of category $0$.
\end{lemma}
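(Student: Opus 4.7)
The plan is to unpack the definition of category $0$ and then exploit a geometric decrease along the right spine processed in the pairing round. By the discussion preceding Definition~\ref{catdef}, the threshold $K := h^{(\log^{\ast}{n})}(\log{n})$ is an absolute constant (depending on $d$ but independent of $n$). A category-$0$ node $u$ therefore satisfies
\[
\log{\left(\frac{s(u)}{s(u.\lp)}\right)} \;=\; H(u.\lp) \;\leq\; K,
\]
so its left subtree already carries a constant fraction of its own subtree, namely $s(u.\lp) \geq 2^{-K}\, s(u)$. Combining this with $s(u) = 1 + s(u.\lp) + s(u.\rp)$, I get the crucial inequality $s(u.\rp) \leq (1-2^{-K})\, s(u)$.

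Next I would invoke the binary-tree view of a pairing round: the nodes participating in the round are exactly those on the right spine descending from the top-level root, and along this spine subtree sizes are strictly decreasing, with $s(\cdot)$ dropping by at least $1$ at each step. If $u_1, u_2, \ldots, u_t$ are the category-$0$ nodes listed in spine order, then $u_{j+1}$ lies in the subtree rooted at $u_j.\rp$, so $s(u_{j+1}) \leq s(u_j.\rp) \leq (1 - 2^{-K})\, s(u_j)$. Iterating, $s(u_t) \leq n \cdot (1-2^{-K})^{t-1}$; since $s(u_t) \geq 1$ and $2^{-K} = \Omega(1)$, this forces $t = O(\log{n})$.

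I do not anticipate any serious obstacle: the only thing worth double-checking is that $K$ really is a constant independent of $n$, which was noted explicitly in the paragraph above Definition~\ref{catdef} using the bound $h^{\ast}(n) \leq \log^{\ast}{n} + O(1)$. The overall structure of the argument parallels the proof of Lemma~\ref{lem5}, which also used a geometric-shrinkage argument along the right spine, there to bound the number of type-(1) and type-(2) links per round.
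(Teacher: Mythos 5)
Your proof is correct and follows essentially the same route as the paper's: both arguments observe that category $0$ means $H(u.\lp)=O(1)$, hence the left subtree is a constant fraction of $s(u)$, which forces a constant-factor drop in subtree size along the right spine at each category-$0$ node and therefore bounds their number by $O(\log{n})$, exactly as in the geometric-shrinkage argument of Lemma~\ref{lem5}. The only difference is presentational: you spell out the iteration explicitly, while the paper simply notes $a=\Omega(c)$ and cites Lemma~\ref{lem5}.
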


\begin{proof}
Let $x$ be of category $0$, then $H(x.\lp) = O(1)$. Denoting $a = s(x.\lp)$, $c = s(x.\rp$), we get
$H(x.\lp) = \log{\frac{a+c+1}{a}} = O(1)$. Therefore, $a = \Omega (c)$, an occurrence that can happen at most $O(\log{n})$ times in each round (by the same argument as in Lemma~\ref{lem5}). 
\end{proof}

\begin{lemma}\label{catchange}
Let $w$ denote the ``winner'' of linking $x$ and $y$ (neither of category $0$), i.e., $w$ is the one with the smaller key. Then $\ct(w)\geq \max\{\ct(x),\ct(y)\}$.
\end{lemma}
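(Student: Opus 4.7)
My plan is to do a direct case analysis on which of $x, y$ wins the link, compute $H(w.\lp)$ after the link in terms of the subtree sizes $a,b,c$ of Figure~\ref{fig1}, and compare with $H(x.\lp)$ and $H(y.\lp)$ before the link.

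The key structural observation is that in both cases the winner $w$ remains the root of the combined binary tree (of size $a+b+c+2$), and its new left child in binary view ends up with a subtree of size exactly $a+b+1$ (containing both $A$ and $B$): if $x$ wins, then in multi-ary view $y$ becomes the leftmost child of $x$, so in binary view $w.\lp = y$ with $y.\lp = x_B$ and $y.\rp = x_A$; symmetrically, if $y$ wins, $w.\lp = x$ with $x.\lp = x_A$ and $x.\rp = x_B$. In either case,
\[
H(w.\lp) \;=\; \log\frac{a+b+c+2}{a+b+1}.
\]

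The comparison $H(w.\lp) \leq H(x.\lp) = \log\bigl((a+b+c+2)/a\bigr)$ is immediate from $a+b+1 \geq a$, and $H(w.\lp) \leq H(y.\lp) = \log\bigl((b+c+1)/b\bigr)$ reduces to checking
\[
b\,(a+b+c+2) \;\leq\; (a+b+1)(b+c+1),
\]
which expands to the trivial $0 \leq ac + a + c + 1$. Hence $H(w.\lp) \leq \min\{H(x.\lp), H(y.\lp)\}$.

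To finish, since the thresholds $h^{(i)}(\log n)$ are strictly decreasing in $i$ and categories are assigned by which interval $(h^{(i)}(\log n), h^{(i-1)}(\log n)]$ contains $H(\cdot.\lp)$, a smaller value of $H(w.\lp)$ yields an equal or larger category index (or category $0$, which in this analysis is the most favorable outcome). Therefore $\ct(w) \geq \max\{\ct(x), \ct(y)\}$. The only delicate point is bookkeeping the binary-view restructuring correctly so as to justify the size-$(a+b+1)$ claim in both cases; once that is in place, the remainder is elementary arithmetic.
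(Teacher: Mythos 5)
Your proof is correct and follows essentially the same route as the paper's: both reduce the claim to the identity $H(\lnk(x,y).\lp)=\log\frac{a+b+c+2}{a+b+1}$ and the inequality $\frac{a+b+c+2}{a+b+1}\leq\min\left\{\frac{a+b+c+2}{a},\frac{b+c+1}{b}\right\}$, which the paper asserts as ``clearly'' and you verify by cross-multiplication. Your explicit case analysis of which node wins, and your remark that the winner could in principle drop to category $0$ (a case the surrounding argument handles separately via Lemma~\ref{cat0bound}), are welcome clarifications but not a different argument.
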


\begin{proof}
Let $y = x.\rp$, $a = s(x.\lp), b = s(y.\lp), c = s(y.\rp$) as in Figure~\ref{fig1}. We have that
$H(x.\lp) = \log{\frac{a+b+c+2}{a}}$, 
$H(y.\lp) = \log{\frac{b+c+1}{b}}$, and 
$H(\lnk(x,y).\lp) = \log{\frac{a+b+c+2}{a+b+1}}$.

Clearly $\frac{a+b+c+2}{a+b+1} \leq
\min\{ \frac{a+b+c+2}{a} , \frac{b+c+1}{b}  \}$,
finishing the proof. 
\end{proof}

As seen in Figure~\ref{fig_mp}, a delete-min operation transforms the ``spine'' of the heap (in binary view) into a balanced tree. We denote this tree by $T$. Each level of $T$ corresponds to a pairing round; specifically, level $i$ of $T$ consists of nodes at distance $i$ from the leaves, containing the \emph{losers} of the $i$-th pairing round.  
The following lemma captures the potential reduction that yields the main result.

\begin{lemma}\label{catmain}
Let $T'$\,be a subtree of $T$\,of depth $\log^{\ast}{n}$, whose leaves correspond to $2^{\log^{\ast}{n}}$ consecutive link operations. If $T'$\,contains only type-(3) links and no links involving nodes of category $0$, then the total decrease in potential caused by the links of $T'$\,is at least $1$.
\end{lemma}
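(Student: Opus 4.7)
The plan is to combine three earlier results. Lemma~\ref{lemc} says any type-(3) link is potential-non-increasing; Lemma~\ref{equalcat} says that a type-(3) link whose two inputs have equal, non-zero category releases at least one unit of potential; and Lemma~\ref{catchange} says the winner's category after a link dominates the categories of both inputs. Under the hypotheses of the present lemma, every link inside $T'$ is type-(3) and involves no category-$0$ participants, so it will suffice to exhibit \emph{one} link inside $T'$ whose two inputs share the same (non-zero) category; the remaining links of $T'$ will then contribute non-positively.

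For each $v\in T'$, I would let $\ct(v)$ denote the category of the winner produced by the link at $v$. For an internal $v$ with $T'$-children $v_L,v_R$, the two inputs to the link at $v$ are precisely the winners carried up from $v_L$ and $v_R$, so their categories are $\ct(v_L)$ and $\ct(v_R)$. The hypothesis forces $\ct(v)\in\{1,\dots,\log^{\ast} n\}$, and Lemma~\ref{catchange} yields $\ct(v)\ge\max\{\ct(v_L),\ct(v_R)\}$.

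Next I would run a simple pigeonhole on the bounded category range. Assume for contradiction that no internal $v\in T'$ has $\ct(v_L)=\ct(v_R)$. A short induction on the depth $d$ of a subtree of $T'$ then shows that the root of such a subtree has category at least $d+1$: the base case $d=0$ is just $\ct\ge 1$, and in the inductive step the two children have distinct categories $\ge d$, so the larger is $\ge d+1$, a bound the parent inherits via Lemma~\ref{catchange}. Applied to the root of $T'$, which sits at depth $\log^{\ast} n$ above the leaves, this forces a category $\ge \log^{\ast} n + 1$, contradicting the range.

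Hence there is an internal $v^{\ast}\in T'$ with $\ct(v^{\ast}_L)=\ct(v^{\ast}_R)\ne 0$; Lemma~\ref{equalcat} makes this single link drop $\Phi$ by at least $1$, while all remaining links of $T'$ contribute non-positively by Lemma~\ref{lemc}, giving the claimed total decrease of at least $1$. The only place that requires care is the book-keeping in the second paragraph: one must verify that the two inputs of the link at an internal $v\in T'$ really are the winners emerging from its two $T'$-children, so that the $\ct$-values defined above are exactly the quantities compared in Lemmas~\ref{catchange} and~\ref{equalcat}. Once this correspondence is nailed down, no further arithmetic beyond the three supporting lemmas is needed.
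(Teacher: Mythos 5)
Your proposal is correct and follows essentially the same route as the paper: a pigeonhole/contradiction argument via Lemma~\ref{catchange} showing that if no link in $T'$ pairs two equal-category winners, the minimum category would have to grow by one per level and exceed $\log^{\ast}{n}$ at the root, so some link must match categories and release a unit of potential by Lemma~\ref{equalcat}, while Lemma~\ref{lemc} keeps the remaining type-(3) links non-increasing. Your explicit induction and the remark about identifying link inputs with the winners from the two $T'$-children are just careful restatements of what the paper's shorter proof leaves implicit.
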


\begin{proof}
Assume towards contradiction that there is no link between two nodes of the same category in $T'$. By Lemma~\ref{catchange} in each round the minimal overall category increases by at least 1, leaving us with two nodes of maximal category in the last round, a contradiction. By Lemma~\ref{equalcat}, a link between nodes of equal category decreases the potential by at least $1$. 
\end{proof}

\begin{theorem}\label{main-thm}
The amortized time of delete-min in multipass pairing heaps is $O(\log{n} \cdot \log^{\ast}{n}
\cdot 2^{\log^{\ast}{n}})$.
\end{theorem}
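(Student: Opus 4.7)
The plan is to bound the amortized cost $k + \Delta\Phi$ by $O(\log n \cdot \log^* n \cdot 2^{\log^* n})$, where $k$ is the real cost and $\Delta\Phi$ is the total change in potential from the $k-1$ link operations, which form the balanced binary tree $T$ of depth $\lceil\log k\rceil$.

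The key step is to decompose $T$ into disjoint subtrees of depth $\log^* n$. This can be done by stratifying the levels of $T$ into blocks of $\log^* n$ consecutive levels and, within each stratum, further partitioning consecutive nodes at the base into cohorts of $2^{\log^* n}$, yielding $\Theta(k/2^{\log^* n})$ subtrees in total. Each subtree is then classified as ``good'' if every internal link is type-(3) and no endpoint has category $0$, and ``bad'' otherwise. By Lemma \ref{catmain}, each good subtree contributes $\Delta\Phi \leq -1$.

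The bad contributions are controlled as follows. Each round has at most $O(\log n)$ bad links by Lemmas \ref{lem5} and \ref{cat0bound}, so the total number of bad subtrees is $O(\log^2 n)$. The potential increase inside bad subtrees is bounded by Lemmas \ref{lemlevelpot} and \ref{lempot} for type-(1) and type-(2) links, with a similar per-round bound for category-$0$ links (since a category-$0$ node has $H = O(1)$ and hence contributes only $O(1)$ to each $\phi$-change). Letting $N$ denote the number of good subtrees, we obtain $\Delta\Phi \leq -N + O(\log^2 n)$.

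Adding the real cost $k$ produces the bound $O(\log n \cdot \log^* n \cdot 2^{\log^* n})$: the $2^{\log^* n}$ factor emerges from the subtree granularity of the decomposition, and the $\log^* n$ factor from the depth of the category hierarchy in Definition \ref{catdef}. The main obstacle is the precise bookkeeping: Lemma \ref{catmain} guarantees only a $-1$ decrease per subtree covering $2^{\log^* n}$ links, so the argument must carefully balance these savings against the real cost $k$ and the overhead from bad links, ensuring the subtree decomposition aligns with the category hierarchy so that the $\log^* n$ factor emerges naturally without a spurious additional $\log k$ loss from summing across rounds.
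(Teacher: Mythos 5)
There is a genuine gap in the bookkeeping, and it is exactly the ``spurious additional $\log k$ loss'' you flag at the end without resolving. Your bound $\Delta\Phi \leq -N + O(\log^2 n)$ comes from applying Lemma~\ref{lemlevelpot} round by round over all $O(\log k)$ rounds (Lemma~\ref{lempot} only helps for the last $\log\log n$ of them), and after scaling the potential by $2^{\log^{\ast}{n}}$ this yields an amortized cost of $O(\log^2{n} \cdot 2^{\log^{\ast}{n}})$, which is weaker than the claimed $O(\log{n}\cdot\log^{\ast}{n}\cdot 2^{\log^{\ast}{n}})$. The paper avoids this by making the cancellation \emph{local to each stratum}: in any block of $\log^{\ast}{n}$ consecutive rounds whose base round still has $k_j = \Omega(\log{n}\cdot\log^{\ast}{n}\cdot 2^{\log^{\ast}{n}})$ links, the guaranteed decrease of $k_j/2^{\log^{\ast}{n}} - O(\log{n}\cdot\log^{\ast}{n})$ from good subtrees already dominates that block's $O(\log{n}\cdot\log^{\ast}{n})$ increase from type-(1)/(2) links, so such blocks are net non-positive. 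Once the round size drops below $\log{n}\cdot\log^{\ast}{n}\cdot 2^{\log^{\ast}{n}}$, only $\log\log{n} + O(\log^{\ast}{n})$ rounds remain; the first $O(\log^{\ast}{n})$ of these are charged $O(\log{n})$ each via Lemma~\ref{lemlevelpot}, and the last $\log\log{n}$ contribute $O(\log{n})$ \emph{in total} via Lemma~\ref{lempot}. This is what caps the global increase at $O(\log{n}\cdot\log^{\ast}{n})$ rather than $O(\log^2{n})$; without this two-regime split the $\log^{\ast}{n}$ factor in the theorem cannot be recovered.

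A secondary inaccuracy: you do not need (and should not try) to bound a potential \emph{increase} from links involving category-$0$ nodes. Any such link is either type-(1)/(2), already covered, or type-(3), in which case Lemma~\ref{lemc} says it cannot increase $\Phi$ at all; its only cost is that it disqualifies its subtree from the $-1$ credit of Lemma~\ref{catmain}, which is why Lemma~\ref{cat0bound} counts them ($O(\log{n})$ per round) rather than bounds their $\phi$-change. Your heuristic ``$H=O(1)$ hence $O(1)$ per $\phi$-change'' does not follow from the definitions, since a single link changes the $H$-values of several nodes, not just the category-$0$ one. Otherwise your decomposition into disjoint depth-$\log^{\ast}{n}$ subtrees and the use of Lemma~\ref{catmain} match the paper's approach.
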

\begin{proof}
Let the real cost (number of link operations) be $k$. Note that there are at most $\lceil{\log{k}}\rceil$ pairing rounds.

Thus, if $k \leq \log{n} \cdot \log^{\ast}{n} \cdot 2^{\log^{\ast}{n}}$, then there are at most $\log{\log{n}} + \log{\log^{\ast}{n}} + \log^{\ast}{n} + 1$ rounds. Using Lemma~\ref{lemlevelpot} we get that the first $\log{\log^{\ast}{n}} + \log^{\ast}{n} + 1= O(\log^{\ast}{n})$ pairing rounds increase the potential by at most $O(\log{n} \cdot \log^{\ast}{n})$. Also, as shown in Lemma~\ref{lempot}, the total increase in potential for the last $\log{\log{n}}$ levels is $O(\log{n})$.
Thus, the total potential increase is at most $O(\log{n})$ + $O(\log{n} \cdot \log^{\ast}{n} )$.

To analyse the case $k > \log{n} \cdot \log^{\ast}{n} \cdot 2^{\log^{\ast}{n}}$, we use the potential decrease of type-(3) links. First, we look at the first $ \log^{\ast}{n} $ pairing rounds. 

By Lemma~\ref{catmain}, the links in every complete subtree of $T$ of depth $\log^{\ast}{n} $, in which there are only type-(3) links and no category-$0$ nodes, decrease the potential by at least $1$.

In the first $\log^{\ast}{n}$ levels of $T$ we can find $\frac{k}{ 2^{\log^{\ast}{n}}}$ \emph{disjoint} subtrees of this size. In these levels there are at most $O(\log^{\ast}{n} \cdot \log{n})$ type-(1),(2) links, or links containing category-$0$ nodes (Lemmas \ref{lem5} and \ref{cat0bound}). Thus, at least $\frac{k}{ 2^{\log^{\ast}{n}}} - O \left( \log^{\ast}{n} \cdot \log{n} \right)$ of the subtrees answer the conditions of Lemma~\ref{catmain}, decreasing the potential by at least $\frac{k}{ 2^{\log^{\ast}{n}}} - O \left( \log^{\ast}{n} \cdot \log{n} \right)$. Also, the total increase in potential caused by type-(1),(2) links is at most $O(\log{n} \cdot \log^{\ast}{n})$ (Lemma~\ref{lemlevelpot}).
Therefore, the first $\log^{\ast}{n}$ levels give us a decrease in potential of at least $\frac{k}{ 2^{\log^{\ast}{n}}} - O\left( \log^{\ast}{n} \cdot \log{n} \right)$.

Note that by using the same argument on the next $\log^{\ast}{n}$ levels, we get a decrease in potential of at least $\frac{k'}{ 2^{\log^{\ast}{n}}} - O\left( \log^{\ast}{n} \cdot \log{n} \right)$, where $k'$ is the number of links in level $\log^{\ast}{n} + 1$. 
Thus, levels which contain $\Omega \left( \log{n} \cdot \log^{\ast}{n} \cdot 2^{\log^{\ast}{n}} \right)$ links only decrease the potential. 

We repeat this argument until we reach a level in $T$ containing $\tilde{k} \leq \log{n} \cdot \log^{\ast}{n} \cdot 2^{\log^{\ast}{n}}$ links. Now, applying the same argument as for the first case, we get that the total increase in potential for the last $\log{\tilde{k}}$ levels (starting from the level of $\tilde{k}$ links) is at most $O(\log{n} \cdot \log^{\ast}{n})$.

Summarizing, the total amortized time (in both cases) is at most $$k + O(\log{n} \cdot \log^{\ast}{n}) - \left( \frac{k}{ 2^{\log^{\ast}{n}}} - \log^{\ast}{n} \cdot \log{n} \right).$$ Scaling the potential by $2^{\log^{\ast}{n}}$, we get that the amortized time is $O(\log{n} \cdot \log^{\ast}{n} \cdot 2^{\log^{\ast}{n}})$. 
\end{proof}

\section{Path-balanced binary search trees}
\label{sec:pb}

Consider the operation of accessing a node $x$ in a BST $T$ with $n$ nodes (we refer interchangeably to a node and its key). Let $\P^{x}$ denote the search path to $x$ (i.e., the path from the root of $T$ to $x$). The path-balance method re-arranges $\P^x$ into a complete balanced BST (with all levels complete, except possibly the lowest). Subtrees hanging off $\P^x$ are re-attached in the unique way given by the key-order (Figure~\ref{fig_pb}). There are multiple ways to implement this transformation such that the number of pointer moves and pointer changes is linear in the length of the search path. For instance, we may first rotate the search path into a \emph{monotone} path, then apply a  \emph{multipass transformation} (described next) to this monotone path. 


\subparagraph{Multipass transformation.}
A multipass transformation of a monotone path $\P$ (of which the deepest node might not be a leaf) converts $\P$ into a balanced tree (in which the last level may be incomplete) by a sequence of \emph{pairing rounds}. In each pairing round we rotate every other edge in a prefix of $\P$ (i.e., a subpath of the shallowest nodes on $\P$). Each rotation pushes one node off $\P$. We denote by $\P^i$ the path remaining of $\P$ after $i$ pairing rounds. The pairing rounds are defined as follows. We assume that the path consists of right child pointers; in the case it consists of left child pointers everything is symmetric.

Let $\ell (\P)$ denote the length of $\P$ (i.e., the number of nodes on $\P$). In the first round we do just enough rotations so that the length of the path after the round (i.e., $\P^{1}$) is one less than a power of $2$. Specifically, we do $\alpha$ rotations where $\alpha$ is the smallest integer such that $\ell(\P^1) = \ell(\P)-\alpha = 2^j-1$. In the second round we do $2^{j-1}-1$ rotations on $\P^1$, and in round $i>1$ we do $2^{j-i+1}-1$ rotations on $\P^{i-1}$. We maintain the invariant that after $i+1$ rounds all the nodes that were pushed off $\P$ (excluding those that were pushed off $\P$ at the first round) are arranged in balanced binary trees of height $(i-1)$, hanging as children of the nodes of $\P^{i+1}$.  

The proof of the following theorem is analogous to the proof of Theorem~\ref{main-thm} (one can verify that all steps of the proof still hold for the slightly modified pairing rounds of the multipass transformation, replacing rotations by links).

\begin{theorem}\label{multipass-transformation}
For every monotone path $\P$ with $\ell(\P)=k$, the change in $\Phi$ caused by applying a multipass transformation on $\P$ is bounded by
$\displaystyle \Delta \Phi \leq c(n,k) := -\frac{k}{2^{\log^{\ast}{n}}} + O(\log{n} \cdot \log^{\ast}{n}),\label{MT-cost}
$
where $n$ is the size of the subtree of the root of $\P$.
\end{theorem}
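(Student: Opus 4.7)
The plan is to mirror the proof of Theorem~\ref{main-thm} using the same potential function $\Phi$ and the same category machinery. Although a pairing round of the multipass transformation performs BST rotations rather than key-comparing links, the structural effect on the three subtree sizes $a,b,c$ along the right-leaning path is identical: the rotation that pushes one node off the path replaces the subtree-size pattern at $(x_A,x_B,x_C)$ by exactly the same new pattern that a pairing-heap link would, and the potential-change expression in Equation~(\ref{eq:1}) depends only on $a,b,c$, not on which node ``wins'' the key comparison. Consequently, Lemmas~\ref{lema}, \ref{lemb}, \ref{lemc}, \ref{cat-type5}, \ref{cat-type6}, and Corollary~\ref{cor-categories} transfer verbatim, as does Lemma~\ref{catchange}, whose proof only uses the size inequality $\frac{a+b+c+2}{a+b+1}\leq\min\{\frac{a+b+c+2}{a},\frac{b+c+1}{b}\}$.

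Next, I would set up the pairing-tree $T$ whose level $i$ corresponds to rotations performed in round $i$ of the multipass transformation. Lemma~\ref{lem5} (at most $O(\log n)$ type-(1),(2) rotations per round) and Lemma~\ref{cat0bound} (at most $O(\log n)$ category-$0$ rotations per round) carry over, since both proofs rely only on the recurrence $c_{i-1}=2+a_i+b_i+c_i$ and the constant shrinkage of the $c_i$'s along a round. The only structural wrinkle is that the first round performs $\alpha$ rotations in order to round the remaining length down to $2^j-1$; since $\alpha\leq\ell(\P)/2+O(1)$, subsequent rounds are perfectly regular, so this anomaly only affects the geometry of the topmost level of $T$ and can be absorbed into the $O(\log n\cdot\log^{\ast}{n})$ error term.

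Finally, the core counting argument of Theorem~\ref{main-thm} applies directly. By Lemma~\ref{catmain} (whose proof needs no modification), any complete depth-$\log^{\ast}{n}$ subtree of $T$ made up of type-(3) rotations and free of category-$0$ nodes releases at least one unit of potential: Lemma~\ref{catchange} forces the minimum category to grow by at least one per level, and after $\log^{\ast}{n}$ levels this must produce a same-category link which saves $1$ unit by Lemma~\ref{equalcat}. In the first $\log^{\ast}{n}$ levels of $T$ there are $k/2^{\log^{\ast}{n}}$ disjoint candidate subtrees, of which at most $O(\log n\cdot\log^{\ast}{n})$ are spoiled by a type-(1), type-(2), or category-$0$ rotation (summing Lemmas~\ref{lem5} and \ref{cat0bound} across $\log^{\ast}{n}$ rounds). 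The surviving subtrees contribute at most $-k/2^{\log^{\ast}{n}}+O(\log n\cdot\log^{\ast}{n})$ to $\Delta\Phi$, while type-(1),(2) rotations contribute at most $O(\log n\cdot\log^{\ast}{n})$ by Lemma~\ref{lemlevelpot} and type-(3) rotations never increase $\Phi$. Combining yields $\Delta\Phi\leq -k/2^{\log^{\ast}{n}}+O(\log n\cdot\log^{\ast}{n})$.

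The main obstacle I expect is verifying that the identification of levels of $T$ with pairing rounds remains clean in the presence of the irregular first round, and that the invariant guaranteeing balanced subtrees of height $i-1$ hang below $\P^{i+1}$ is preserved, so that the subtree sizes one plugs into Equation~(\ref{eq:1}) really do mirror those in the pairing-heap analysis. These are essentially bookkeeping concerns, but they must be checked carefully because the $1/2^{\log^{\ast}{n}}$ savings factor depends on being able to find $k/2^{\log^{\ast}{n}}$ disjoint complete depth-$\log^{\ast}{n}$ subtrees in the first $\log^{\ast}{n}$ rounds.
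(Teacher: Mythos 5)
Your proposal is correct and takes essentially the same approach as the paper: the paper's entire proof of Theorem~\ref{multipass-transformation} is the assertion that the argument of Theorem~\ref{main-thm} carries over with rotations in place of links, and your observation that the potential change in Equation~(\ref{eq:1}) depends only on $a,b,c$ and not on which of the two nodes ends up on top is precisely the point that justifies this transfer. The bookkeeping issues you flag (the irregular first round and the invariant on the hanging balanced subtrees) are exactly the details the paper leaves to the reader, and you correctly absorb them into the $O(\log n\cdot\log^{\ast}n)$ term.
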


\subparagraph{Warm-up: a simplified path-balance.}
We first look at an easier-to-analyse variant of path-balance, where, instead of a complete balanced tree, we build an \emph{almost} balanced tree out of the search path $\P^x$, as follows: we first make the accessed item $x$ the root, then turn the parts of $\P^x$ containing items smaller (resp.\ larger) than $x$ into balanced subtrees rooted at the left (resp.\ right) child of $x$. The depth of this tree is at most one larger than the depth of a complete balanced tree built from $\P^x$. 

For the purpose of the analysis, we view the simplified path-balance transformation as a two-step process (Figure~\ref{naive-path-balance}). The actual implementation may be different but the analysis applies as long as the transformation takes time $O\left(\ell(\P^x)\right)$.

{\bf Step 1.} Rotate the accessed element $x$ all the way to the root. (Observe that after this step, $\P^x$ is split into two monotone paths, $\P^{<x}$ to the left of $x$ consisting only of ``right child'' pointers, and $\P^{>x}$ to the right of $x$, consisting only of ``left child'' pointers.)
 
{\bf Step 2.} Apply a multipass transformation to $\P^{>x}$ and to $\P^{<x}$.  

We show that the amortized time of an access using simplified path-balance is $O(\log{n} \cdot \log^{\ast}{n}
\cdot 2^{\log^{\ast}{n}})$. We use the same potential function as in \S\,\ref{sec:mp}, and we assume the two-step implementation described above. We first state an easy observation.

\begin{lemma}\label{path-pot}
Let $\P$ be a path in $T$ rooted at a node $r$, then $\Phi \left( \P \right) = O(\log{s(r)})$, where $\Phi(\P) = \sum_{x\in \P}{\phi(x)}$ and $s(r)$ is the size of the subtree of $r$.
\end{lemma}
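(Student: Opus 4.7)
The plan is to reduce the statement to a straightforward telescoping estimate. The key observation is that for every node $x$ one has $\phi(x)\le H(x)$: indeed, since $H(x) = \log(s(p(x))/s(x)) \ge 0$, the denominator $\log^2(2+H(x))$ is at least $\log^2 2 = 1$, so $\phi(x)=H(x)/\log^2(2+H(x)) \le H(x)$. This lets us discard the nonlinear $g(\cdot)$-factor entirely and work with the linear quantity $H$.

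Next I would exploit that $\P$ is a path descending from its root $r$. Writing $\P = (r = x_0, x_1, \dots, x_k)$ with each $x_i$ the child of $x_{i-1}$ in $T$, we have $p(x_i) = x_{i-1}$ and therefore $H(x_i) = \log\bigl(s(x_{i-1})/s(x_i)\bigr)$. Summing from $i=1$ to $k$ the logarithms telescope:
$$
\sum_{i=1}^{k} H(x_i) \;=\; \log\frac{s(x_0)}{s(x_k)} \;=\; \log\frac{s(r)}{s(x_k)} \;\le\; \log s(r).
$$
Combining with the pointwise bound $\phi(x_i) \le H(x_i)$, we obtain $\sum_{i=1}^{k} \phi(x_i) \le \log s(r)$, which is $O(\log s(r))$ as claimed.

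The only minor subtlety is the contribution of $\phi(r)$ itself, whose defining ratio involves $p(r)$ rather than $r$. In the intended applications (paths of the form $\P^x$, $\P^{<x}$, or $\P^{>x}$ used in \S\ref{sec:pb}), the top node plays the role of the root of $T$ (or of a monotone subpath attached below $x$ after rotating $x$ to the root), so $\phi(r)$ is either zero by convention or is absorbed into an $O(\log s(r))$ additive term using the same bound $\phi(r)\le H(r) \le \log s(r)$. I do not expect any real obstacle here; the lemma is essentially the observation that the refined potential $\phi$ dominated by $H$ inherits the familiar telescoping behaviour of the classical sum-of-logs potential along root-to-descendant paths.
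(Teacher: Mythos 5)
Your proof is correct, and it takes a slightly different (and more elementary) route than the paper. The paper writes $\Phi(\P)=\sum_{k}g\bigl(\log(a_{k+1}/a_k)\bigr)$ and invokes concavity of $g$ (Jensen) to bound the sum by $\ell\cdot g(\log s(r)/\ell)=O(\log s(r))$; you instead use the pointwise bound $g(t)\le t$ for $t\ge 0$ (valid because $H(x)\ge 0$ implies $\log^2(2+H(x))\ge 1$), which reduces everything to the classical telescoping of the sum-of-logs potential. For this particular lemma the two arguments are equally good, since only $O(\log s(r))$ is needed; the concavity step is the paper's general template and pays off elsewhere (e.g.\ Lemmas~\ref{lemlevelpot}, \ref{lempot}, \ref{lempot2}), where the $\log^2$ or $\log^3$ denominator must actually be exploited, but it is overkill here. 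Your handling of $\phi(r)$ matches the paper's, which likewise sums only over the $\ell-1$ edges internal to the path; one small slip is your parenthetical claim that $\phi(r)\le H(r)\le \log s(r)$ — in general $H(r)=\log\bigl(s(p(r))/s(r)\bigr)$ can exceed $\log s(r)$ (it is only bounded by $\log n$) — but since both you and the paper exclude the top node from the telescoping sum and the applications account for it separately, this does not affect correctness.
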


\begin{proof}
Denote $\ell = \ell(\P)$. Let $a_1\leq ... \leq a_{\ell}=s(r)$ be the subtree-sizes of the nodes on $\P$ from the deepest node to $r$. Then
\begin{align*}
\Phi(\P) = \sum^{\ell-1}_{k=1} f \left( \frac{a_{k+1}}{a_k} \right) =
\sum^{\ell-1}_{k=1} g \left( \log{ \frac{a_{k+1}}{a_k} }\right) \leq \ell \cdot g \left(\frac{\log{s(r)}}{\ell} \right) =  O(\log{s(r)}),
\end{align*}
due to $g$'s concavity and since the terms $\log{ \frac{a_{k+1}}{a_k} }$ sum to $\log{s(r)}-\log{a_1} \leq \log{s(r)}$.
\end{proof}

We proceed with the analysis. We argue that rotating $x$ to the root (Step 1) increases $\Phi$ by at most $O(\log{n})$.
To see this, observe first, that the potential of nodes hanged on the nodes of $\P^{x}$ excluding $x$, can only decrease. This is because their subtree remains the same, whereas the subtree of their parent (a node on the search path) can only lose elements, (see Figure~\ref{naive-path-balance}). The two children of $x$ may increase the potential by at most $O(\log{n})$.

For nodes \emph{on the search path}, we look at the potential after the transformation. We have two separate paths (see Figure~\ref{fig2} middle), and by Lemma~\ref{path-pot} the potential of each path is bounded by $O(\log{n})$. This concludes the analysis for Step 1.

In Step~2, as we apply the multipass transformation to both $\P^{<x}$ and $\P^{>x}$, Theorem~\ref{multipass-transformation} applies. Thus, $\Delta \Phi$ is at most $c( s(x.\lp), \ell(\P^{<x})) + c(s(x.\rp),\ell(\P^{>x}))$ where $c(n,k)$ is defined in Theorem~\ref{MT-cost}. The claim on the amortized running time follows by scaling $\Delta\Phi$ by $2^{\log^{\ast}{n}}$ and adding it to the actual cost (the length of $\P^x$).
This concludes the proof.

\subparagraph{Analysis of path-balance.} The original path-balance heuristic (where we insist on building a \emph{complete} balanced tree) is trickier to analyse. Here, instead of moving the accessed item $x$ to the root, we move the \emph{median} item $m$ of the search path $\P^x$ to the root. Here, ``median'' is meant with respect to the ordering of keys; $m$ is, in general, \emph{not} the node with median depth on $\P^x$. It is instructive to prove the earlier $O(\log{n} \cdot \log\log{n} / \log\log\log{n})$ result first, by re-using parts of the Fredman et al.\ proof for multipass. We do this in Appendix~\ref{section-raman}. In the remainder of this section we prove the new, stronger result.

\begin{theorem}\label{thmBST}
The amortized time of search using path-balance is
$O \left( \log{n} \cdot \left( \log^{\ast}{n} \right)^2 \cdot 2^{\log^{\ast}{n}} \right)$.
\end{theorem}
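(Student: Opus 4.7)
The plan is to extend the simplified path-balance analysis of the warm-up to the true path-balance setting, where the key-median $m$ (rather than the accessed item $x$) sits at the root of the rebuilt search path.

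First, I would rotate the search path $\P^x$ into a monotone path $\P$, contributing $O(\log n)$ to $\Phi$ by Lemma~\ref{path-pot}. The remaining task is to rebuild $\P$ into a complete balanced BST with $m$ at the root. Unlike the simplified version, where $x$ is by definition the deepest node on $\P^x$ and a single rotate-to-root followed by two multipasses suffices, the key-median $m$ of $\P$ can sit at any depth, and the two halves hanging off $m$ after the centering step are themselves monotone paths that still need to be recursively centered (not merely balanced).

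Next, I would decompose the rebuild into alternating centering and balancing sub-steps. In a centering sub-step applied to a monotone sub-path, I identify the key-median of that sub-path and bring it to the top via rotations along the sub-path from its top down to the median; inspecting the rotation sequence shows that the two sub-paths produced at the median's children are again monotone (the former ancestors of the median rotate onto one spine, the former descendants sit on the other, split by key). A balancing sub-step then recursively applies centering to each of the two new sub-paths.

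Each centering sub-step is a rotation sequence along a monotone sub-path, so its potential change is controlled by Theorem~\ref{multipass-transformation}: a positive contribution of $O(\log n \cdot \log^{\ast} n)$ plus a negative term of magnitude proportional to the sub-path length divided by $2^{\log^{\ast} n}$. Although the naive recursion depth is $O(\log k)$, the negative potential terms at each level pay for the rebuilds at deeper levels; I would argue by induction that only $O(\log^{\ast} n)$ levels leave a positive residual contribution, with the remaining levels telescoping away against the potential released. Summing, the total amortized cost is $O(\log n \cdot (\log^{\ast} n)^{2} \cdot 2^{\log^{\ast} n})$, as claimed.

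The main obstacle is the amortized bookkeeping across the recursion: correctly composing the potential releases from Theorem~\ref{multipass-transformation} at successive sub-path scales, verifying that the "centering" rotation sequence genuinely fits the multipass transformation template (so that Theorem~\ref{multipass-transformation} is applicable at each level), and confirming that only $O(\log^{\ast} n)$ scales leave a positive residual after telescoping.
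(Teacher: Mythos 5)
There is a genuine gap, and it sits in your very first step. You propose to rotate the general search path $\P^x$ into a single monotone path and charge this $O(\log n)$ by Lemma~\ref{path-pot}. But Lemma~\ref{path-pot} only bounds the potential of the nodes \emph{on} the path; it says nothing about the roots of the subtrees hanging off the path. Under an arbitrary rotation sequence that monotonizes a general path, a hanging subtree can be re-attached to a path node whose subtree is much larger than that of its former parent, so its potential can increase; with up to $k$ hanging subtrees each potentially gaining $\Theta(\log n/\log^3\log n)$, no $O(\log n)$ bound follows. This is precisely the difficulty the paper's analysis is built to avoid: it never monotonizes the whole path. Instead it rotates only the key-median $m$ to the root --- an operation for which every other path node's subtree only \emph{shrinks}, so the off-path nodes cannot gain potential and the $O(\log n)$ bound is genuine --- which splits $\P^x$ into one monotone half (handled by Theorem~\ref{multipass-transformation}) and one \emph{general} half containing $x$, on which it recurses. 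Controlling the number and cost of these recursive calls is where the three regimes (short, longish, long paths), the modified exponent-$3$ potential, and Lemmas~\ref{lempot2} and~\ref{multi_median} enter, and it is the $2\log^{\ast}n$ recursive calls in the ``longish'' regime, each costing $O(\log n\cdot\log^{\ast}n)$, that produce the $(\log^{\ast}n)^2$ factor. Your proposal never confronts the non-monotone half, so it misses the actual content of the proof.

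The second step is also unsound as stated: Theorem~\ref{multipass-transformation} is a statement about pairing rounds, and the potential release of order $k/2^{\log^{\ast}n}$ comes specifically from type-(3) links within those rounds. A ``centering'' sequence that rotates one median node from depth $k/2$ to the root is a completely different rotation sequence and does not fit that template, so the theorem gives you nothing about it; your own ``main obstacle'' remark is the correct worry, and it is fatal rather than technical. (One could observe that $\Phi$ depends only on the final tree shape, and that your recursive centering of a monotone path produces the same balanced tree as a single multipass transformation, so the net $\Delta\Phi$ of the whole rebuild could be read off from one application of Theorem~\ref{multipass-transformation}; but then the recursion and the claimed ``telescoping over $O(\log^{\ast}n)$ levels'' do no work, and the entire burden falls back on the unjustified monotonization step.)
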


For the purpose of the analysis, we view the path-balance transformation as a sequence of recursive calls on search paths in \emph{some} subtree of $T$.  
The total \emph{real} cost is proportional to the original length of the search path to $x$ which we denote by $k$.
We define a threshold $\tau = \log{n}$, and distinguish between recursive calls on paths shorter than $\tau$ (``short paths'') and recursive calls on paths longer than $\tau$ (``long paths''). 

 A \textbf{long path} $\P^x$ is processed as follows. We rotate the median $m$ of the nodes on $\P^x$ to the root, splitting $\P^x$ into two paths of equal lengths. One of these paths contains the path from $m$ to $x$ in $\P^x$, and the other path, which is monotone, contains either the elements smaller than $m$ on $\P^x$ or the elements larger than $m$ on $\P^x$ (depending upon whether $x$ is in the right or left subtree of $m$). In the sequel we assume without loss of generality that the monotone part contains all elements larger than $m$ and denote it by $\P^{>m}$. We denote the other (non-monotone) path that ends with $x$ by $Q^x$. We perform a multipass transformation on $\P^{>m}$, and make a recursive call on $Q^x$ (i.e., $Q^x$ becomes the $P^x$ of the next recursive call); see Figure~\ref{fig2}.

A \textbf{short path} $\P^x$ is transformed into a balanced binary tree in two phases, as follows.  
In the first phase, we rotate up the median $m_1$ of $\P^x=\P^1$ until it becomes the root of the subtree rooted at the shallowest node of $\P^1$. This decomposes $\P^1$ into a monotone path and a general path $\P^2$, one starting at the left child of $m_1$ and the other at the right child of $m_1$. We repeat this recursively with the median $m_2$ of $\P^2$, and so on, until we get a general path $\P^{\ell}$ of length $1$. After this transformation, the medians $m_j$ form a path, each $m_j$ having the next median $m_{j+1}$ as one child and a monotone path as the other child. The lengths of these monotone paths decrease exponentially by a factor of $2$. In the second phase we apply a multipass transformation on each of the monotone paths, obtaining a complete balanced tree; see Figure~\ref{figlast}.

Before we analyse each case, we argue that Theorem~\ref{multipass-transformation} also holds with a modified potential $\Phi$ (defined below). As we only use the new potential from now on, there is no risk of confusion. The modification consists in changing the exponent of the logarithmic term in the denominator from $2$ to $3$, and changing the additive constant inside the $\log(\cdot)$ to make sure $\Phi$ is still increasing everywhere.

Formally, $\Phi = \sum_{x\in T}{\phi(x)}$, where $\phi(x) = \frac{H(x)}{\log^3{(4+H(x))}}$, and $H(x) = \log{\frac{s(p(x))}{s(x)}}$. As earlier, $s(x)$ is the size of the subtree rooted at $x$, and $p(x)$ is the parent of $x$. For convenience, we define the functions $f(x) = \log{x}/\log^3{(4+\log{x})}$, and $g(x) = x/\log^3{(4+x)}$. As before, $f(x) = g(\log{(x)})$, and $\phi(x) = f(\frac{s(p(x))}{s(x)})$.

We show in Appendix~\ref{second-Appendix} that the entire analysis in \S\,\ref{sec:mp} extends to this new potential. Therefore, Theorem~\ref{multipass-transformation} holds also for the modified potential function $\Phi$. 

Now, the analysis of transforming long paths is straightforward. For short paths, we need two new observations. \\

\begin{lemma}\label{lempot2}
The total increase in potential for performing multipass transformation on a path $\P$ of length $k < \log{n}$ where $n$ is the size of the subtree of the root of $\P$, is at most
\begin{align*}
\sum_{j=1}^{\log{k}}\frac{O(\log{n})}{(\log\log{n}+1-j)^3}.
\end{align*}
\end{lemma}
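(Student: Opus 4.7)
The plan is to mirror the proof of Lemma~\ref{lempot} from Section~\ref{sec:mp}, adapted to the modified potential (with cubic log in the denominator), and then to apply it round-by-round to the $\log k$ pairing rounds of the multipass transformation, weighted by the fact that the $j$-th-to-last round contains at most $2^j$ links.

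First, I would invoke the results from Appendix~\ref{second-Appendix}, which establish that the analogs of Lemmas~\ref{lema}, \ref{lemb}, and \ref{lemc} hold for the new $\Phi$: a type-(1) link contributes at most $2\cdot g\bigl(\log(a/c)+O(1)\bigr)$, a type-(2) link contributes at most $O(1) = 2\cdot g(c')$ for some constant $c'$, and a type-(3) link never increases $\Phi$. Since these are the only facts about $\Phi$ used in Lemma~\ref{lemlevelpot}, the per-round argument transfers verbatim, yielding (for a round with $m$ links)
$$\Delta\Phi \;\leq\; 2m \cdot g\!\left(\frac{\alpha \log n}{m}\right) \;=\; \frac{2\alpha \log n}{\log^{3}\!\bigl(4 + \tfrac{\alpha \log n}{m}\bigr)},$$
where the telescoping $\sum_t \log(a_{i_t}/c_{i_t}) \le \log n$ and the $O(1)$ additive contributions per link (absorbed, using $m \le O(\log n)$) justify the argument inside $g$, and the bound follows from the concavity of $g$.

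Second, I would observe that in the multipass transformation on $\P$ with $\ell(\P) = k$, the structure of the rounds (designed to leave a prefix of length $2^{j-i+1}-1$ after round $i$) implies that the $j$-th round from the end contains at most $m_j \le 2^j$ links, and the total number of rounds is at most $\log k$. Substituting $m_j \le 2^j$ into the per-round bound and using $\log\!\bigl(4+\alpha\log n/2^j\bigr) \ge \log\log n - j + \Omega(1)$ (valid because $j \le \log k < \log\log n$, so the argument exceeds a constant), the per-round increase becomes
$$\Delta\Phi_j \;\le\; \frac{O(\log n)}{(\log\log n + 1 - j)^{3}}.$$
Summing over $j = 1, \ldots, \log k$ and using that type-(3) links only decrease $\Phi$ completes the proof, exactly matching the statement.

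The main obstacle is purely bookkeeping: one must check that the constants hidden in the $O(1)$ terms of the adapted Lemmas~\ref{lema}--\ref{lemb} do not blow up the argument, and that the shift between ``$4+\cdot$'' and ``$\log\log n + 1 - j$'' can be absorbed into the $O(\log n)$ in the numerator. Since the series $\sum_{k \ge 1} 1/k^3$ converges, the cubic denominator is more than sufficient; the switch from exponent $2$ (used in \S\,\ref{sec:mp}) to exponent $3$ is deliberately chosen so that the sum over rounds remains benign, which will be crucial later when this lemma is summed over many recursive calls in the path-balance analysis.
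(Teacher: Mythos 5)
Your proposal is correct and follows essentially the same route as the paper: the paper's own proof is literally a one-line reference to Lemma~\ref{lempot}, i.e., the per-round concavity bound from Lemma~\ref{lemlevelpot} (transferred to the modified potential via Appendix~\ref{second-Appendix}) combined with the observation that the $j$-th-to-last round has at most $2^j$ links. Your extra care about the constants and the $\log k < \log\log n$ range is exactly the bookkeeping the paper leaves implicit.
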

The proof is identical to that of Lemma~\ref{lempot}. As before, the sum can be bounded as $O(\log{n})$, but here we use the quantity explicitly inside another sum where the exponent $3$ in the denominator will be crucial. The next observation can be shown in a way similar to Lemma~\ref{path-pot}.

\begin{lemma} [Appendix~\ref{last_proof}]\label{multi_median}
Given a search path $\P$ of length $k < \log{n}$, the total increase in $\Phi$ due to recursively rotating all medians $m_1, m_2, \dots$ of $\P$ to the root is $O(\log{n})$. 
\end{lemma}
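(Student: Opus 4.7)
The plan is to adapt the concavity-and-telescoping argument of Lemma~\ref{path-pot} to bound $\Delta\Phi$ by $O(\log n)$. The key localization is that only the nodes on the original path $\P$ and the roots of the subtrees hanging off $\P$ can have their $\phi$-value altered by the median rotations; every other node preserves both its subtree size and its parent, and hence contributes zero to $\Delta\Phi$. So the entire analysis reduces to comparing the old and new $\phi$-contributions of these two sets.

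I would first bound the old total potential of the affected nodes. Lemma~\ref{path-pot} gives $O(\log n)$ for the path nodes outright. For the hanging roots, I would apply a direct concavity bound on $g$, using the constraint $\sum_h c_h \le n$ on the (unchanged) hanging-subtree sizes $c_h$ together with the hypothesis $k<\log n$, which caps the number of hanging roots at $O(\log n)$.

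Next, I would describe the structure after all the median rotations: the path nodes form a ``median spine'' $m_1,\dots,m_\ell$ of length $\ell = O(\log k) \le O(\log\log n)$, with a monotone subpath $MS_j$ of length $t_j \le k/2^j$ hanging off each $m_j$, and the original hanging subtrees reattached to appropriate path nodes. Applying Lemma~\ref{path-pot} to the median spine accounts for $O(\log n)$, and applying the refined bound $\Phi(MS_j) \le t_j\cdot g(\log n/t_j)$ to each monotone subpath and summing with the geometric decay $t_j \le k/2^j$ yields another $O(\log n)$ via the convergence of $\sum_j 1/j^3$, much as in Lemma~\ref{lempot2}. Here the stronger $\log^3$ in the denominator of the refined $\phi$ of Section~\ref{sec:pb} is precisely what makes the series converge (the original $\log^2$ of Section~\ref{sec:mp} would leave an extra $\log\log n$ factor). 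The hanging roots' new contribution is handled by the same concavity argument as in the old case.

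The main obstacle, I expect, will be the careful treatment of the hanging-root contributions. A naive Jensen bound on $\sum_h f(n/c_h)$ only yields $O(\log^2 n/\log^3\log n)$, which is too weak by a $\log n/\log^3\log n$ factor; so one cannot simply bound the old and new sums separately and subtract. Closing the gap should require exploiting the binary-tree-specific structure (at most one hanging root per internal path node; each hanging root's new parent's subtree size tightly related to its old parent's, since both are path nodes containing the root in their subtree) and a path-style telescoping mirroring the identity $\sum_k \log(a_{k+1}/a_k) = \log(a_\ell/a_1)$ at the heart of Lemma~\ref{path-pot}, possibly applied to the combined ``path$+$hanging-roots'' tree rather than to the path alone.
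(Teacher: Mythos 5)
Your decomposition into on-path nodes and hanging roots is right, and your treatment of the on-path nodes (median spine of length $O(\log\log n)$ plus monotone subpaths of geometrically decaying lengths, each bounded by $t_j\cdot g(\log n/t_j)$ via concavity and summed as a convergent series) is essentially the paper's argument. But your handling of the hanging roots has a genuine gap, which you yourself flag as the ``main obstacle'' without resolving: you correctly observe that bounding their new potential directly via Jensen fails (the ratios $s(p_h)/c_h$ need not multiply out to anything like $n$, so $\sum_h f(s(p_h)/c_h)$ can be as large as $\Theta(\log^2 n/\log^3\log n)$), and the fix you gesture at --- a telescoping over a combined ``path plus hanging roots'' tree --- is not carried out and is not what makes the argument work.

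The missing idea is a monotonicity observation, not a telescoping one. Throughout the sequence of rotations, a hanging root that is never a child of the median currently being rotated keeps its parent (a path node), its own subtree size is unchanged, and its parent's subtree can only lose elements (each rotation shrinks the subtree of the node the median passes and leaves all other path nodes' subtrees untouched). Hence $H$, and therefore $\phi$, of such a node can only \emph{decrease}; you never need to bound its new potential at all. The only off-path nodes whose potential can increase are the off-path children of the medians $m_1,\dots,m_j$ at the moment they are rotated up --- at most one per median, so at most $j<\log\log n$ nodes --- and each of these is trivially bounded by $f(n)=O(\log n/\log^3\log n)$, giving $O(\log n)$ in total. (A minor side point: your claim that the $\log^3$ denominator is what makes the on-path series converge here is not right --- $\sum_s 1/s^2$ already converges, so exponent $2$ would suffice for this lemma; the exponent $3$ is needed where Lemma~\ref{lempot2} is summed inside another sum in the short-path analysis.)
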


We are ready to prove Theorem~\ref{thmBST}. We split the proof into three cases according to the length of the search path, denoted by $k$.

\textbf{Short paths ($k \leq \tau = \log{n}$).} Notice that $\log{k} \leq \log{\log{n}}$.
Recall that in the first phase, we repeatedly rotate up the medians, decomposing the path into monotone paths of lengths $1,2,4,\dots,2^{j}$, where $j < \log{\log{n}}$. By Lemma~\ref{multi_median} the total increase in potential due to this transformation is at most $O(\log{n})$.

In the second phase, we do a multipass transformation on each of these monotone paths. By Lemma~\ref{lempot2}, a multipass transformation on a monotone path of length $2^j$ increases $\Phi$ by at most $\sum_{i=1}^{j}{\alpha \cdot\log{n} / {(\log{\log{n}} + 1 - i )^3}}$, for some fixed $\alpha$. Thus, the $j <\log{\log{n}}$ multipass transformations increase the potential by at most

$$ \sum_{j=1}^{\log{\log{n}}}{\sum_{i=1}^{j}{ \frac{\alpha \cdot\log{n} }{(\log{\log{n}} + 1 - i )^3}}} =
\sum_{s = 1}^{\log{\log{n}}}{\frac{\alpha \cdot\log{n} }{s^2}} = O(\log{n}).$$

The first equality holds since the term $\frac{\alpha \cdot\log{n} }{s^3}$ appears in the above sum exactly $s$ times ($1 \leq s \leq \log{\log{n}}$). Thus, the total increase in $\Phi$ is, in this case, $O(\log{n})$.

\textbf{Longish paths ($\tau < k \leq  \log{n} \cdot \log^{\ast}{n} \cdot 2^{\log^{\ast}{n}}$).} Notice that $\log{k} \leq \log{\log{n}} + 2\cdot\log^{\ast}{n}$.

We perform $2\cdot\log^{\ast}{n}$ recursive calls and a final call on a search path of length $k' \leq \tau$. The final call increases $\Phi$ by at most $O(\log{n})$, by the analysis in the previous case. The recursive calls consist of rotating the current median up to the root and applying the multipass transformation on a monotone path. As before, rotating the median up increases $\Phi$ by at most $O(\log{n}$). Also, each multipass transformation is performed on a path of length $\leq \log{n} \cdot \log^{\ast}{n} \cdot 2^{\log^{\ast}{n}}$. By Theorem~\ref{multipass-transformation}, the increase in potential is at most $O(\log{n} \cdot \log^{\ast}{n})$.
Therefore, the $2\cdot\log^{\ast}{n}$ recursive calls increase $\Phi$ by at most  $O\left(\log{n} \cdot \left( \log^{\ast}{n} \right)^2 \right)$, which also bounds the total increase in $\Phi$.

\textbf{Long paths} ($k = \Omega \left( \log{n} \cdot \log^{\ast}{n} \cdot 2^{\log^{\ast}{n}} \right)$). We look at the potential change due to the first recursive call. Again, rotating the median $m$ to the root increases $\Phi$ by at most $O(\log{n})$. The path splits into $\P^{>m}$ and $Q^{x}$, of which $\P^{>m}$ is monotone. By Theorem~\ref{multipass-transformation}, the multipass transformation on $\P^{>m}$ \emph{decreases} $\Phi$ by $\frac{k/2}{ 2^{\log^{\ast}{(n)}}} - O \left( \log^{\ast}{(n)} \cdot \log{n} \right)$. 

By the same argument, $\Phi$ decreases during all of the subsequent recursive calls on paths of size $\Omega \left(\log{n} \cdot \log^{\ast}{n} \cdot 2^{\log^{\ast}{n}} \right)$.

We continue until we have a recursive call on a path of size at most $\left( \log{n} \cdot \log^{\ast}{n} \cdot 2^{\log^{\ast}{n}}\right)$, which, by the previous case, increases $\Phi$ by at most $O\left( \log{n} \cdot \left( \log^{\ast}{n} \right)^2 \right)$. Thus, we obtain that the total decrease in $\Phi$ in this case is at least $\frac{k/2}{ 2^{\log^{\ast}{(n)}}} - O\left( \log{n} \cdot \left( \log^{\ast}{n} \right)^2 \right)$.

Combining the three cases, after scaling the potential by $2 \cdot 2^{\log^{\ast}{n}}$, we conclude that the amortized time of the access is $k + 2 \cdot 2^{\log^{\ast}{n}} \cdot \Delta \Phi = O\left( \log{n} \cdot 2^{\log^{\ast}{n}} \cdot \left( \log^{\ast}{n} \right)^2 \right)$, as required.

\newpage
\appendix
\section{Additional figures}

\begin{figure}[H]
    \begin{center}
\includegraphics[width=0.75\textwidth]{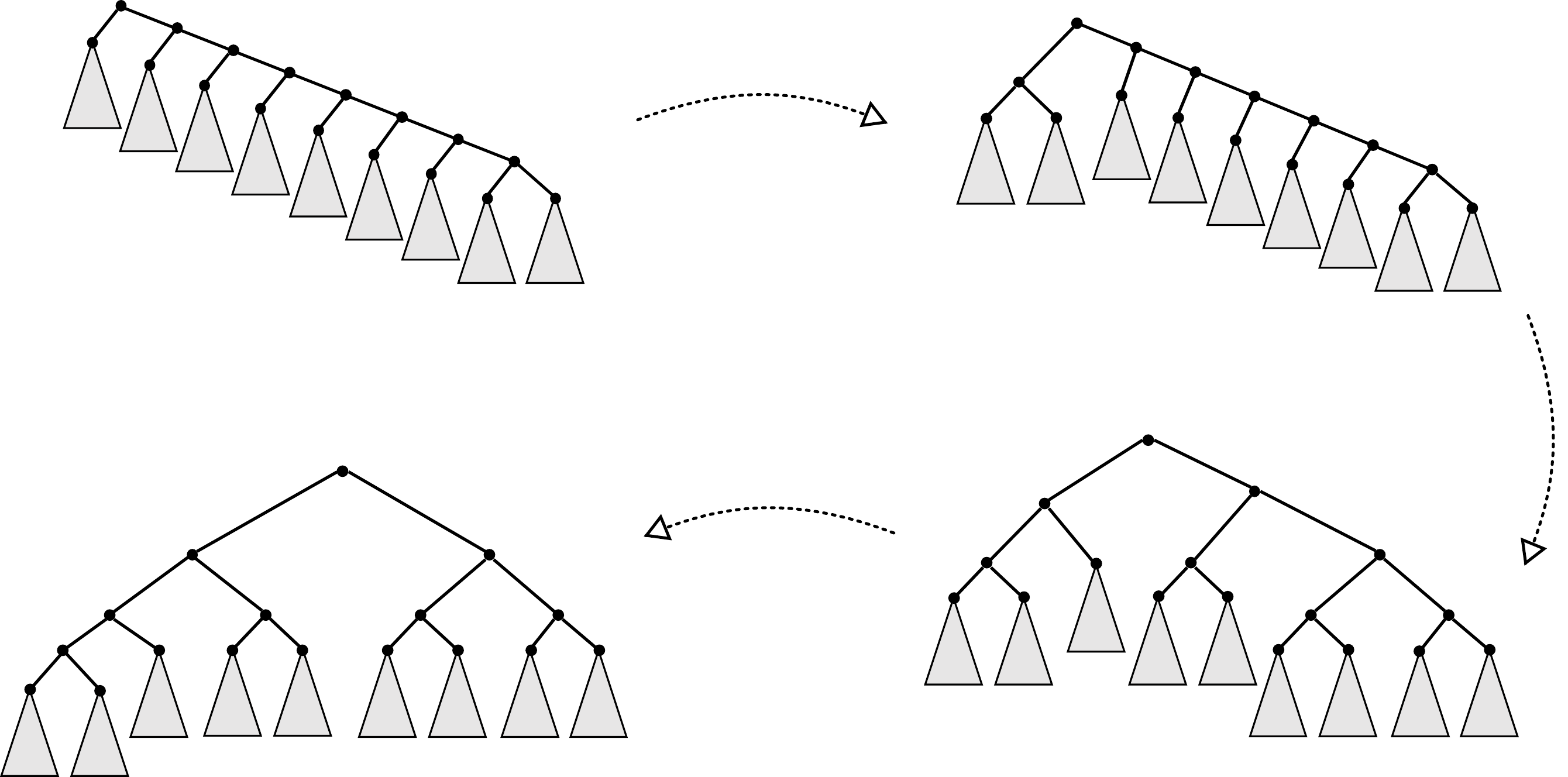}
    \end{center}
\caption{Example multipass transformation. In the first round we make the length of the path $2^3-1$. The next steps are similar to \emph{delete-min} in multipass pairing heaps, except that in order to build a balanced tree, we do not link the last nodes. 
\label{fig_MT}}
\end{figure}

\begin{figure}[H]
    \begin{center}
\includegraphics[width=0.95\textwidth]{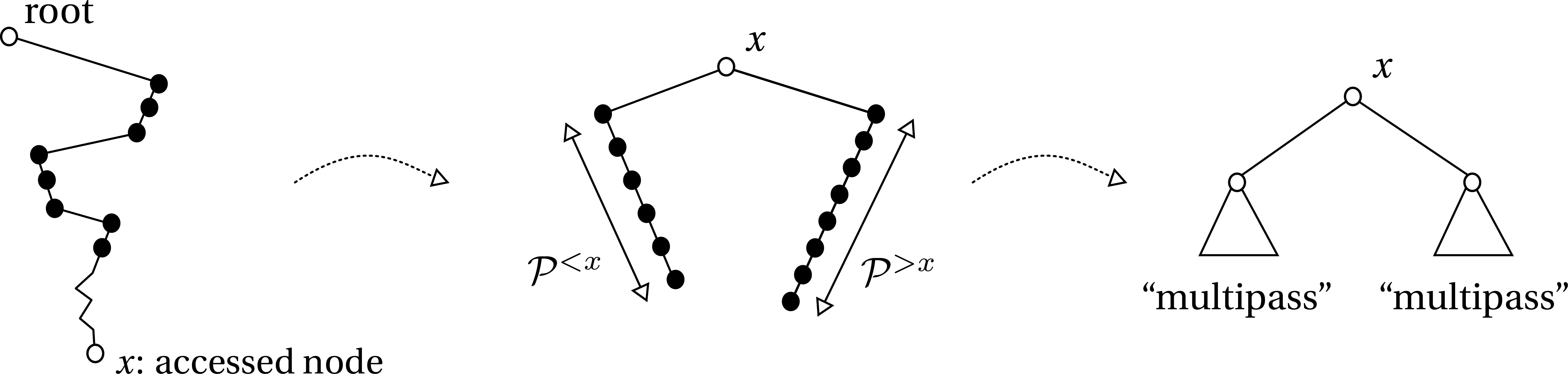}
    \end{center}
\caption{Two-step view of simplified path-balance restructuring.
\label{naive-path-balance}}
\end{figure}

\begin{figure}[H]
    \begin{center}
\includegraphics[width=1\textwidth]{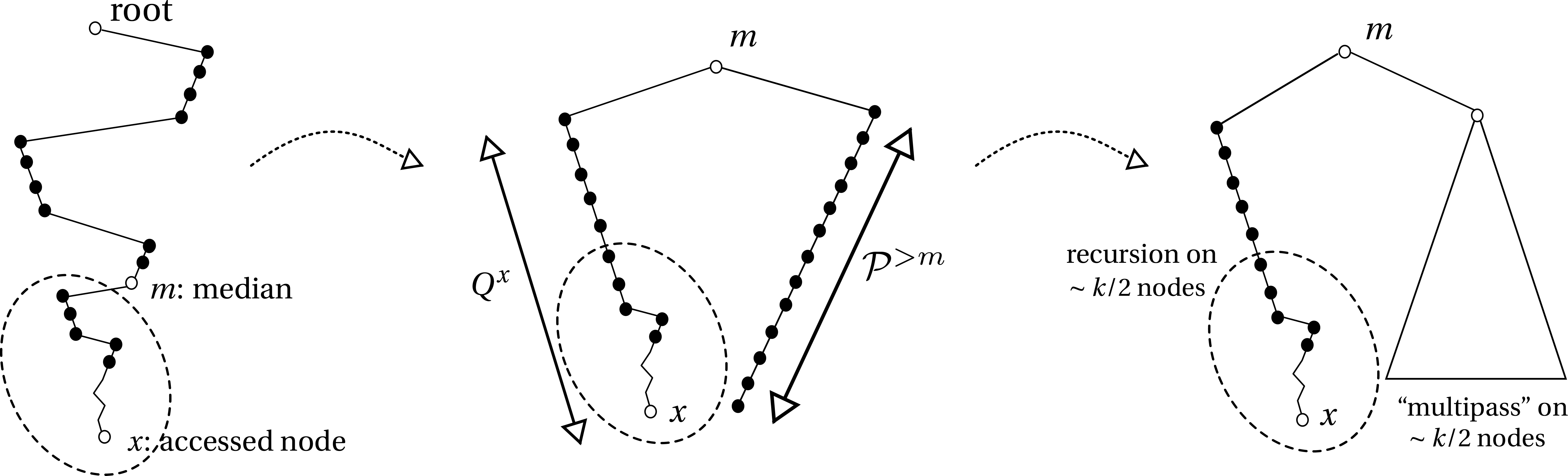}
    \end{center}
\caption{Recursive view of path-balance restructuring.
\label{fig2}}
\end{figure}

\begin{figure}
    \begin{center}
\includegraphics[width=0.9\textwidth]{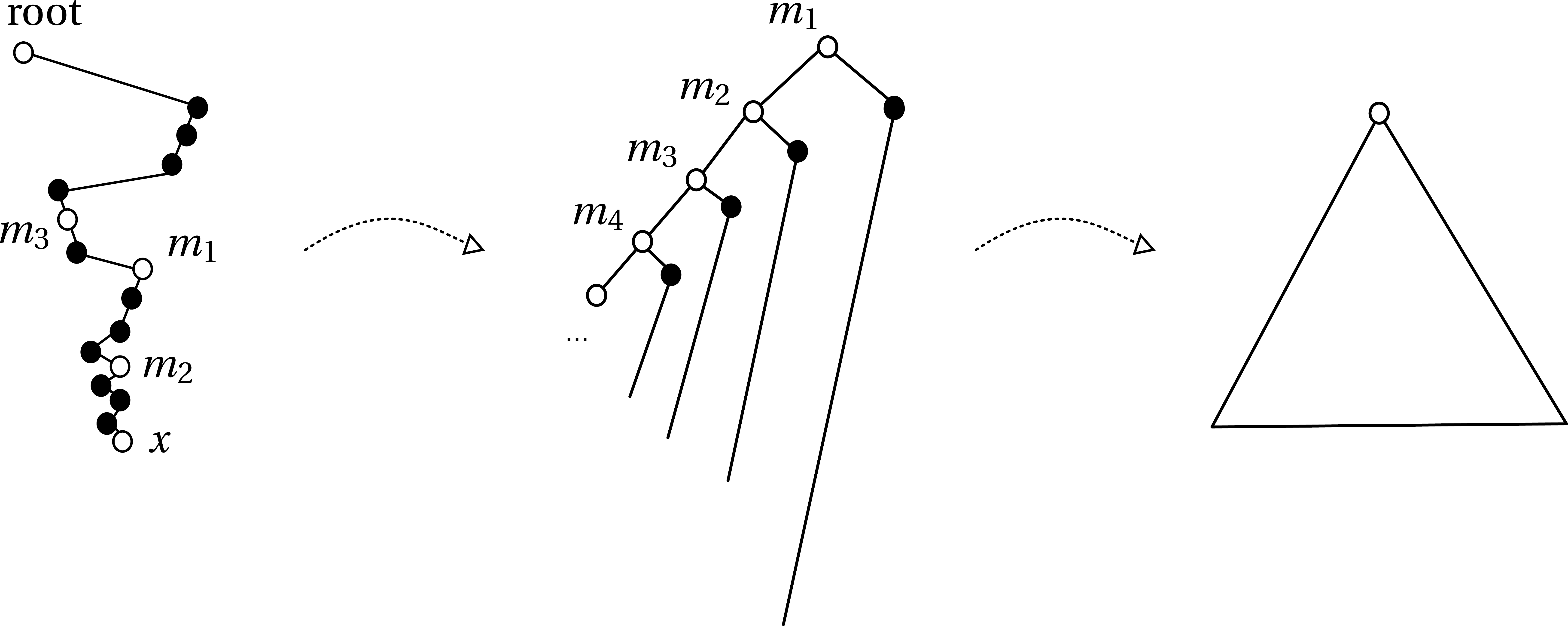}
    \end{center}
\caption{Two-step view of path-balance on a path shorter than $\tau$. First, we recursively rotate up medians. The monotone paths do not necessarily follow left child pointers as shown in this figure but can also follow right child pointers. Finally, we apply a multipass transformation to each monotone path, obtaining a complete balanced tree.
\label{figlast}}
\end{figure}

\newpage
\section{Additional proofs for \S\,\ref{sec:mp}}
\label{appa}
We start with a multi-part technical lemma, to be used in the proofs of other claims.

\begin{lemma}\label{lem123} 
\begin{enumerate}[(i)] \ \\
\item For every $x \geq 1,y \geq 0$, it holds that $f(x+y)\leq f(x)+1.5y.$
\item For every $x,y \geq 1$, it holds that $f(xy)-f(x)\leq f(y).$
\item For every $x\geq \gamma$ it holds that
$$f'(x) \geq \frac{1}{3x\log^2(2+\log{x})}.$$
\item For every $x>y\geq 1, c > 0$, it holds that $f(x+c)-f(y+c)\leq f(x)-f(y).$

\item $f(1+x) + f(1+\frac{1}{x})$ has only one maximum point at $x=1$ in
$[\frac{1}{\gamma},\gamma]$, and two global minima.
\item Fix $a,b \in \mathbb{N}$. If $\frac{1}{\gamma} \leq \frac{a}{b} \leq \gamma$, then:
$$f \left( \frac{a+b+1}{a} \right) +
f \left( \frac{a+b+1}{b} \right) \leq 0.95.$$
\end{enumerate}
\end{lemma}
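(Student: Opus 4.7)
The plan is to verify each of the six parts by elementary analysis of $f$ and $g$, exploiting the identity $f(x)=g(\log x)$, the concavity of both functions on their relevant domains, and the explicit form of $f'$.

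Parts (i) and (iv) follow from direct calculus. For (i), the mean value theorem gives $f(x+y)-f(x)=y\cdot f'(\xi)$ for some $\xi\ge x\ge 1$; differentiating yields the crude bound $f'(\xi)\le \frac{1}{\xi\ln 2\cdot\log^{2}(2+\log\xi)}\le \frac{1}{\ln 2}<1.5$. For (iv), the concavity of $f$ on $[1,\infty)$ (already noted in the paper) makes $f'$ non-increasing, so $f(x+c)-f(y+c)=\int_{y}^{x}f'(t+c)\,dt\le \int_{y}^{x}f'(t)\,dt=f(x)-f(y)$. For (ii), substituting $u=\log x,\,v=\log y$ reduces the claim to the subadditivity $g(u+v)\le g(u)+g(v)$, which follows from $g$ being concave on $[0,\infty)$ with $g(0)=0$, so that $g(t)/t$ is non-increasing; one then splits
$$g(u+v) \;=\; u\cdot\frac{g(u+v)}{u+v}+v\cdot\frac{g(u+v)}{u+v} \;\le\; u\cdot\frac{g(u)}{u}+v\cdot\frac{g(v)}{v} \;=\; g(u)+g(v).$$

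For (iii), explicit differentiation yields
$$f'(x)=\frac{1}{x\ln 2\cdot\log^{2}(2+\log x)}\left(1-\frac{2\log x}{(2+\log x)\log(2+\log x)}\right),$$
and for $x\ge\gamma=3000$ the parenthesized factor is at least $\ln 2/3$: it suffices to check this at $x=\gamma$ (where $\log\gamma\approx 11.55$), since the subtracted term tends to $0$ as $x\to\infty$ and is monotone in the relevant regime. For (v), set $H(x)=f(1+x)+f(1+1/x)$; the symmetry $H(x)=H(1/x)$ confines attention to $[1,\gamma]$, where trivially $H'(1)=f'(2)-f'(2)=0$, and the task is to show $H'(x)<0$ throughout $(1,\gamma]$. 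Writing $H'(x)=f'(1+x)-x^{-2}f'(1+1/x)$ and combining the concavity of $f$ (which makes $f'(1+x)$ small for $x>1$) with the explicit formula for $f'$ used in (iii), one verifies the sign across the interval, so $x=1$ is the unique maximum and the endpoints $1/\gamma,\gamma$ are the two global minima by symmetry.

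For (vi), write $u=(a+b+1)/a=1+(b+1)/a$ and $v=(a+b+1)/b=1+(a+1)/b$. On the diagonal $a=b$ the sum equals $2f(2+1/a)$, which is decreasing in $a$ and therefore maximized at $a=b=1$, giving $2f(3)<0.95$ by direct numerical evaluation. Off the diagonal, set $r=a/b\in[1/\gamma,\gamma]$; as $a,b\to\infty$ with $r$ fixed, the sum converges to $f(1+1/r)+f(1+r)$, which by (v) is at most $2f(2)\approx 0.797$. The finite corrections $1/a,1/b$ shift the arguments of $f$ by $O(1/\min(a,b))$, contributing at most $O(1/\min(a,b))$ to the total by (i), which is easily absorbed into the gap $0.95-0.797$ except when $\min(a,b)$ is small; the remaining finitely many cases with $\min(a,b)$ bounded by a small constant are handled by a direct check.

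The main obstacle is part (v): unlike the other parts, establishing that $H$ is unimodal on $[1/\gamma,\gamma]$ with a unique maximum at $x=1$ is not a local computation but a global sign analysis of $H'$ over the entire interval, requiring explicit numerical constants and careful use of the concavity of $f$ together with the formula from (iii). Once (v) is in hand, the remaining parts, in particular (vi), follow by reduction and routine estimates.
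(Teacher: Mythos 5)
Parts (i)--(iv) are fine and essentially match the paper: (i) and (iv) via the mean value theorem and concavity, (iii) via the explicit derivative (you dropped a factor $\ln 2$ in the denominator of the subtracted term inside the parenthesis, but this only makes your intermediate bound more conservative and the conclusion still holds), and (ii) via subadditivity of $g$ --- you derive it from $g(t)/t$ being non-increasing for a concave $g$ with $g(0)=0$, whereas the paper integrates $g'$; both are correct and equivalent.

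There is, however, a genuine error in your treatment of part (v), which then propagates into (vi). You claim that $H(x)=f(1+x)+f(1+1/x)$ satisfies $H'(x)<0$ throughout $(1,\gamma]$, so that $H$ is unimodal with its two global minima at the endpoints $1/\gamma$ and $\gamma$. This is false. The function decreases from $x=1$ only until an interior critical point (numerically near $x\approx 24.27$ for the exponent-$2$ potential) and then \emph{increases} again: for large $x$, $f(1+x)\sim \log x/\log^2\log x$ grows while $f(1+1/x)\to 0$, so $H(x)\to\infty$. The two global minima are therefore \emph{interior}, and at the right endpoint one has $H(\gamma)=f(1+\gamma)+f(1+1/\gamma)\approx 0.817$, which exceeds $H(1)=2f(2)\approx 0.796$. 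Consequently your global sign analysis of $H'$ cannot be carried out as described, and the statement you feed into (vi) --- that $f(1+1/r)+f(1+r)\le 2f(2)\approx 0.797$ for all $r\in[1/\gamma,\gamma]$ --- is simply wrong. The correct use of (v), as in the paper, is to bound $H$ on $[1/\gamma,\gamma]$ by $\max\{H(1),\,H(\gamma)\}\le 0.82$, exploiting only that there is a single interior local maximum at $x=1$ and a single minimum on each side. Your final constant $0.95$ would still survive with this correction (since $0.82$ plus the $O(1/\min(a,b))$ correction terms stays below $0.95$), but as written the argument for (v), and hence the asymptotic regime of (vi), rests on a false description of the shape of $H$.
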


\begin{proof}
\ \\
\noindent Part (i):

By Lagrange theorem, $f(x+y) = f(x) + y\cdot f'(c)$, for $c\in \left[x,x+y \right]$.
Due to the concavity of $f$, we get $f'(c) \leq f'(1) = \frac{1}{\ln{2}} < 1.5$.

\noindent Part (ii):

It is enough to show $g(\log{x} +\log{y}) - g(\log{x}) \leq g(\log{y})$, which holds since $$g(x+y) = g(x) + \int_{0}^{y}{g'(x+t)} \leq g(x) + \int_{0}^{y}{g'(t)} = g(x) + g(y), $$
where the inequality holds since $g$ is concave.

\noindent Part (iii):

Taking the derivative:
\begin{align*}
f'(x) &=
\frac{1}{x \cdot \ln{2} \cdot \log^2(2+\log{x})}
-\frac{\frac{2 \log{x}}{\log{x}+2}}
{x \cdot (\ln{2})^2 \cdot \log^3(2+\log{x})} \\
&\geq
\frac{1}{x \cdot \ln{2} \cdot \log^2(2+\log{x})}
-\frac{2}{x \cdot (\ln{2})^2 \cdot \log^3(2+\log{x})} \\
&=
\frac{1}{x\log^2(2+\log{x})} \left[ \frac{1}{\ln{2}} - \frac{2}{(\ln{2})^2 \cdot \log(2+\log{x})} \right] \\
&\geq
\frac{1}{x\log^2(2+\log{x})} \left[ \frac{1}{\ln{2}} - \frac{2}{(\ln{2})^2 \cdot \log(2+\log{\gamma})} \right] \geq
\frac{0.335}{x\log^2(2+\log{x})}.
\end{align*}

\noindent Part (iv):

It is equivalent to prove $f(x+c)-f(x) \leq f(y+c)-f(y)$, i.e., that $g(z) = f(z+c) - f(z)$ is monotone decreasing. This holds since $g'(z) = f'(z+c) - f'(z) < 0 $ ($f$ is concave).

\noindent Part (v):

Because of symmetry around $x=1$, it suffices to prove that $h(x) = f(1+x) + f(1+\frac{1}{x})$ has only one minimum in $[1,\gamma]$. This minimum is at $x \approx 24.271$. A plot of this function is shown in Figure~\ref{wolfram}. We omit the tedious analytical derivation. 

\begin{figure}
\centering
\begin{subfigure}{0.48\linewidth}
\includegraphics[width=0.95\textwidth]{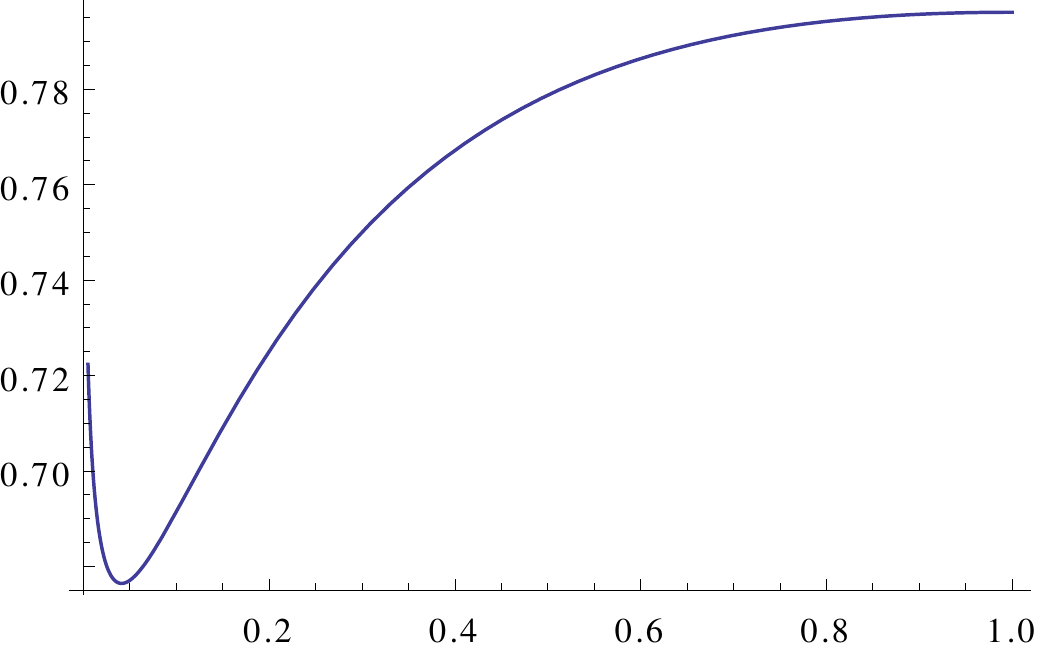}
\end{subfigure}~
\begin{subfigure}{0.48\linewidth}
\includegraphics[width=0.95\textwidth]{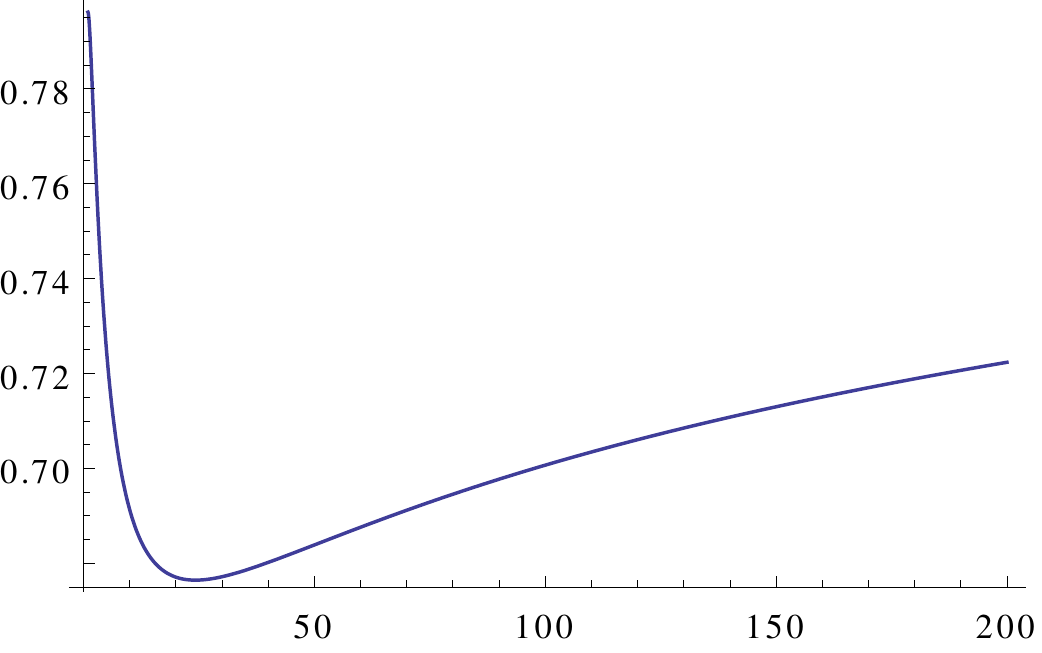}
\end{subfigure}

\caption{Plot of $h(x)=f(1+x) + f(1+\frac{1}{x})$ for $x\in [\frac{1}{200},1]$ (left), and $x\in [1,{200}]$ (right).
\label{wolfram}}
\end{figure}

\noindent Part (vi):

For $1 \leq a,b \leq 3000$ we verified the inequality by computer (the maximal value is $\approx 0.934$, at $a=b=1$). Thus, assume $a,b > 3000$. Denote $x = \frac{a}{b}$. Using Lemma~\ref{lem123}(i), we get:
\begin{align*}
&f \left( \frac{a+b+1}{a} \right) +
f \left( \frac{a+b+1}{b} \right) =
f \left( 1 + \frac{1}{x} + \frac{1}{a} \right) +
f \left( 1 + x + \frac{1}{b} \right) \\
&\leq f \left( 1 + \frac{1}{x} \right) + f \left( 1 + x \right) + 1.5 \left( \frac{1}{a} + \frac{1}{b} \right) \leq
f \left( 1 + \frac{1}{x} \right) + f \left( 1 + x \right) + 0.001.
\end{align*}
Using Lemma~\ref{lem123}(v) we further bound the expression above:
$$f \left( 1 + \frac{1}{x} \right) + f \left( 1 + x \right)
\leq \max \{f(2) + f(2), f\left(1+\frac{1}{\gamma}\right) + f(1+\gamma) \} \leq 0.82.$$
Thus, we are done. 
\end{proof}

\subsection{Proof of Lemma~\ref{lema}}\label{lem1proof}

\begin{proof}
\textbf{Case $\gamma^2 a \geq \gamma b \geq c$:} By inspecting $\Delta \Phi$ in Equation (\ref{eq:1}) we observe that we can drop the (positive) $1$-st and the (negative) $5$-th terms (as a negative quantity) and also the $3$-rd and $7$-th terms (since their sum is negative).
Denote $x = a/b \geq 1/\gamma$, and $y = b/c \geq 1/\gamma$. We have:

\begin{align*}
\Delta \Phi &\leq f\left(\frac{a+b+1}{b}\right)
+ f\left(\frac{a+b+c+2}{c}\right)
- f\left(\frac{a+b+c+2}{b+c+1}\right) -
f\left(\frac{b+c+1}{c}\right) \\
&\leq f\left(\frac{a+b+1}{b}\right) + f\left(\frac{a+b+c+2}{c}\right) \leq
f\left(\frac{a+b+b}{b}\right) + f\left(\frac{a+b+c+2c}{c}\right)
\\ &= f(2+x) + f(3+y+xy).
\end{align*}

Observe that $x,y \leq \gamma a/c$, and $xy = a/c$, and thus:
\begin{align*}
\Delta \Phi & \leq f\left( 2+ \gamma a/c\right) + f\left( 3 + \gamma a/c + a/c\right)\\
& \leq 2 f \left( 3 + (\gamma + 1) (a/c) \right)\\
& \leq 2 f \left( (3\gamma^2 + \gamma + 1) (a/c) \right)\\
& \leq 2 f \left( 4\gamma^4 (a/c) \right)\\
& \leq 2 g \left( \log\left(4\gamma^4 \right) + \log{(a/c)} \right),
\end{align*}

which is in the required form. \\

\textbf{Case $ \gamma^2 a \geq c \geq \gamma b$:} By inspecting $\Delta \Phi$ in Equation (\ref{eq:1}) and dropping the 1-st and 5-th terms (since their sum is negative), we obtain:
\begin{align}
\Delta \Phi \: \leq  \label{eq:2}
\begin{split}
&f\left(\frac{a+b+1}{b}\right) + f\left(\frac{a+b+c+2}{a+b+1}\right) + f\left(\frac{a+b+c+2}{c}\right)\\
- &f\left(\frac{a+b+c+2}{b+c+1}\right) - f\left(\frac{b+c+1}{b}\right) - f\left(\frac{b+c+1}{c}\right).
\end{split}
\end{align}

Denote $x = a/c \geq 1/\gamma^2$, and $y = c/b \geq \gamma$.

Suppose $x \leq 1$ (which is equivalent to $c \geq a$). We can drop the $1$-st and $5$-th terms in Equation (\ref{eq:2}) (since their sum is negative). Further dropping the negative terms, we are left with:
\begin{align*}
\Delta \Phi & \leq f\left(\frac{a+b+c+2}{a+b+1}\right) +  f\left(\frac{a+b+c+2}{c}\right)\\
& \leq f\left(2+\frac{c}{a+b}\right) +  f\left(\frac{a+b+c+2c}{c}\right)\\
& = f\left( 2 + y/(1+xy) \right) +  f\left( 3+ x + 1/y\right) \quad \tag{as $c=yb$ and $a=xyb$}\\
& \leq  f\left(2 + y/(1+y/\gamma^2) \right) + f(5) \quad ~\tag{as $1 \geq x \geq 1/\gamma^2$} \\
& \leq f\left(2 + \gamma^2 y/(1+y) \right) + f(5) \\
& \leq  f(2\gamma^2) + f(5) \leq 2\gamma^2 + 5.
\end{align*}

This completes the proof for $x \leq 1$.
Now assume that $x > 1$ (which is equivalent to $a>c$). Then we can drop the $2$-nd and $4$-th terms in Equation (\ref{eq:2}) (since their sum is negative). After dropping the $6$-th term, we are left with:
\begin{align*}
\Delta \Phi &\leq f\left(\frac{a+b+1}{b}\right) + f\left(\frac{a+b+c+2}{c}\right) - f\left(\frac{b+c+1}{b}\right)\\
&= f\left(\frac{a+b}{b} + \frac{1}{b}\right) -
f\left(\frac{b+c}{b} + \frac{1}{b}\right) +
f\left(\frac{a+b+c+2}{c}\right)\\
&\leq f\left(\frac{a+b}{b} \right) -
f\left(\frac{b+c}{b} \right) +
f\left(\frac{a+b+c+2}{c}\right) \quad \tag{by Lemma~\ref{lem123}(iv)}\\
&= f\left( 1 + xy\right)  - f\left( 1+ y\right) +
f\left( 1 + x + 1/y +2/c\right)\\
 & \leq f\left( 1 + xy\right)  - f\left( 1+ y\right) + f\left( 4+x\right).
\end{align*}

By Lemma~\ref{lem123}(ii) we bound the first two terms as: 
$$f\left( 1 + xy\right) - f\left( 1 + y\right)  \leq f\left( x + xy\right) - f\left( 1+ y\right)\\
 = f\left( x(1+y) \right)  - f\left( 1+ y\right) \leq f(x).$$

Thus, we have:
\begin{align*}
\Delta \Phi & \leq f(x) + f\left( 4 + x \right)\\
& \leq 2f(4+x)\\
& \leq 2f(4x+x)\\
& \leq 2f(5x)\\
& \leq 2g\bigl(\log{(5)}+\log{(a/c)}\bigr),
\end{align*}
in the required form. 

\end{proof}

\subsection{Proof of Lemma~\ref{lemb}}\label{lem2proof}

\begin{proof}
Using $b= \Omega(a+c)$ and collecting all constant terms in Equation (\ref{eq:1}) we get:
$$\Delta \Phi = f\left( \frac{a+b}{a} \right) + f\left( \frac{a+b+c}{c} \right) -
f\left( \frac{a+b+c}{a} \right) - f\left( \frac{b+c}{c} \right) + O(1)=$$
$$f\left( \frac{b}{a} \right) + f\left( \frac{b}{c} \right) - f\left( \frac{b}{a} \right) - f\left( \frac{b}{c} \right) + O(1) = O(1).$$
The second equality in the equation above holds since $f\left( O \left( x \right) \right) = f(x) +O(1)$.
\end{proof}

\subsection{Proof of Lemma~\ref{lemc}}\label{lemcproof}

\begin{proof}

\textbf{Type-(3A) link }($c \geq \gamma^2 a \geq \gamma b$):
Recall the change in potential from Equation (\ref{eq:1}).
\begin{align}
\Delta \Phi = \tag{\ref{eq:1}}
\begin{split}
f\left(\frac{a+b+1}{a}\right) + f\left(\frac{a+b+1}{b}\right) + f\left(\frac{a+b+c+2}{a+b+1}\right) + f\left(\frac{a+b+c+2}{c}\right) \\
- f\left(\frac{a+b+c+2}{a}\right) - f\left(\frac{a+b+c+2}{b+c+1}\right) - f\left(\frac{b+c+1}{b}\right) - f\left(\frac{b+c+1}{c}\right).
\end{split}
\end{align}
Denote $x=\frac{c}{a}\geq \gamma^2, y=\frac{a}{b}\geq \frac{1}{\gamma}$.
 We consider three cases according to the value of $y$.

{\bf Case $y\leq \gamma$:} 
We use Lemma~\ref{lem123}(vi) to bound the first two terms by $0.95$. The $7$-th term is larger than the $3$-rd so if we discard them both we only increase the right side. We also discard the $6$-th and $8$-th terms (as their signs are negative). This gives:

$$\Delta \Phi \leq 0.95 + f\left(\frac{a+b+c+2}{c}\right) - f\left(\frac{a+b+c+2}{a}\right).$$

Observe that since $a \geq 1$, $f\left(\frac{a+b+c+2}{c}\right) \leq
f\left(\frac{3a+b+c}{c}\right) =
f\left(1 + \frac{3}{x} + \frac{1}{xy}\right) \leq
1.5 \left( \frac{3}{\gamma^2} + \frac{1}{\gamma} \right) \leq 0.01$, where the next to last inequality follows from Lemma~\ref{lem123}(i). By noting that $f\left(\frac{a+b+c+2}{a}\right) \geq f(1+x) \geq f(\gamma^2) > 1$ we complete the proof in this case. 

{\bf Case $x\geq y \geq \gamma$:} We drop the $3$-rd and $5$-th terms (as a negative quantity). Then it suffices to show:
$$f\left(\frac{b+c+1}{b}\right) \geq f\left(\frac{a+b+1}{a}\right) + f\left(\frac{a+b+1}{b}\right) + f\left(\frac{a+b+c+2}{c}\right).$$
Using Lemma~\ref{lem123}(i) we get:
\begin{align*}
&f \left( \frac{a+b+1}{a} \right) + f\left(\frac{a+b+c+2}{c} \right) \leq
f\left( \frac{a+2b}{a} \right) + f\left(\frac{a+3b+c}{c} \right) \leq \\
&1.5 \cdot \frac{2}{y} + 1.5 \cdot \left( \frac{3}{xy} + \frac{1}{x} \right) \leq \frac{4.5}{y} + \frac{4.5}{y^2}.
\end{align*}

Using Lemma~\ref{lem123}(iii) and by Lagrange theorem:
\begin{align*}
&f\left(\frac{b+c+1}{b}\right) - f\left(\frac{a+b+1}{b}\right) = f\left( 1 + xy + \frac{1}{b} \right) - f\left( 1 + y + \frac{1}{b}\right) \\
&\geq f\left( 1 + \gamma y + \frac{1}{b} \right) - f\left( 1 + y + \frac{1}{b}\right) \geq 
\left[\left( 1 + \gamma y + \frac{1}{b} \right) -  \left( 1 + y + \frac{1}{b}\right)\right] \cdot f'\left( 1 + \gamma y + \frac{1}{b} \right)\\
&\geq\left( \gamma y - y \right) f'\left( 2 + \gamma y \right) 
\geq  \frac{\gamma y - y }{3 (2+\gamma y) \log^2(2+\log(2+\gamma y))}
\geq  \frac{0.3}{ \log^2(2+\log(2+\gamma y))} \geq
\frac{4.5}{y} + \frac{4.5}{y^2}.
\end{align*}

Here, the second inequality uses Lagrange theorem and the concavity of $f$, and the second to last inequality uses that $(\gamma y - y)/(2+\gamma y)\geq 0.9996$.

{\bf Case $y>x$:} We again drop the $3$-rd and $5$-th terms (as a negative quantity). Using Lemma~\ref{lem123}(iii):
\begin{align*}
&f\left(\frac{b+c+1}{b}\right) - f\left(\frac{a+b+1}{b}\right) = f\left( 1 + xy + \frac{1}{b} \right) - f\left( 1 + y + \frac{1}{b}\right) \\
&\geq \left[ \left( 1 + xy + \frac{1}{b} \right)  - \left( 1 + y + \frac{1}{b}\right) \right] \cdot f'\left( 1 + xy + \frac{1}{b} \right)
\geq (xy - y)\cdot f'\left( 2 + xy \right) \\
&\geq \frac{xy-y}{3 (2+xy) \log^2(2+\log(2+xy))}
\geq \frac{0.3}{\log^2(2+\log(2+xy))} 
\geq \frac{0.3}{\log^2(2+\log(2+y^2))}.
\end{align*}


Also, by concavity of $f$:
\begin{align}
f\left(\frac{a+b+c+2}{b+c+1}\right) +
f\left(\frac{b+c+1}{c}\right) \geq
f\left(1 + \frac{a+1}{b+c+1} + \frac{b+1}{c}\right).
\label{ineq26}
\end{align}
Thus, looking at the $4$-th, $6$-th, and $8$-th terms:
\begin{align*}
&f\left(\frac{a+b+c+2}{c}\right) - f\left(\frac{a+b+c+2}{b+c+1}\right) -
f\left(\frac{b+c+1}{c}\right)\\
&\leq f\left(\frac{a+b+c+2}{c}\right) -
f\left(1 + \frac{a+1}{b+c+1} + \frac{b+1}{c}\right)\quad \tag{by Inequality~(\ref{ineq26})} \\
&\leq 1.5 \left(  \frac{a+b+2}{c} - \frac{a+1}{b+c+1} - \frac{b+1}{c} \right) = 1.5 \left(  \frac{a+1}{c} - \frac{a+1}{b+c+1} \right) \quad \tag{by Lemma~\ref{lem123}(i)}\\
&= 1.5 \cdot \frac{(a+1)(b+1)}{c\cdot (b+c+1)}
= 1.5 \cdot \frac{\left(1+\frac{1}{a} \right) \left(1+\frac{1}{b} \right)}{x\cdot \left( 1 + xy + \frac{1}{b}\right)} \leq \frac{1}{xy} \leq \frac{1}{y}. \quad \tag{as $x\geq \gamma^2$, and $\frac{1}{a},\frac{1}{b}\leq 1$}
\end{align*}


Putting it all together (using $f\left(\frac{a+b+1}{a}\right) \leq f\left(1 + \frac{2}{y}\right) \leq \frac{3}{y}$) we get:
$$\Delta\Phi \leq \frac{3}{y} + \frac{1}{y} - \frac{0.3}{\log^2(2+\log(1+y^2))} < 0. $$

\textbf{Type-(3B) link }($c \geq  \gamma b \geq \gamma^2 a$):
The total increase in potential is:
\begin{align*}
\Delta\Phi =f\left(\frac{a+b+1}{a}\right) + f\left(\frac{a+b+1}{b}\right) + f\left(\frac{a+b+c+2}{a+b+1}\right) + f\left(\frac{a+b+c+2}{c}\right) \\
- f\left(\frac{a+b+c+2}{a}\right) - f\left(\frac{a+b+c+2}{b+c+1}\right) - f\left(\frac{b+c+1}{b}\right) - f\left(\frac{b+c+1}{c}\right).
\end{align*}

We drop the $3$-rd and $7$-th terms (as a negative quantity), and also discard the $6$-th term. 

$$\Delta\Phi \leq f\left(\frac{a+b+1}{a}\right) +
f\left(\frac{a+b+1}{b}\right) + f\left(\frac{a+b+c+2}{c}\right) - f\left(\frac{a+b+c+2}{a}\right) - f\left(\frac{b+c+1}{c}\right).$$
Thus, it is enough to show that:

\begin{align}
\label{ineq27}
f\left(\frac{a+b+c+2}{a}\right) + f\left(\frac{b+c+1}{c}\right) \geq f\left(\frac{a+b+1}{a}\right) +
f\left(\frac{a+b+1}{b}\right) + f\left(\frac{a+b+c+2}{c}\right).
\end{align}

Let $x=c/b \geq \gamma,y=b/a \geq \gamma$. We split the rest of the proof into two cases according to the ordering between $x$ and $y$.

{\bf Case $x>y$:} We discard $f\left(\frac{b+c+1}{c}\right)$ in Inequality~(\ref{ineq27}). Note that:
$$f \left( \frac{a+b+1}{b} \right) + f\left(\frac{a+b+c+2}{c}\right) \leq f \left( 1+\frac{2}{y} \right) +  f \left(1+ \frac{1}{x} + \frac{3}{xy} \right) \leq
f \left( 1+\frac{2}{y} \right) +  f \left(1+ \frac{1}{y} + \frac{3}{y^2} \right).$$

Thus, by Lemma~\ref{lem123}(i):

$$f \left( \frac{a+b+1}{b} \right) + f\left(\frac{a+b+c+2}{c}\right) \leq   1.5 \left(  \frac{2}{y} + \frac{1}{y} + \frac{3}{y^2}  \right) = \frac{4.5}{y} + \frac{4.5}{y^2}.$$

Bounding the remaining terms of Inequality~(\ref{ineq27}) we get (using concavity):
\begin{align*}
&f\left(\frac{a+b+c+2}{a}\right) - f\left(\frac{a+b+1}{a}\right)  = f\left(1+y+xy+\frac{2}{a}\right) -
f\left(1+y+\frac{1}{a}\right)\\
&\geq f\left(1+y+y^2+\frac{1}{a}\right) -
f\left(1+y+\frac{1}{a}\right) \mbox{~~~~ ($x>y$)}\\
&\geq \left[ \left(1+y+y^2+\frac{1}{a}\right) -
\left(1+y+\frac{1}{a}\right) \right] \cdot f'\left(1+y+y^2+\frac{1}{a}\right) 
\tag{Lagrange theorem}\\
&= y^2 \cdot f'\left(1+y+y^2+\frac{1}{a}\right) \geq
y^2 \cdot f'(2+y+y^2) \tag{concavity of $f$, $a \geq 1$}\\
&\geq \frac{y^2}{3(2+y+y^2)\log^2(2+\log(2+y+y^2))} 
\geq \frac{0.3}{\log^2(2+\log(2+y+y^2))} \tag{Lemma~\ref{lem123}(iii)}\\ 
&\geq \frac{4.5}{y} + \frac{4.5}{y^2}.
\end{align*}


{\bf Case $x\leq y$:} Taking the $2$-nd, $4$-th, and $5$-th terms from Inequality~(\ref{ineq27}) we get:
\begin{align*}
&f \left( \frac{a+b+1}{b} \right) + f\left(\frac{a+b+c+2}{c}\right) - 
f\left(\frac{b+c+1}{c}\right) \\
&\leq f \left( \frac{2a+b}{b} \right) + f\left(\frac{2a+b+c+1}{c}\right) - 
f\left(\frac{b+c+1}{c}\right) \tag{$a \geq 1$}\\
&= f \left( 1+\frac{2}{y} \right) +  f \left(1+ \frac{1}{x}+ \frac{1}{c} + \frac{2}{xy}  \right)-  f\left(1 + \frac{1}{x} +  \frac{1}{c}\right)\\
&\leq
1.5 \left(\frac{2}{y} + \frac{2}{xy} \right) \leq
1.5 \left(\frac{2}{y} + \frac{2}{\gamma y} \right) =
\frac{3}{y} + \frac{3}{\gamma y}. \tag{Lemma~\ref{lem123}(i)}
\end{align*}

Also, using Lagrange theorem and Lemma~\ref{lem123}(iii) we get:

\begin{align*}
&f\left(\frac{a+b+c+2}{a}\right) - f\left(\frac{a+b+1}{a}\right)  = f\left(1+y+xy+\frac{2}{a}\right) -
f\left(1+y+\frac{1}{a}\right)\\
&\geq f\left(1+y+\gamma y+\frac{1}{a}\right) -
f\left(1+y+\frac{1}{a}\right) \geq
\gamma y \cdot f'\left(1+y+\gamma y+\frac{1}{a}\right) \geq
\gamma y \cdot f'(2+y+\gamma y) \\
&\geq\frac{\gamma y}{3(2+y+\gamma y)\log^2(2+\log(2+y+\gamma y))}
\geq \frac{0.3}{\log^2(2+\log(2+y+\gamma y))} \geq
\frac{3}{y} + \frac{3}{\gamma y},
\end{align*}


finishing the proof.
\end{proof}

\subsection{Proof of Lemma~\ref{cat-type5}}\label{app_lem7}

\begin{proof}
As earlier for type-(1) links, define $x=c/a \geq \gamma^2,y=a/b \geq 1/\gamma$. 

\begin{align*}
- \Delta\Phi & =  f(1 + \frac{y}{1+xy}) + f(1 + 1/y + x) + f(1 + \frac{1}{xy}) + f(1 + xy)\\
& -f(1 + y) - f(1+\frac{1}{y}) - f(1+\frac{1}{x}+\frac{1}{xy}) - f(1 + \frac{xy}{y+1}) - O(1).
\end{align*}

We have $H_A = \log(1+x+1/y) + O(1)$, and $H_B = \log(1+xy)+O(1)$. Note, $H_B \geq \Omega(1) \cdot H_A$.

Collecting some constant terms, we have:
$$- \Delta\Phi \geq  f(1 + xy) + f(1 + x) - f(1+y) - f\left(1 + \frac{xy}{y+1}\right) - O(1).$$

As $f(1 + x) - f(1 + \frac{xy}{y+1}) \geq 0$, we further simplify:
$$- \Delta\Phi \geq  f(1 + xy)  - f(1+y) - O(1).$$

It is now sufficient to show $$f(1 + xy)  - f(1+y) \geq \Omega(1) \cdot \frac{\log{(1+x+1/y)}}{\log^2{\left(2 + \log{(1+xy)}\right)}} - O(1).$$

We start with the left side:
\begin{align*}
f(1 + xy)  - f(1+y) & \geq f(1+xy) - f(2\gamma y)  \\
& = g(\log{(1+xy)}) - g(\log{(y)} + \log{2\gamma})\tag{rewriting}\\
& \geq g'(\log{(1+xy)}) \cdot (\log{(1+xy) - \log{y}} - \log{2\gamma})\tag{by Lagrange thm.}\\
& = g'(\log{(1+xy)}) \cdot (\log{(1/y+x)} - \log{2\gamma})\tag{rewriting}\\
& \geq \Omega(1) \cdot \frac{\log{(1/y + x)}}{\log^2{(2 + \log{(1+xy)})}} - O(1)\tag{expanding the derivative of $g$}\\
& \geq \Omega(1) \cdot \frac{\log{(1 +x+1/y)}}{\log^2{(2 + \log{(1+xy)})}} - O(1)\tag{adding a constant term}\\
& = \Omega(1) \cdot \frac{H_A}{\log^2{(2 + H_B)}} - O(1).
\end{align*}

\end{proof}

\subsection{Proof of Lemma~\ref{cat-type6}}\label{app_lem8}

\begin{proof}
As earlier for type-(1) links, define $x=c/b \geq \gamma,y=b/a \geq \gamma$. We get:
\begin{align*}
- \Delta\Phi & =  f\left(1 + \frac{1}{y+xy}\right) + f(1 + y + xy) + f\left(1 + \frac{1}{x}\right) + f(1 + x)\\
& -f\left(1 + \frac{1}{y}\right) - f(1+y) - f\left(1+\frac{1}{x}+\frac{1}{xy}\right) - f\left(1 + \frac{xy}{y+1}\right) - O(1).
\end{align*}

We have $H_A = \log(1+y+xy) + O(1)$, and $H_B = \log(1+x)+O(1)$. Clearly $H_A \geq H_B$.

Collecting some constant terms, we have:
$$- \Delta\Phi \geq  f(1 + y + xy) + f(1 + x) - f(1+y) - f\left(1 + \frac{xy}{y+1}\right) - O(1).$$

As $f(1 + x) - f(1 + \frac{xy}{y+1}) \geq 0$, we further simplify:
$$- \Delta\Phi \geq  f(1 + y + xy)  - f(1+y) - O(1).$$

It is now sufficient to show $$f(1 + y + xy)  - f(1+y) \geq  \Omega(1) \cdot \frac{\log{(1+x)}}{\log^2{(2 + \log{(1+y+xy)})}} - O(1).$$
We start with the left side:
\begin{align*}
f(1 + y + xy)  - f(1+y) & \geq f(y+xy) - f(1+y) \\
& \geq f(y+xy) - f(2y)  \\
& = g(\log{(y+xy)}) - g(\log{(y)}+1)\tag{rewriting}\\
& \geq g'(\log{(y+xy)}) \cdot (\log{(y+xy) - \log{y} - 1})\tag{by Lagrange thm.}\\
& = g'(\log{(y+xy)}) \cdot (\log{(1+x)}-1)\tag{rewriting}\\
& \geq \Omega(1) \cdot \frac{\log{(1+x)}}{\log^2{(2 + \log{(y+xy)})}} - O(1)\tag{expanding the derivative of $g$}\\
& \geq \Omega(1) \cdot  \frac{\log{(1+x)}}{\log^2{(2 + \log{(1+y+xy)})}} - O(1)\tag{adding a constant term} \\
& = \Omega(1) \cdot \frac{H_B}{\log^2{(2 + H_A)}} - O(1).
\end{align*}

\end{proof}

\newpage
\section{Simpler proof of the $O\left( \log{n}\cdot \log{\log{n}}/\log{\log{\log{n}}} \right)$ bound}\label{section-raman}
Recall that Balasubramanian and Raman~\cite{pathbalance} proved a bound of $O \left(\log{n} \cdot \log\log{n} / \log\log\log{n}\right)$ on the amortized time of operations in path-balanced BSTs, using a scaled version of the ``sum-of-logs'' potential function (see Definition~\ref{sum-of-logs}). We give a somewhat simpler proof of this result by combining an intermediate result of Balasubramanian and Raman with the Fredman et al.\ $O(\log{n} \cdot \log\log{n} / \log\log\log{n})$ amortized time bound for multipass pairing heaps~\cite{pairing}.

As in~\cite{pairing} and~\cite{pathbalance} we use a scaled ``sum-of-logs'' potential function.

\begin{definition}\label{sum-of-logs}
Let $\Psi^{'}$ be sum-of-logs potential function. That is
$$\Psi^{'}(T) = \sum_{x\in T}{\log{s(x)}}.$$
The \emph{scaled} sum-of-logs potential function is denoted by $\Psi$ and equals $\Psi^{'}/\log{\log{\log{n}}}$, where $n$ is the size of the tree.
\end{definition}

We analyse long paths using Theorem~\ref{multipassWarmup} and use Inequality (1) of Balasubramanian and Raman~\cite{pathbalance} to analyse the last recursive call on a short path.

Formally, we prove:

\begin{theorem}\label{thmWarmUp}
The amortized time of search using path-balance is
$O \left( \log{n} \cdot \log\log{n} / \log\log\log{n} \right)$.
\end{theorem}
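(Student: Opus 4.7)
The plan is to mirror the recursive decomposition used in the proof of Theorem~\ref{thmBST}, but with two substitutions: I would replace the refined potential $\Phi$ of \S\,\ref{sec:mp} with the scaled sum-of-logs potential $\Psi$ of Definition~\ref{sum-of-logs}, and replace the fine-grained category-based savings with the older Fredman et al.\ multipass bound. Accordingly, I would set the threshold between ``short'' and ``long'' paths at $\tau = \Theta(\log n \cdot \log\log n / \log\log\log n)$, matching the target amortized cost.

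First I would treat \textbf{long paths} ($\ell(\P^x) > \tau$) by the same recursive scheme as in the proof of Theorem~\ref{thmBST}: rotate the median $m$ of $\P^x$ to the root, splitting $\P^x$ into a monotone path (say $\P^{>m}$) and a non-monotone path $Q^x$, each of length $\approx k/2$. By an analog of Lemma~\ref{path-pot} tailored to $\Psi$, rotating $m$ up would increase $\Psi$ by $O(\log n / \log\log\log n)$. Next I would apply Theorem~\ref{multipassWarmup}---the warm-up version of Theorem~\ref{multipass-transformation} under $\Psi$, obtained by porting the original Fredman et al.\ analysis of multipass pairing heaps to the BST multipass transformation---to $\P^{>m}$; this should give a decrease in $\Psi$ of order $\Omega(k \log\log\log n / \log\log n)$ minus an additive slack of $O(\log n / \log\log\log n)$. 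I would then recurse on $Q^x$ and iterate $O(\log k)$ times until the length drops below $\tau$; the median-rotation costs sum to $O(\log k \cdot \log n / \log\log\log n) = O(\log n \cdot \log\log n / \log\log\log n)$, while the multipass potential decreases telescope so as to absorb the real path-balance cost.

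For the final recursive call on a \textbf{short path} ($\ell \leq \tau$), the plan is to invoke Inequality~(1) of Balasubramanian and Raman~\cite{pathbalance} directly. That inequality bounds the change of $\Psi$ caused by one path-balance operation on a path of length at most $\tau$ by $O(\log n \cdot \log\log n / \log\log\log n)$; together with the real cost (which is at most $\tau$), this yields the claimed amortized bound for this call. Summing the long-path recursive contributions and the final short-path contribution would complete the proof.

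The hardest part will be establishing Theorem~\ref{multipassWarmup} with the right constants: one must verify that every step of the Fredman et al.\ multipass analysis---originally phrased on the right-spine of a binary tree during \emph{delete-min}---carries over to the slightly modified pairing rounds of the BST multipass transformation, in which the final nodes of each round are left unlinked so as to produce a balanced (rather than fully linked) tree. The transfer is mostly routine, since the rounds differ by at most a constant factor in the number of links, but some care is needed to ensure that the potential decrease on a monotone path is genuinely linear in its length (after the $1/\log\log\log n$ scaling), net of the additive $O(\log n/\log\log\log n)$ term; only then do the telescoping savings absorb both the real cost and the potential increase contributed by rotating medians up.
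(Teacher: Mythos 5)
Your plan coincides with the paper's proof in all structural respects: the scaled sum-of-logs potential $\Psi=\Psi'/\log\log\log n$, the threshold $\tau=\Theta(\log n\cdot\log\log n/\log\log\log n)$, the long-path recursion (rotate the median to the root, run a multipass transformation on the monotone half $\P^{>m}$, recurse on $Q^x$), the use of the ported Fredman et al.\ bound (Theorem~\ref{multipassWarmup}) for the multipass steps, and Balasubramanian--Raman's Inequality~(1) for the terminal short path. Your ``analog of Lemma~\ref{path-pot} for $\Psi$'' is in the paper the even simpler Lemma~\ref{mtrWarmUp}: only the median itself can gain potential during its rotation to the root, giving $O(\log n)$ in $\Psi'$.

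There is, however, a quantitative slip in your long-path accounting that, taken literally, makes the argument fail. What Theorem~\ref{multipassWarmup} actually yields for a monotone path of length $k/2$ is that the real cost plus $\Delta\Psi$ is $O(\log n\cdot\log\log n/\log\log\log n)$, hence a potential decrease of $k/2 - O(\log n\cdot\log\log n/\log\log\log n)$ --- \emph{linear} in the path length, with additive slack $O(\log n\log\log n/\log\log\log n)$ --- not your ``$\Omega(k\log\log\log n/\log\log n)$ minus $O(\log n/\log\log\log n)$.'' With your weaker rate, absorbing the real cost $k$ would require rescaling $\Psi$ by $\Theta(\log\log n/\log\log\log n)$, which would inflate the short-path contribution from Lemma~\ref{RammanWeak} to $O\bigl(\log n\,(\log\log n)^2/(\log\log\log n)^2\bigr)$ and lose the theorem. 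Relatedly, your claim that the median-rotation costs sum to $O(\log k\cdot\log n/\log\log\log n)=O(\log n\log\log n/\log\log\log n)$ is false once $k$ is superpolylogarithmic in $n$ (the left-hand side can be as large as $O(\log^2 n/\log\log\log n)$). Both issues vanish if you account as the paper does: on each recursive call with path length above $\tau$, the linear decrease from the multipass on the monotone half dominates that call's $O(\log n)$ median-rotation cost, so every such call is net non-increasing in $\Psi$, and the \emph{first} call alone (after scaling the potential by $2$) already supplies the $k-O(\log n\cdot\log\log n/\log\log\log n)$ decrease needed to pay for the access.
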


We state Theorem 2 of Fredman et al.~\cite{pairing} for \emph{multipass pairing heaps} which suffices for us to prove Theorem~\ref{thmWarmUp}. Recall that in Theorem~\ref{main-thm} we improved this result. 

\begin{theorem}[ {{\cite[Thm~2.]{pairing}}} ]\label{multipassWarmup}
The amortized time of delete-min in multipass pairing heaps, using the scaled sum-of-logs potential function is $O \left(\log{n} \cdot \log\log{n} / \log\log\log{n} \right)$.
\end{theorem}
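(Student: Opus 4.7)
The plan is to re-use the recursive decomposition of path-balance introduced in \S\,\ref{sec:pb} (for long paths in the proof of Theorem~\ref{thmBST}) but with the scaled sum-of-logs potential $\Psi$ from Definition~\ref{sum-of-logs} in place of the modified potential $\Phi$, and to combine Theorem~\ref{multipassWarmup} (Fredman et al.) with Inequality~(1) of Balasubramanian and Raman~\cite{pathbalance} as the two ingredients flagged in the section's introduction.

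First I would argue that a multipass transformation on a monotone path in a BST is operationally the same as a multipass delete-min in the binary-tree view of a pairing heap: a monotone right-child path plays the role of the top-level list of roots produced after deleting the minimum, and each pairing round rotates alternate edges exactly as corresponding links are performed. Consequently Theorem~\ref{multipassWarmup} applies verbatim and yields, for a multipass on a monotone path of length $k'$ contained in a subtree of size at most $n$, the amortized bound
$$k' \;+\; \Delta \Psi \;\le\; O\!\left(\log n \cdot \log\log n / \log\log\log n\right).$$

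Next I would process a long search path ($k > \tau$, for a threshold $\tau$ to be chosen) as in Theorem~\ref{thmBST}: rotate the median $m$ of $\P^x$ to the root, apply a multipass transformation to the monotone side $\P^{>m}$ (of length roughly $k/2$), and recurse on the non-monotone remainder $Q^x$. Rotating $m$ to the root increases $\Psi$ by at most $O(\log n / \log\log\log n)$, by the direct analog of Lemma~\ref{path-pot} for the scaled sum-of-logs potential (the unscaled increase is $O(\log n)$ and the scaling divides by $\log\log\log n$). Each multipass then contributes, by the observation above, a potential change of at most $-k_i + O(\log n \cdot \log\log n/\log\log\log n)$, where $k_i$ is the length of the monotone path passed in. When the recursion reaches a path of length at most $\tau$, I would invoke Inequality~(1) of Balasubramanian and Raman to bound the potential change of a single path-balance call on a short path directly in terms of $\Psi$, without further recursion.

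Summing everything: the telescoping decreases $-k/2, -k/4, \ldots$ collected from the successive multipasses pay for the real cost $k$ of the original search path down to $\tau$; the rotations of successive medians contribute only $O(\log n / \log\log\log n)$ each; and Inequality~(1) handles the tail. The main obstacle is controlling the aggregate additive overhead $O(\log n \cdot \log\log n/\log\log\log n)$ accumulated over the $\Theta(\log(k/\tau))$ recursive levels: a naive sum would overshoot the claimed bound. The point of bringing in Inequality~(1) of Balasubramanian and Raman is exactly this — choosing $\tau$ so that Inequality~(1) covers a sufficiently large prefix of the recursion in one shot, so that only a constant number of the per-level additive overheads effectively survive, leaving a total amortized cost of $O(\log n \cdot \log\log n/\log\log\log n)$ as claimed.
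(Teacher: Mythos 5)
Your proposal is circular: you list Theorem~\ref{multipassWarmup} itself as one of the two ingredients ("Theorem~\ref{multipassWarmup} applies verbatim and yields \dots"), i.e.\ you assume exactly the statement you are asked to prove. What you have actually sketched is the proof of Theorem~\ref{thmWarmUp}, the warm-up bound for \emph{path-balanced BSTs}, which the paper obtains by combining Theorem~\ref{multipassWarmup} with Lemma~\ref{mtrWarmUp} and Inequality~(1) of Balasubramanian and Raman (Lemma~\ref{RammanWeak}). The statement at hand, however, is the pairing-heap bound of Fredman et al.\ itself: it concerns delete-min in multipass pairing heaps and must be proved by analysing the links of the pairing rounds directly under the potential $\Psi$ of Definition~\ref{sum-of-logs}; no such analysis (potential change of a single link, behaviour across a round, or across rounds) appears anywhere in your write-up, so the proposal contains no proof of the claimed statement.

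The missing argument runs roughly as follows (this is Fredman et al.'s proof, reproduced by the paper). Work with the unscaled potential $\Psi' = \sum_{x}\log s(x)$ and let $s_1,\dots,s_k$ be the subtree sizes (binary view) of the roots at the start of a round; the link of $x_i$ with $x_{i+1}$ changes $\Psi'$ by $t_i = \log(s_i - s_{i+2}-1) - \log s_{i+1} \le \log(s_i/s_{i+2})$, so the changes within one round telescope to at most $\log n$, and over the $\lceil \log k\rceil$ rounds $\Delta\Psi' = O(\log k \cdot \log n)$; dividing by $\log\log\log n$ this already gives the claim whenever $k = O(\log n \cdot \log\log n / \log\log\log n)$. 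For larger $k$ one must exhibit a potential \emph{decrease}: by a Markov argument at least $k/4$ indices $i$ in the first round satisfy $\log(s_i/s_{i+2}) \le 4\log n/k$, and for each such $i$ one has $2^{t_i} \le s_i/s_{i+2} - 1 = O(\log n / k)$, hence $t_i \le -\log\bigl(k/(c\log n)\bigr)$; summing, $\Delta\Psi' \le \log k\cdot\log n - (k/4)\log\bigl(k/(c\log n)\bigr)$, and after scaling by $1/\log\log\log n$ the amortized cost in this regime is $O(1)$. None of this can be replaced by the BST-side recursion you describe. (As a secondary point, even read as a proof of Theorem~\ref{thmWarmUp} your closing step is off: the additive overheads do not need to be confined to "a constant number of levels" by a clever choice of $\tau$; rather, with $\tau = \log n\cdot\log\log n/\log\log\log n$ every recursive call on a path of length $k_i > \tau$ has nonpositive amortized contribution on its own, since its saving of $k_i/2$ (after scaling the potential by $2$) dominates its $O(\log n\cdot\log\log n/\log\log\log n)$ overhead, and only the single final call on a short path is charged via Inequality~(1).)
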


In order to analyse a recursive call on a path longer than a threshold $\tau = \log{n} \cdot \log\log{n} / \log\log\log{n} $ we use the following lemma regarding the potential change caused by rotating a median to the root. 

\begin{lemma}\label{mtrWarmUp}
Rotating the median to the root increases $\Psi^{'}$ by $O(\log{n})$.
\end{lemma}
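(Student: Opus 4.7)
The plan is to pin down exactly which subtree sizes change during the rotation sequence and then bound the net logarithmic change by a simple telescoping argument.

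Let $N$ be the size of the (sub)tree on which the current recursive call operates; note $N \leq n$. Let $r = v_0, v_1, \dots, v_k = m$ be the nodes on the path from its root $r$ down to $m$. The first observation I would make is that the only nodes whose subtree sizes in $T$ can change are $m$ and the old ancestors $v_0, \dots, v_{k-1}$: every rotation in the sequence takes place between $m$ and one of the $v_i$'s, so any node not in $\{m, v_0, \dots, v_{k-1}\}$ keeps exactly the same set of descendants and hence the same $s(\cdot)$ value (in particular, nodes outside the current subtree are unaffected because the current subtree's total size is preserved, and nodes in subtrees hanging off the $v_i$--$m$ path are moved only as blocks).

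Next I would analyze $m$ and the $v_i$'s using a key-range viewpoint. In a BST each subtree corresponds to an interval of keys; a single rotation expands the key range of the node that is lifted and shrinks the key range of the node that is lowered, while leaving every other key range unchanged. Applied to any sequence that brings $m$ to the root of the current subtree, $m$'s key range widens monotonically to the full range of that subtree, so $s_{\text{new}}(m) = N$; each $v_i$'s key range, on the other hand, is touched exactly once (at the step where $m$ passes it), strictly shrinks then, and is never touched again, giving $s_{\text{new}}(v_i) \leq s_{\text{old}}(v_i)$.

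Putting these together,
\begin{align*}
\Delta \Psi' \;=\; \bigl(\log N - \log s_{\text{old}}(m)\bigr) \,+\, \sum_{i=0}^{k-1}\bigl(\log s_{\text{new}}(v_i) - \log s_{\text{old}}(v_i)\bigr) \;\leq\; \log N \;\leq\; \log n,
\end{align*}
so $\Delta \Psi' = O(\log n)$, as claimed. The one step that needs a little care is the key-range argument, which guarantees that the $v_i$-contributions are non-positive regardless of the precise order of single rotations used to implement the lift; everything else is routine bookkeeping.
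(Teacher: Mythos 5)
Your proposal is correct and follows essentially the same route as the paper's proof: only the median and its former ancestors on the path can change subtree size, the ancestors' subtrees can only shrink (so they contribute non-positively to $\Delta\Psi'$), and the median's increase is bounded by $\log n$. Your key-range justification for why the ancestors only shrink is a more careful spelling-out of what the paper asserts by inspection of its figure, but it is not a different argument.
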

\begin{proof}
Note that the only nodes that change their potential are the median and the nodes above it on the search path. Among these nodes only the median may increase its potential, as the subtrees of all other nodes cannot increase, see Figure~\ref{rotation}. Therefore, since the potential of each node is bounded by $O(\log{n})$, the result follows.
\end{proof}
In this section we assume that a short path is converted into a balanced binary tree in a single step. We analyse the change in $\Psi$ using Inequality (1) of Balasubramanian and Raman~\cite{pathbalance}.
We restate this inequality here, stressing that we use it only for the warm-up proof of this section and not in the proof of the main results of this paper. 

\begin{lemma}[{{\cite[Ineq.\ (1)]{pathbalance}}}]\label{RammanWeak}
The change in $\Psi^{'}$ caused by converting $\P^x$ into a balanced tree is
$$\Delta \Psi^{'} \leq 3\log{n}\log{(\ell(\P^x)+1)} + 3\log{n} + 1.$$
\end{lemma}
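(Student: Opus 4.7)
The plan is to view the conversion of $\P^x$ into a balanced tree as a two-phase rotation sequence, and to bound $\Delta \Psi'$ by identifying at most $\lceil \log(\ell(\P^x)+1) \rceil + O(1)$ ``rounds,'' each contributing at most $\log n$ to $\Psi'$. Matching the claimed bound $3\log n \log(\ell+1) + 3\log n + 1$ will then follow, with coefficient $3$ absorbing ceiling/rounding slack and an $O(\log n)$ additive contribution from the initial phase.

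Phase 1 rotates the accessed node $x$ to the root via $\ell(\P^x)-1$ single-edge rotations. The telescoping observation is that only $x$'s subtree grows monotonically over the sequence---from $s_{\mathrm{old}}(x)$ up to $n$---while all other path nodes' subtree sizes can only decrease. Summing the per-rotation log-changes gives $\Delta \Psi'_1 \leq \log n - \log s_{\mathrm{old}}(x) \leq \log n$. As observed in \S\,\ref{sec:pb}, after Phase 1 the search path has been split into two monotone sub-paths: the right-spine of $x$'s left subtree and the left-spine of $x$'s right subtree.

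Phase 2 applies the multipass transformation to each of these two monotone sub-paths. The crucial per-round bound is: a single pairing round on a monotone path with (non-increasing) path-subtree sizes $m_1 \geq m_2 \geq \cdots$ contributes at most $\log n$ to $\Psi'$. This follows from a direct computation of the per-rotation change at pair $(v_{2k-1},v_{2k})$ as $\log((1+a_k+b_k)/(1+b_k+c_k)) \leq \log(m_{2k-1}/m_{2k})$ for the appropriate subtree sizes $a_k,b_k,c_k$; summing across the round yields $\log \prod_k m_{2k-1}/m_{2k}$, which telescopes by monotonicity of the $m_i$'s to at most $\log(m_1/m_{\mathrm{last}}) \leq \log n$. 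With at most $\lceil \log(\ell+1) \rceil$ rounds per sub-path and two sub-paths, $\Delta \Psi'_2 \leq 2\log n \cdot \lceil \log(\ell+1) \rceil$, and adding Phase 1's contribution yields the claim.

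The main obstacle is that Phases 1--2 as described produce the \emph{simplified} balanced tree of \S\,\ref{sec:pb} (with $x$ at the root), rather than the canonical path-balance tree (with median $m$ at the root). For the downstream use of the inequality in the warm-up proof of Appendix~\ref{section-raman}, the simplified version already suffices. To handle the canonical form directly, one would use $m$ as the pivot in Phase 1---the telescoping argument only requires that the pivot's subtree grow monotonically while the others' shrink---and then reduce the two subtrees on either side of $m$ to monotone sub-paths via a brief straightening pass whose additional $\Psi'$-contribution can be absorbed into the stated additive $3\log n+1$ slack.
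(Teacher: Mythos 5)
Your Phase‑1/Phase‑2 decomposition is sound for the \emph{simplified} variant: rotating the pivot to the root costs at most $\log n$ because only the pivot's subtree grows, and the per‑round telescoping bound of $\log n$ for a pairing round on a monotone path is correct. The genuine gap is in the bridge to the statement actually being proved, which concerns the canonical transformation into a complete balanced tree with the median at the root. Your proposed fix --- rotate $m$ to the root and then ``straighten'' the non‑monotone half $Q^x$ into a monotone path at a cost absorbed by the additive $3\log n+1$ --- fails: converting a general path into a monotone path can increase $\Psi^{'}$ by $\Omega(\ell\cdot\log (n/\ell))$, and no choice of spine direction avoids this. Concretely, let the path nodes in key order be $u_1<\dots<u_t$, arranged as $u_t$ on top with a hanging subtree of size $M\gg t$ of keys above $u_t$ as its right child, $u_1$ as its left child carrying a hanging subtree of size $M$ of keys below $u_1$, and $u_2,\dots,u_{t-1}$ descending as a right spine below $u_1$. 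Before straightening only the top two nodes have subtrees of size $\Theta(M)$, so the path contributes about $2\log M+\log((t-2)!)$ to $\Psi^{'}$; in \emph{either} monotone configuration one of the two heavy subtrees ends up below the deepest path node, so \emph{every} path node's subtree has size at least $M$ and the contribution becomes at least $(t-1)\log M$, an increase of $\Omega(t\log(M/t))=\Omega(\log^2 n)$ for $t=\Theta(\log n)$, $M=\Theta(n)$ --- already exceeding the entire claimed bound. (The subsequent multipass happens to release most of that potential again, but your per‑round estimate is a one‑sided upper bound of $+\log n$ and cannot certify any decrease, so the phased accounting does not close.) Your other escape --- that the simplified variant ``suffices downstream'' --- changes the algorithm: Appendix~\ref{section-raman} exists precisely to analyse the original heuristic, and its final recursive call applies the inequality to a general, non‑monotone path.

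For contrast, the argument this lemma rests on is a \emph{static} one: write $\Delta\Psi^{'}$ as the level‑by‑level sum of the $\lceil\log(\ell(\P^x)+1)\rceil$ positive terms of the balanced tree minus the negative terms of the original path, charge a constant number of terms per level to $\log n$, and match every remaining positive term to a larger negative term. For a monotone path one charge per level suffices; for a mixed path one must additionally charge the last term of each level and the term straddling the key‑boundary at the median, which is exactly where the coefficient $3$ in the statement comes from. That mixed‑path matching is the piece your telescoping route has no substitute for.
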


By the definition of $\Psi$ and $\Psi^{'}$, Lemma~\ref{RammanWeak} implies that
$$\Delta \Psi \leq \left( 3\log{n}\log{(\ell(\P^x)+1)} + 3\log{n} + 1 \right) / \log{\log{\log{n}}}.$$
We now prove Theorem~\ref{thmWarmUp}.
\begin{proof}[Proof (Theorem~\ref{thmWarmUp})]
Let $k=\ell(\P^x)$ be the length of the search path ($k$ is also the real cost). Assume first that $ k\leq \tau$. By Lemma~\ref{RammanWeak} we have
$$k + \Delta \Psi \leq T +  \left( 3\log{n}\log{(T+1)} + 3\log{n} + 1 \right) / \log{\log{\log{n}}} =
O\left( \log{n} \cdot \log{\log{n}} / \log{\log{\log{n}}} \right),$$
which is the required \emph{amortized} time.

Assume now $k > \tau$. We look at the potential change caused by the first recursive call (on the original search path of length $k$).

By Lemma~\ref{mtrWarmUp}, rotating the median $m$ to the root causes a potential increase of $O(\log{n})$. The path splits into $\P^{>m}$ and $Q^{x}$, of which $\P^{>m}$ is monotone. By Theorem~\ref{multipassWarmup} and the definition of \emph{amortized} time, performing a multipass transformation on $\P^{>m}$ results in a decrease in $\Psi$ of $k/2 - O \left( \log{n} \cdot \log{\log{n}} / \log{\log{\log{n}}} \right)$. Scaling the potential by $2$, we obtain a decrease of $k - O \left( \log{n} \cdot \log{\log{n}} / \log{\log{\log{n}}} \right)$ in potential. By the same argument, all recursive calls on search paths of length larger than $\tau$ can only decrease the potential.

To conclude, all recursive calls (up until the last call on a path of length less than $\tau$) decrease the potential by at least $k - O \left( \log{n} \cdot \log{\log{n}} / \log{\log{\log{n}}} \right)$. Also, as we have already analysed, the last call on a path of length at most $\tau$ increases $\Psi$ by at most $O \left( \log{n} \cdot \log{\log{n}} / \log{\log{\log{n}}} \right)$, yielding the required amortized time.
\end{proof}

\begin{figure}
    \begin{center}
\includegraphics[width=0.9\textwidth]{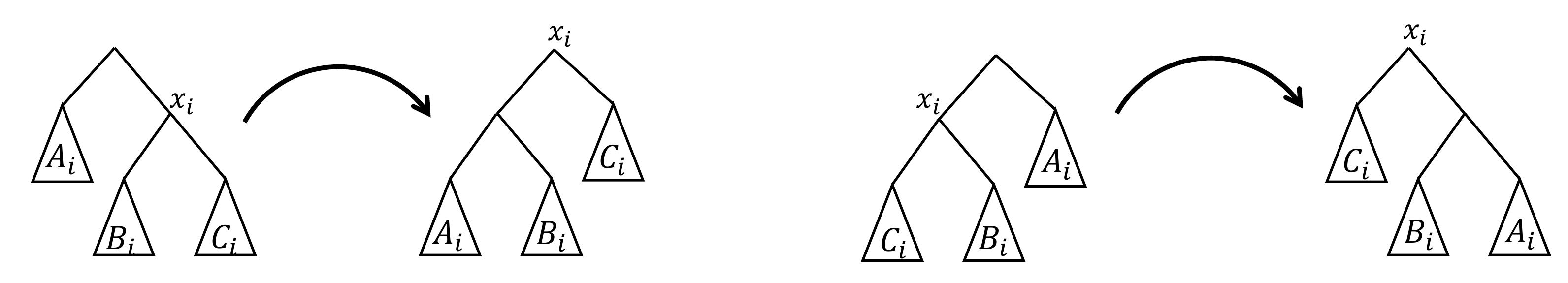}
    \end{center}
\caption{Single rotation of a median $x_i$. Note that only $x_i$ may increase its potential.
\label{rotation}}
\end{figure}

\newpage

\section{Proof of Lemma~\ref{multi_median}}
\label{last_proof}

\begin{figure}[H]
    \begin{center}
\includegraphics[width=0.6\textwidth]{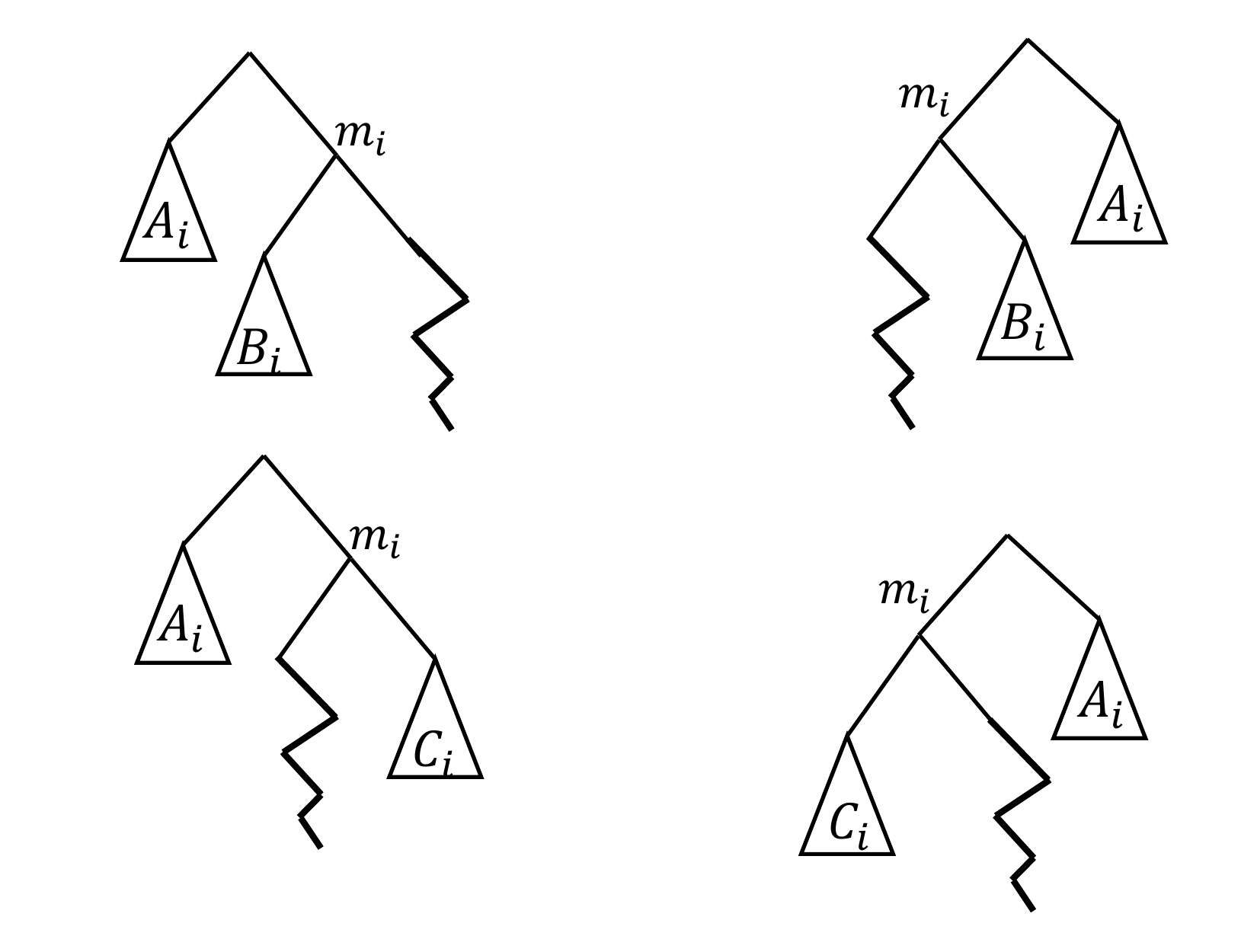}
    \end{center}
\caption{Four cases according to the direction in which the search path continues.
\label{multimedian-cases}}
\end{figure}

\begin{proof}
Let $m_1,...,m_j$ be the medians of all recursive calls (where $j < \log{\log{n}}$). We analyse separately the change in potential due to nodes on $\P$ and nodes off $\P$.

First, we analyse the change in potential due to nodes off $\P$. The only nodes off $\P$ that can increase their potential due to rotating $m_i$ to the root (of the subtree rooted at the shallowest node of $\P^i$) are the children of $m_i$. Each median has at most one such child which is off $\P$ before the transformation. This child is either the root of $B_i$ or the root of $C_i$, see Figure~\ref{multimedian-cases}.
Thus, at most $j < \log{\log{n}}$ nodes off $\P$ increase their potential. Since the potential of each node is bounded by $f(n) = O\left( \log{n} / \log^3{\log{n}} \right)$, we get that the total increase in potential due to nodes off $\P$ is $O(\log{n})$.

We now analyse nodes on $\P$. We look at the potential after the transformation. Let $\Phi_i$ be the potential of the nodes on $\P^{>m_{\ell-i+1}}$ after the transformation. Let $a_1\leq ... \leq a_{\ell (\P^{>m_{\ell -i +1}})}$ be the sizes of the subtrees rooted at the nodes of $\P^{>m_{\ell-i+1}}$ bottom to top, we get that
\begin{align*}
\Phi_i &= \sum^{\ell (\P^{>m_{\ell -i +1}})-1}_{k=1} f \left( \frac{a_{k+1}}{a_k} \right) =
\sum^{\ell (\P^{>m_{\ell -i +1}})-1}_{k=1} g \left( \log{ \frac{a_{k+1}}{a_k} }\right) \leq \ell (\P^{>m_{\ell -i +1}}) \cdot g \left(\frac{\log{n}}{\ell (\P^{>m_{\ell -i +1}})} \right) \\ &= \frac{\log{n}}{\log^3 \left(4 + \frac{\log{n}}{\ell (\P^{>m_{\ell -i +1}})} \right) } \leq
\frac{\log{n}}{\log^3 \left(4 + \frac{\log{n}}{2^i} \right) } \leq
\frac{\log{n}}{\left( \log{\log{n}}  - i   \right)^3}.
\end{align*}

When we sum this up over all of the monotone paths $\P^{>m_i}$ we get $O(\log{n})$.
Thus, the total change in potential caused by the entire transformation is $O(\log{n})$. 
\end{proof}

\newpage
\if 0
\section{Proofs of Theorem~\ref{multipassWarmup} and Lemma~\ref{RammanWeak}}\label{firstAppendix}
We state a result of Fredman, Sedgewick, Sleator, and Tarjan~\cite{pairing} (Theorem~\ref{multipassWarmup}) and a result of Balasubramanian and Raman~\cite{pathbalance} (Lemma~\ref{RammanWeak}), recalling their proofs for convenience. We stress that these results are only used in our warm-up proof and not in the proof of the main results of this paper.

\begin{proof}[Proof of Theorem~\ref{multipassWarmup}]
We want to estimate $\Psi(T^{'})-\Psi(T) =  \Psi^{'}(T^{'}) / \log{\log{\log{(n-1)}}} - \Psi^{'}(T)/\log{\log{\log{n}}} = \left( \Psi^{'}(T^{'}) - \Psi^{'}(T) \right) / \log{\log{\log{n}}} + O(1)$, where $T^{'}$ is the binary tree after performing the \emph{delete-min}.  Thus, it is enough to estimate $\Psi^{'}(T^{'}) - \Psi^{'}(T)$.

Let $x_1,...,x_k$ be the nodes in a pairing level in a \emph{left-to-right} order. Let $s_i=s(x_i)$. Thus, the change in potential caused by linking $x_i$ with $x_{i+1}$ is $t_i = \log{\left( s_i - s_{i+2} - 1 \right)} - \log{s_{i+1}}$. Therefore, using a telescoping sum argument we get that
\begin{align}
\label{Fredman_telescoping}
\sum_{i \; odd}{t_i} = \sum_{i \; odd}{\log{\frac{s_{i} - s_{i+2}-1}{s_{i+1}}}} \leq  \sum_{i \; odd}{\log{\frac{s_{i}}{s_{i+2}}}} \leq \log{n}.
\end{align}
This results in an upper bound of $O\left( \log{n} \right)$ over the change in potential caused by a single pairing level. Thus, we conclude that $\Psi^{'}(T^{'}) - \Psi^{'}(T) = O\left(  \log{k} \cdot \log{n} \right)$, as there are $\lceil \log{k} \rceil$ pairing levels.

The above achieves the desired amortized time for any $k = O \left( \log{n} \cdot \log{\log{n}} / \log{\log{\log{n}}}  \right)$. So assume that $k\geq \log{n} \cdot \log{\log{n}} / \log{\log{\log{n}}}$. We account now for the decrease in potential (i.e., negative $t_i$ values).

Using the Markov inequality for (\ref{Fredman_telescoping}), it follows that there are at least $k/4$ terms in (\ref{Fredman_telescoping}) which are less than $\frac{4\log{n}}{k} = O(1)$ (since $k=\Omega\left( \log{n} \right)$).

Now, fix one such $i$ for which $\log{\frac{s_i}{s_{i+2}}} \leq \frac{4\log{n}}{k}$. Since $t_i = \log{\left( s_i - s_{i+2} - 1 \right)} - \log{s_{i+1}}$ we have:
\begin{align}
\label{Fredman_2^t_bound}
2^{t_i} \leq \frac{s_i-s_{i+2}}{s_{i+1}} \leq \frac{s_i-s_{i+2}}{s_{i+2}} = \frac{s_i}{s_{i+2}} - 1.
\end{align}
Then, using the approximation $2^x = 1 + O(x)$ ($2^x - 1 = O(x)$) with $x = \log{\frac{s_{i}}{s_{i+2}}}$ we get that the right hand side of (\ref{Fredman_2^t_bound}) is $O\left( \log{\frac{s_i}{s_{i+2}}} \right)$. So, $2^{t_i} \leq c \cdot \frac{\log{n}}{k}$ for some constant $c$.

Thus, $t_i \leq -\log{\left( k/\left(c \cdot \log{n} \right) \right)}$. Therefore, $\Psi(T^{'}) - \Psi(T) \leq \log{k} \cdot \log{n} - k/4 \cdot\log{\left( k/\left(c \cdot \log{n} \right) \right)}$. Scaling the above by $1/\log{\log{\log{n}}}$, we get that the amortized time for the case $k \geq \log{n} \cdot \log{\log{n}} / \log{\log{\log{n}}}$ is $O(1)$, finishing the proof.
\end{proof}

\begin{proof}[Proof of Lemma~\ref{RammanWeak}]
We distinguish the case in which the search path is a \emph{monotone} path, and the case in which the search path is a \emph{mixed} path.

{\bf Monotone path.} Assume that the search path $\P^x$ follows right pointers. Denote $h = \ell(\P^x)$. Let $s_1,...,s_h$ be the subtree sizes of the trees hanging to the left of the search path. We also denote by $s_{h+1}$ the size of the subtree rooted at the right child  of $x$. Let $x_i=s_i+1$. We obtain:
\begin{align*}
\Delta\Phi \leq
& \log{\left(x_1 + x_2\right)} + \log{\left(x_3 + x_4\right)} +...\\
& + \log{\left(x_1 + x_2 + x_3 + x_4\right)} + \log{\left(x_5 + x_6 + x_7 + x_8\right)} +...\\
& +...\\
& + \log{n}\\
& - \log{\left(x_1 + x_2 + ... + x_{h+1} - 1\right)} - \log{\left(x_2 + ... + x_{h+1} - 1\right)} -...\\
& - \log{\left(x_h + x_{h+1} - 1\right)}.
\end{align*}

In words, the potential-change of the transformed path (positive terms) were split into $\lceil \log{(h+1)} \rceil$ levels. We bound by $\log{n}$ each positive term of the form $\log{\left(x_1 +\cdots \right)}$ (i.e., first subtree in each level of the resulted tree), so all these terms together sum to at most $\log{n} \lceil \log{(h+1)} \rceil$. We then match each (of the remaining) positive term of the form $\log{\left(x_k + \cdots \right)}$ for $k>1$, with a larger negative term of the form $\log{\left(x_j + \cdots + x_{h+1} - 1\right)}$ ($j \leq k$) resulting a negative difference.

To prove the existence of such a matching assume first that $h=2^a - 1$, thus, the transformed search path is a full binary tree. Let $k\neq 1$. We show that there are at most $k-1$ positive terms of the form $\log{\left(x_j + ... \right)}$, $1<j\leq k$. First, note that there are at most $\lceil k/2 \rceil$ positive terms of the form $\log{\left(x_j + x_{j+1}\right)}$, $1\leq j\leq k$. Similarly, in next level there are at most $\lceil k/4 \rceil$ positive terms of the form $\log{\left(x_j + x_{j+1} + x_{j+2}+ x_{j+3} \right)}$, $1\leq j\leq k$. Thus, using the same argument on the next levels, after omitting the first term in each level, there are indeed at most $k-1$ positive terms of the form $\log{\left(x_j + ... \right)}$, $1<j\leq k$. Noting that there are exactly $k$ negative terms of the form $\log{\left(x_j + ...x_{h+1} -1 \right)}$, $j\leq k$ proves the existence of the needed matching.

For a $h$ that is not of the above form a similar argument works. Note that in this case the leaves are positioned at the two bottom levels (Figure~\ref{figWR}). At the bottom level there are only positive terms of the form $\log{\left(x_j + x_{j+1}\right)}$. Let $r$ be the number of these terms with $j\leq k$. On the second level there can be at most $\lceil r/2 \rceil$ positive terms of the form $\log{\left(x_j + x_{j+1} + x_{j+2}+ x_{j+3} \right)}$, $1\leq j\leq k$ (with at most one term of the form $\log{\left(x_j + x_{j+1} + x_{j+2} \right)}$, as in Figure~\ref{figWR}). Let $s$ be the number of terms of the form $\log{\left(x_j + x_{j+1}\right)}$. Thus, naming $t = \lceil r/2 \rceil + s$, we note that as in the previous case, looking at the third level, the number of terms of the form $\log{\left(x_j + ... \right)}$, $1\leq j\leq k$ is at most $\lceil t/2 \rceil$. Therefore, after omitting the first term of each level, the number of terms of the form $\log{\left(x_j + ... \right)}$, $1\leq j\leq k$ is at most $(r-1)+(2t-2)\leq 2r+2s-2\leq k-1$, because $2r+2s$ represents all $x_j$'s that are within terms of the form  $\log{\left(x_j + x_{j+1}\right)}$, $1\leq j \leq k$.

Thus, we matched each of the remaining positive terms to a negative term larger in absolute value, concluding this case.

{\bf Mixed path.} Note that in this case the positive terms are the same as in the previous case, but the negative terms are different.
Denote by $1,2,...,h$ the values of the search path nodes in an increasing order. Let $g_1,g_2,...,g_{h+1}$ be the corresponding subtree sizes of the ``hanged'' trees (similar to the previous case). We again define $x_i=g_i+1$, and let $t$ denote the index such that $g_1,...,g_t$ are left subtrees and $g_{t+1},...,g_{h+1}$ are right subtrees.

We start by again breaking the positive terms into $\lceil \log{(h+1)} \rceil$ levels. In each such level there is at most one term containing $x_t$ and $x_{t+1}$. We bound this term together with the first and last terms of each level obtaining a bound of $3\log{n} \cdot \lceil \log{(h+1)} \rceil$ for all levels. we again look for a matching between the remaining positive terms with larger negative terms.

How do the negative terms look like? The first term is again $\log{\left(x_1 + x_2 + ... + x_{h+1} - 1\right)}$. If the next node of the search path is a right child, then the second negative term is $\log{\left(x_2 + ... + x_{h+1} - 1\right)}$, and otherwise $\log{\left(x_1 + x_2 + ... + x_h - 1\right)}$. Note that the number of negative terms of the form $\log{\left(x_a + ... + x_{b} - 1\right)}$ for $a < k \leq t < t+1 \leq m < b$ is exactly $(k-1)+(m-1)$. Thus, if we split the positive terms into two groups: terms with indices $\leq t$ and terms with indices $\geq t+1$ then, as in the previous case, we can match each of the remaining positive terms to a larger negative term, finishing the proof.

\end{proof}

\begin{figure}
    \begin{center}
\includegraphics[width=0.8\textwidth]{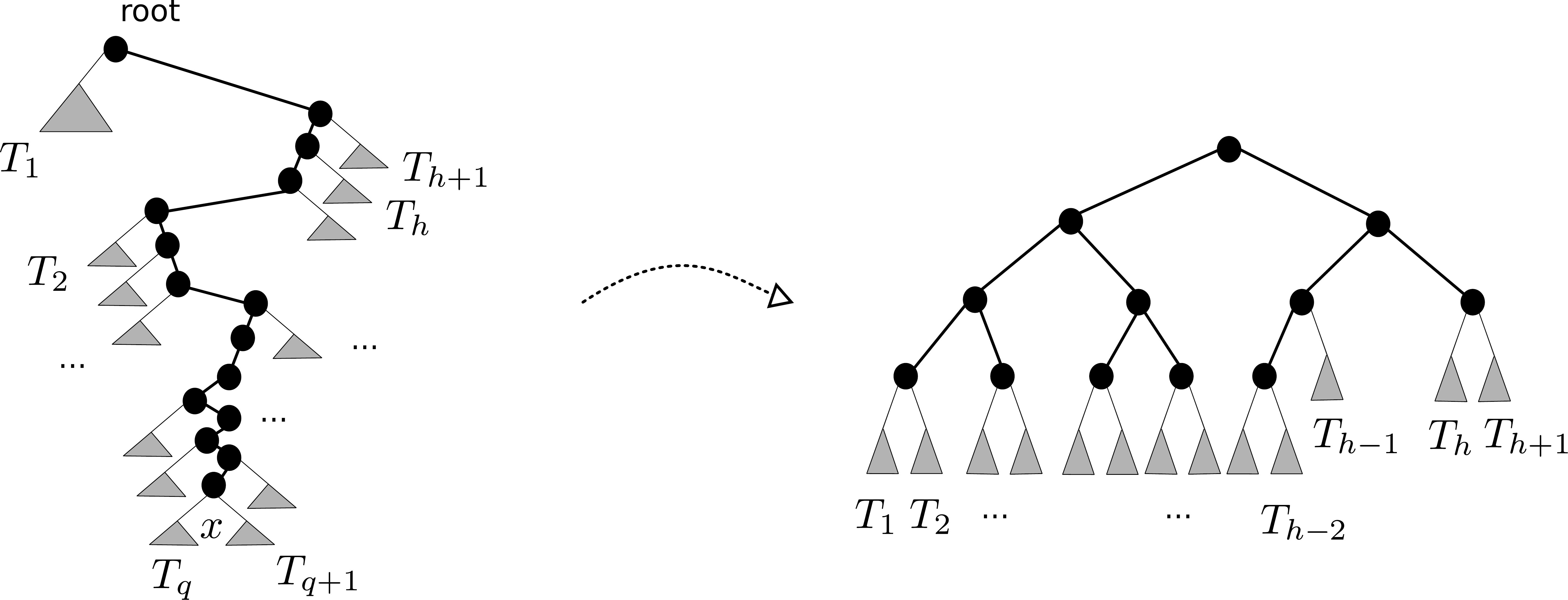}
    \end{center}
\caption{View of the transformed search path into a balanced tree.\\ ``Hanged'' subtrees are now positioned under the leaves.
\label{figWR}}
\end{figure}

\fi

\section{Proof of Theorem~\ref{main-thm} with a modified potential}\label{second-Appendix}
We show that the results of \S\,\ref{sec:mp} also hold with the potential function:
$$\phi(x) = \frac{H(x)}{\log^{3}{(4+H(x))}}.$$

The proofs closely follow the previous ones. We observe that the exact exponent of the potential function (2 or 3) is crucially used only in the proof of Lemma~\ref{lemc} (the remaining proofs go through essentially unchanged with the new potential function). We first update the constant $\gamma$ to $\gamma = 3000^2$.

\begin{lemma}\label{applemc}
Links of type-(3) can only decrease the potential.
\end{lemma}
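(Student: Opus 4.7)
The plan is to re-run the case analysis of the original proof of Lemma~\ref{lemc} with the exponent $2$ replaced by $3$ throughout, spending the enlarged constant $\gamma=3000^2$ to absorb the corresponding weakening of the derivative bound. First I would restate and re-verify the technical facts collected in Lemma~\ref{lem123} for the new $f(x)=\log x/\log^3(4+\log x)$ and $g(x)=x/\log^3(4+x)$. Parts (i), (ii), (iv), (v) depend only on monotonicity, concavity, and the slope at $x=1$; they go through verbatim, since raising the exponent in the denominator only makes $f$ grow more slowly (in particular $f'(1)<1.5$ is preserved). Part (vi) is numerically verified on the enlarged box $1\le a,b\le \gamma$ by computer, and then extended to $a,b>\gamma$ using Part (i); the bound is actually strictly better than in the original. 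Part (iii) takes the form $f'(x)\ge \frac{c'}{x\log^3(4+\log x)}$ for $x\ge \gamma$, which follows from the same explicit differentiation: the correction subtracted from the leading term is smaller by a factor $\Theta\!\left(1/[(4+\log x)\log(4+\log x)]\right)$, and for $x\ge \gamma=3000^2$ this is already below $1/50$, so a concrete constant $c'$ (slightly smaller than $0.335$) can be extracted.

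Armed with this updated toolkit, the case analysis of Lemma~\ref{lemc} is re-run in the same order: type-(3A) splits into $y\le\gamma$, $x\ge y\ge\gamma$, and $y>x$; type-(3B) splits into $x>y$ and $x\le y$. In each subcase the upper bounds on the positive terms come from Lemma~\ref{lem123}(i) and are of the form $O(1/y)+O(1/y^2)$, independent of the exponent used in the denominator of $f$; the saving term, obtained from Lagrange's theorem together with the derivative bound, is now
\[
\Omega(1)\cdot \frac{1}{\log^3\!\left(4+\log(2+\gamma y)\right)}
\]
in place of the original $\Omega(1)/\log^2(2+\log(2+\gamma y))$. The first subcase ($y\le\gamma$) uses only Lemma~\ref{lem123}(vi) and elementary estimates and therefore only needs the updated Part (vi); no use of the derivative bound is made there.

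The main obstacle is checking that the dominant saving inequality, namely
\[
\frac{c'}{\log^3\!\left(4+\log(2+\gamma y)\right)}\ \ge\ \frac{O(1)}{y}\qquad \text{for all }y\ge \gamma,
\]
and its few variants with $y^2$ or $\gamma y$ in place of $y$, continue to hold. This is exactly where the increase of $\gamma$ from $3000$ to $3000^2$ is spent: the right-hand side of the original inequality was shrunk by a factor $\log(\cdot)$, so $y$ must be larger by the same factor before the bound bites. Because $\log^3(\log y)$ grows much more slowly than any polynomial in $y$, the value $\gamma=3000^2$ provides large slack, and the remaining work is purely mechanical re-bookkeeping of constants. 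No new conceptual idea is required; the only real risk is arithmetic slips in the updated constants, which can be double-checked by computer on the finitely many boundary cases.
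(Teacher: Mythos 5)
Your proposal is correct and follows essentially the same route as the paper's own proof in Appendix~\ref{second-Appendix}: update $\gamma$ to $3000^2$, re-derive the technical facts of Lemma~\ref{lem123} for the new $f$ and $g$ (in particular the derivative bound $f'(x)\ge 0.32/(x\log^3(4+\log x))$ and the numerically verified analogue of part (vi)), and then mechanically re-run the same five subcases of the type-(3A)/(3B) analysis, with the enlarged $\gamma$ absorbing the extra logarithmic factor in the saving term. Nothing further is needed.
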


We reprove the technical lemmas we need in order to prove Lemma~\ref{applemc}.

\begin{lemma}\label{applem1}
For every $x \geq 1,y \geq 0$, $f(x+y)\leq f(x)+1.5y$.
\end{lemma}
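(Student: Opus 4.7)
The plan is to adapt the short proof of Lemma~\ref{lem123}(i) almost verbatim to the new potential. The case $y = 0$ is trivial, so assume $y > 0$. By the mean value theorem, $f(x+y) = f(x) + y \cdot f'(c)$ for some $c \in [x, x+y]$, so it suffices to show that $f'(c) \leq 1.5$ for every $c \geq 1$.

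I would bound $f'$ directly rather than invoke concavity (which, with a cubed-logarithm denominator, is more delicate and is not needed here since the target constant $1.5$ is very loose). Differentiating $f(x) = \log x / \log^3(4+\log x)$ yields
$$f'(x) \;=\; \frac{1}{x \ln 2 \cdot \log^3(4+\log x)} \;-\; \frac{3 \log x}{x(4+\log x)(\ln 2)^2 \cdot \log^4(4+\log x)}.$$
For $x \geq 1$ the second term is nonpositive (since $\log x \geq 0$), and $\log(4+\log x) \geq \log 4 = 2$, so
$$f'(x) \;\leq\; \frac{1}{x \ln 2 \cdot \log^3(4+\log x)} \;\leq\; \frac{1}{\ln 2 \cdot 2^3} \;=\; \frac{1}{8 \ln 2} \;<\; 1.5.$$
Substituting $c \geq 1$ for $x$ in the bound above and plugging into the Lagrange identity yields $f(x+y) - f(x) \leq 1.5\, y$, as required.

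The main (and only) obstacle is carrying out the derivative calculation and the positivity-dropping step correctly; both are routine. Since $1/(8\ln 2) \approx 0.18$ is an order of magnitude smaller than $1.5$, there is ample slack, and no optimization of $\gamma$ or of the constant $4$ inside the denominator is needed for this particular inequality.
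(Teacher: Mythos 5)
Your proof is correct and follows essentially the same route as the paper: the mean value theorem combined with the bound $f'(c)\leq \tfrac{1}{8\ln 2}<1.5$ for $c\geq 1$ (the derivative computation and the sign of the dropped term both check out). The only difference is that you bound $f'$ directly by discarding the nonpositive term and using $\log(4+\log x)\geq 2$, whereas the paper invokes concavity of $f$ to reduce to $f'(1)=\tfrac{1}{8\ln2}$; your variant has the minor advantage of not relying on concavity of the modified $f$, which the paper asserts but does not re-verify for the cubed-denominator potential.
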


\begin{proof}
Due to the concavity of $f$ and the fact that
$f'(1) = \frac{1}{8\ln{2}} < 1.5$. 
\end{proof}

\begin{lemma}\label{applem2}
For every $x\geq \gamma$ it holds that
$$f'(x) \geq \frac{0.32}{x\log^3(4+\log{x})}.$$
\end{lemma}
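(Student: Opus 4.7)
The plan is to mimic the proof of Lemma~\ref{lem123}(iii) with the exponent $3$ (instead of $2$) and the additive constant $4$ (instead of $2$) inside the inner logarithm, adjusting the numerical verification to accommodate the new threshold $\gamma = 3000^2$.

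First I would differentiate $f(x) = \log x / \log^3(4+\log x)$ directly via the quotient rule, computing
$$f'(x) = \frac{1}{x \log^3(4+\log x)} \left[\frac{1}{\ln 2} - \frac{3\log x}{(\ln 2)^2 (4+\log x)\log(4+\log x)}\right],$$
after cancelling a factor of $\log^2(4+\log x)$ in the numerator and denominator. Thus the claim is equivalent to showing that the bracket is at least $0.32$ for every $x \geq \gamma$.

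Next I would show that the function $h(t) := t/\bigl((4+t)\log(4+t)\bigr)$ is non-increasing on $[\log \gamma, \infty)$, so that the bracket attains its minimum over $x \geq \gamma$ precisely at $x = \gamma$. A direct differentiation shows that $h'(t) < 0$ is equivalent to $4\ln(4+t) < t$, and this inequality already holds at $t = \log \gamma = 2\log 3000 \approx 23.1$ (where $4\ln(27.1) \approx 13.2 < 23.1$) and continues to hold for all larger $t$ since the left side grows slower than the right side (derivatives $4/(4+t) < 1$). Hence it suffices to plug $x = \gamma$ into the bracket.

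The final step is a numerical check: at $x = \gamma$ one has $\log x \approx 23.1$, $\log(4+\log x) \approx 4.76$, and
$$\frac{3 \log x}{(\ln 2)^2(4+\log x)\log(4+\log x)} \;\leq\; \frac{1}{\ln 2} - 0.32,$$
i.e.\ $h(\log\gamma) \approx 0.179 \leq (1/\ln 2 - 0.32)/(3/(\ln 2)^2) \approx 0.180$, which yields bracket $\geq 0.32$. The main obstacle, and the reason $\gamma$ had to be enlarged from $3000$ to $3000^2$, is exactly that the constant $0.32$ is essentially tight at $x = \gamma$, so the margin in the numerical verification is thin; all the other steps (differentiation, monotonicity of $h$) are routine.
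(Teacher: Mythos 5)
Your proof is correct and takes essentially the same route as the paper's: differentiate $f$, factor out $\frac{1}{x\log^3(4+\log x)}$, and verify numerically that the remaining bracket is at least $0.32$ at $x=\gamma$ (your numbers, $h(\log\gamma)\approx 0.179$ against the threshold $\approx 0.180$, check out, as does the derivative formula). The only difference is that you explicitly prove the monotonicity of $t/\bigl((4+t)\log(4+t)\bigr)$ on $[\log\gamma,\infty)$ to justify that checking $x=\gamma$ suffices — a step the paper leaves implicit but which is genuinely needed here, since the cruder bound used in the earlier Lemma~\ref{lem123}(iii) would only give about $0.13$.
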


\begin{proof}
Taking the derivative of $f(x) = \log{x}/\log^3{(4+\log{x})}$ we get by the multiplication rule
\begin{align*}
f'(x) &=
\frac{\ln^2{2}}{x \ln^3(4+\log{x})}
-\frac{\frac{3\ln{2} \ln{x}}{\log{x}+4}}
{x \ln^4(4+\log{x})} \\
&=
\frac{\ln^2{2}}{x\ln^3(4+\log{x})} \left[ 1 - \frac{3\ln{x}}{\ln{2} \cdot (4+\log{x}) \cdot \ln(4+\log{x})} \right] \\
&\geq
\frac{0.224\cdot \ln^2{2}}{x\ln^3(4+\log{x})} =
\frac{0.224}{x \cdot \ln{2} \cdot \log^3(4+\log{x})} \geq
\frac{0.32}{x\log^3(4+\log{x})}.
\end{align*}

\end{proof}

\begin{lemma}\label{applem3}
$f(1+x) + f(1+\frac{1}{x})$ has only one maximum point in
$[\frac{1}{\gamma},\gamma]$, and two global minima.
\end{lemma}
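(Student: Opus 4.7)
The plan is to replay the strategy of Lemma~\ref{lem123}(v), suitably adapted to the new definition $f(x) = \log x / \log^3(4+\log x)$. First, I would exploit the symmetry $h(x) := f(1+x) + f(1+1/x) = h(1/x)$, which immediately reduces the analysis to $[1, \gamma]$ and forces $h'(1)=0$. If I can show that $h$ has a \emph{unique} critical point $x^\star$ in $(1,\gamma]$, and that this critical point is a minimum, then $x=1$ is automatically the unique maximum on $[1/\gamma,\gamma]$, and the two symmetric minima $x^\star, 1/x^\star$ are the two global minima claimed in the statement.

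Next, I would pin down the qualitative behaviour of $h'(x) = f'(1+x) - x^{-2} f'(1+1/x)$ at the two ends of $[1,\gamma]$. A direct computation gives $h''(1) = 2f''(2) + 2f'(2)$; since $f$ is concave while $f'(2)$ is small (the denominator $\log^3(4+\log 2)$ is already large for the new $f$), I would verify from the explicit formula that the concavity term dominates, so that $h''(1)<0$ and $h$ is strictly decreasing on a right-neighbourhood of $1$. On the other end, $f'(1+x)$ decays only like $1/(x\log^3(4+\log x))$, whereas $x^{-2} f'(1+1/x) \sim f'(2)/x^2$ decays quadratically, so $h'(x) > 0$ for all sufficiently large $x \leq \gamma$. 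Combining this with $h'(1)=0$ and $h''(1)<0$ guarantees at least one interior critical point of $h$ in $(1,\gamma)$, which must be a minimum.

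The main obstacle is the \emph{uniqueness} of this sign change of $h'$, and here I would use exactly the same expedient as in the proof of Lemma~\ref{lem123}(v): split $[1,\gamma]$ into a bounded range $[1,M]$ on which uniqueness of the sign change is checked by direct numerical plotting of $h$ (analogous to Figure~\ref{wolfram}, noting that the qualitative shape is the same after raising the exponent in the denominator from $2$ to $3$), and an analytic tail $[M,\gamma]$ on which $f'(1+x) > x^{-2}f'(1+1/x)$ by a wide margin for elementary reasons. Together with the symmetry around $x=1$, this yields the full claim. The analytic verification is tedious but routine, so, following the same convention as in Lemma~\ref{lem123}(v), I would omit the detailed calculation.
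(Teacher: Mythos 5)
Your proposal is correct and follows essentially the same route as the paper: reduce to $[1,\gamma]$ by the symmetry $h(x)=h(1/x)$, establish that $h$ has a single interior minimum there (so $x=1$ is the only critical maximum and the two symmetric minima are global), and defer the uniqueness of the sign change of $h'$ to a numerical/plot-based check plus an elementary tail estimate — which is exactly the paper's ``plot similar to Figure~\ref{wolfram}, omit the tedious analytical derivation.'' Your added structure ($h'(1)=0$, $h''(1)=2f''(2)+2f'(2)<0$, and $h'(x)>0$ for large $x$ since $f'(1+x)=\Theta(1/(x\log^3\log x))$ dominates $x^{-2}f'(1+1/x)=\Theta(1/x^2)$) is sound, modulo the harmless slip that the latter is asymptotically $f'(1)/x^2$ rather than $f'(2)/x^2$.
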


\begin{proof}
Because of symmetry around $x=1$, it suffices to prove that $h(x) = f(1+x) + f(1+\frac{1}{x})$ has only one minimum in $(1,\gamma)$. This minimum is at $x=5.6259$. The plot of this function is similar to the plot in Figure~\ref{wolfram}, we omit the tedious analytical derivation. 
\end{proof}

\begin{lemma}\label{appfsymetrybound}
Fix $a,b \in \mathbb{N}$, if $\frac{1}{\gamma} \leq \frac{a}{b} \leq \gamma$, then
$$f \left( \frac{a+b+1}{a} \right) +
f \left( \frac{a+b+1}{b} \right) \leq 0.22 .$$
\end{lemma}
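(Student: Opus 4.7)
The plan is to mirror the proof of Lemma~\ref{lem123}(vi) for the cube-log $f$, combining Lemma~\ref{applem1} and Lemma~\ref{applem3}, and splitting the parameter range into a bounded small-case region (verified computationally) and a large-case region (handled analytically). The target $0.22$ is noticeably looser than the analogous $0.95$ in the quadratic-log case, reflecting the fact that the new $f$ is pointwise smaller, but the structure of the proof is essentially the same.

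First I would verify the inequality by direct computation for $1 \le a,b \le 3000$. The worst case on this grid appears at $a=b=1$, giving $2f(3) = 2\log 3/\log^3(4+\log 3) \approx 0.207 < 0.22$, so this region is settled. For $\min(a,b) > 3000$, set $x = a/b \in [1/\gamma,\gamma]$ and apply Lemma~\ref{applem1} with $(x,y) = (1+b/a,\,1/a)$ and symmetrically with $a,b$ swapped, obtaining
$$f\!\left(\tfrac{a+b+1}{a}\right) + f\!\left(\tfrac{a+b+1}{b}\right) \;\le\; f(1+x) + f(1+1/x) + 1.5\!\left(\tfrac{1}{a} + \tfrac{1}{b}\right).$$
By Lemma~\ref{applem3}, the symmetric function $h(x) = f(1+x)+f(1+1/x)$ on $[1/\gamma,\gamma]$ attains its supremum either at $x=1$, where $h(1) = 2f(2) = 2/\log^3 5 \approx 0.16$, or at the boundary $x=\gamma = 3000^2$, where $h(\gamma) \approx \log \gamma/\log^3(4+\log \gamma) \le 0.215$ (since $\log \gamma \approx 23.1$). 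The perturbation $1.5(1/a+1/b) \le 0.001$ is negligible, so the total remains under $0.22$.

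The main obstacle is the mixed regime $b \le 3000 < a \le \gamma b$: brute enumeration is infeasible ($a$ can reach $\gamma \cdot 3000 \approx 10^{10}$) and the $1.5/b$ error from Lemma~\ref{applem1} is not small when $b$ is small. I would handle it by fixing each $b \in \{1,\dots,3000\}$ and analysing $F_b(a) := f((a+b+1)/a) + f((a+b+1)/b)$ on $a \in [b,\gamma b]$: the first summand is decreasing and the second increasing in $a$, so the supremum occurs at an endpoint or at an interior critical point of $F_b'$. At $a=b$ we get $2f(2+1/b) \le 2f(3) \approx 0.207$; at $a=\gamma b$ the first summand is $O(1/\gamma)$ and the second is at most $f(\gamma+2) \le 0.215$. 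The remaining delicate step is ruling out a larger interior maximum, which I expect to follow from the equation $F_b'(a) = 0$ combined with the local monotonicity of $h$ pinned down in Lemma~\ref{applem3}; only this last piece requires care, the rest of the argument being a direct transcription of the original proof with the sharper constants from Lemma~\ref{applem1}.
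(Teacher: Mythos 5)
Your two main branches (computer check for $1\le a,b\le 3000$; Lemma~\ref{applem1} plus Lemma~\ref{applem3} with the endpoint evaluation $\max\{2f(2),\,h(\gamma)\}$ for $a,b>3000$) are exactly the paper's proof, with the same numerical values. You are also right that the mixed regime $\min(a,b)\le 3000<\max(a,b)$ is covered by neither branch; the paper's own write-up silently skips it, and with $\gamma=3000^2$ it is a genuinely large region (e.g.\ $b=1$, $a$ up to $9\cdot 10^6$), so flagging it is a real contribution rather than pedantry.

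However, your treatment of that regime is a plan, not a proof: the decisive step --- ruling out an interior maximum of $F_b$ larger than $0.22$ for each $b\le 3000$ --- is explicitly deferred to ``the equation $F_b'(a)=0$ combined with the local monotonicity of $h$'', and Lemma~\ref{applem3} says nothing about the perturbed function $F_b$, so as written the argument does not close. There is a much shorter fix that avoids the per-$b$ analysis entirely: replace the crude bound $f(u+\delta)\le f(u)+1.5\delta$ by the concavity estimate $f(u+\delta)-f(u)\le \delta\, f'(u)$ together with $f'(u)\le \frac{1}{u\ln 2\,\log^3 4}=\frac{1}{8u\ln 2}$. Applied with $u=1+a/b$, $\delta=1/b$, this gives
$$f\Bigl(\tfrac{a+b+1}{b}\Bigr)-f\Bigl(1+\tfrac{a}{b}\Bigr)\;\le\;\frac{1}{8\ln 2\,(a+b)},$$
and symmetrically for the other term, so the total correction to $h(a/b)$ is at most $\frac{2}{8\ln2\,(a+b)}\le \frac{1}{\ln 2\cdot 4\cdot 3001}<10^{-4}$ whenever $\max(a,b)>3000$. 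Combined with $h(x)\le 0.217$ this settles the mixed case (and in fact subsumes your large-$a,b$ case as well).
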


\begin{proof}
Cases $1 \leq a,b \leq 3000$ are computer-verified (the maximal value is 0.208). Thus, assume $a,b > 3000$. Denote $x = \frac{a}{b}$. Using Lemma~\ref{applem1}, we get that

\begin{align}
\begin{split}
\label{appfsymetry-first}
&f \left( \frac{a+b+1}{a} \right) +
f \left( \frac{a+b+1}{b} \right) =
f \left( 1 + \frac{1}{x} + \frac{1}{a} \right) +
f \left( 1 + x + \frac{1}{b} \right) \\
&\leq f \left( 1 + \frac{1}{x} \right) + f \left( 1 + x \right) + 1.5 \left( \frac{1}{a} + \frac{1}{b} \right) \leq
f \left( 1 + \frac{1}{x} \right) + f \left( 1 + x \right) + 0.001.
\end{split}
\end{align}
Using Lemma~\ref{applem3} we further obtain
\begin{align}
\label{appfsymetry-second}
f \left( 1 + \frac{1}{x} \right) + f \left( 1 + x \right)
\leq \max \{f(2) + f(2), f\left(1+\frac{1}{\gamma}\right) + f(1+\gamma) \} \leq 0.217.
\end{align}
By combining (\ref{appfsymetry-first}) and~(\ref{appfsymetry-second}) the claim follows. 
\end{proof}

We are ready to prove Lemma~\ref{applemc}.

\begin{proof}

\textbf{Type-(3A) link} ($c \geq \gamma^2 a \geq \gamma b$):

Recall that the change in potential given in Equation (\ref{eq:1}) is
\begin{align}
\Delta \Phi = \tag{\ref{eq:1}}
\begin{split}
f\left(\frac{a+b+1}{a}\right) + f\left(\frac{a+b+1}{b}\right) + f\left(\frac{a+b+c+2}{a+b+1}\right) + f\left(\frac{a+b+c+2}{c}\right) \\
- f\left(\frac{a+b+c+2}{a}\right) - f\left(\frac{a+b+c+2}{b+c+1}\right) - f\left(\frac{b+c+1}{b}\right) - f\left(\frac{b+c+1}{c}\right).
\end{split}
\end{align}
Denote $x=\frac{c}{a}\geq \gamma^2, y=\frac{a}{b}\geq \frac{1}{\gamma}$. We split the proof into cases according to the value of $y$.

{\bf Case $y < \gamma$:} We use Lemma~\ref{appfsymetrybound} to bound the first two terms by 0.22. The $7$-th term is larger than the $3$-rd so if we discard them both we only increase the right hand side. We also discard the $6$-th and $8$-th terms (as their signs are negative). This gives

$$\Delta \Phi \leq 0.22 + f\left(\frac{a+b+c+2}{c}\right) - f\left(\frac{a+b+c+2}{a}\right).$$

Observe that since $a \geq 1$, $$f\left(\frac{a+b+c+2}{c}\right) \leq 
f\left(\frac{3a+b+c}{c}\right) =
f\left(1 + \frac{3}{x} + \frac{1}{xy}\right) \leq
1.5 \left( \frac{3}{\gamma^2} + \frac{1}{\gamma} \right) \leq 0.01,$$ where the next to last inequality follows from Lemma~\ref{applem1}. Observing $f\left(\frac{a+b+c+2}{a}\right) \geq f(1+x) \geq f(\gamma^2) > 0.25$ completes the proof in this case. 




{\bf Case $x\geq y \geq \gamma$:} We drop the $3$-rd and $5$-th terms (as a negative quantity). It now suffices to show that
$$f\left(\frac{b+c+1}{b}\right) \geq f\left(\frac{a+b+1}{a}\right) + f\left(\frac{a+b+1}{b}\right) + f\left(\frac{a+b+c+2}{c}\right) .$$
Using Lemma~\ref{applem1} we get:
$$f \left( \frac{a+b+1}{a} \right) + f\left(\frac{a+b+c+2}{c} \right) \leq
1.5 \left( \frac{2}{y} + \frac{1}{x} + \frac{3}{xy} \right) \leq \frac{4.5}{y} + \frac{4.5}{y^2}.$$

Using Lemma~\ref{applem2} and by Lagrange theorem:
\begin{align*}
&f\left(\frac{b+c+1}{b}\right) - f\left(\frac{a+b+1}{b}\right) = f\left( 1 + xy + \frac{1}{b} \right) - f\left( 1 + y + \frac{1}{b}\right) \\
&\geq f\left( 1 + \gamma y + \frac{1}{b} \right) - f\left( 1 + y + \frac{1}{b}\right) \geq 
\left[\left( 1 + \gamma y + \frac{1}{b} \right) -  \left( 1 + y + \frac{1}{b}\right)\right] \cdot f'\left( 1 + \gamma y + \frac{1}{b} \right)\\
&\geq\left( \gamma y - y \right) f'\left( 2 + \gamma y \right) 
\geq  \frac{\left( \gamma y - y \right) \cdot 0.32}{ (2+\gamma y) \log^3(4+\log(2+\gamma y))}
\geq  \frac{0.3}{ \log^3(4+\log(2+\gamma y))} \geq
\frac{4.5}{y} + \frac{4.5}{y^2}.
\end{align*}

Where the second inequality uses Lagrange theorem and the concavity of $f$, and the second to last inequality uses that $(\gamma y - y)/(2+\gamma y)\geq 0.9996$.
%
%
%

{\bf Case $y>x$:} We again drop the $3$rd and $5$th terms (as a negative quantity). Using Lemma~\ref{applem2},
\begin{align*}
&f\left(\frac{b+c+1}{b}\right) - f\left(\frac{a+b+1}{b}\right) = f\left( 1 + xy + \frac{1}{b} \right) - f\left( 1 + y + \frac{1}{b}\right) \\
&\geq \left[ \left( 1 + xy + \frac{1}{b} \right)  - \left( 1 + y + \frac{1}{b}\right) \right] \cdot f'\left( 1 + xy + \frac{1}{b} \right)
\geq (xy - y)\cdot f'\left( 2 + xy \right) \\
&\geq \frac{\left( xy-y\right) \cdot 0.32}{(2+xy) \log^3(4+\log(2+xy))}
\geq \frac{0.3}{\log^3(4+\log(2+xy))} 
\geq \frac{0.3}{\log^3(4+\log(2+y^2))}.
\end{align*}


Also, by the concavity of $f$:
\begin{align}
\label{appeqq-type5-first}
f\left(\frac{a+b+c+2}{b+c+1}\right) +
f\left(\frac{b+c+1}{c}\right) \geq
f\left(1 + \frac{a+1}{b+c+1} + \frac{b+1}{c}\right).
\end{align}
Thus, looking at the $4$-th, $6$-th and $8$-th terms:
\begin{align*}
&f\left(\frac{a+b+c+2}{c}\right) - f\left(\frac{a+b+c+2}{b+c+1}\right) -
f\left(\frac{b+c+1}{c}\right)\\
&\leq f\left(\frac{a+b+c+2}{c}\right) -
f\left(1 + \frac{a+1}{b+c+1} + \frac{b+1}{c}\right) \tag{Inequality (\ref{appeqq-type5-first})} \\
&\leq 1.5 \left(  \frac{a+b+2}{c} - \frac{a+1}{b+c+1} - \frac{b+1}{c} \right) = 1.5 \left(  \frac{a+1}{c} - \frac{a+1}{b+c+1} \right) \tag{Lemma~\ref{applem1}}\\
&= 1.5 \cdot \frac{(a+1)(b+1)}{c\cdot (b+c+1)}
= 1.5 \cdot \frac{\left(1+\frac{1}{a} \right) \left(1+\frac{1}{b} \right)}{x\cdot \left( 1 + xy + \frac{1}{b}\right)} \leq \frac{1}{xy} \leq \frac{1}{y}. \tag{using $x\geq \gamma^2$, and $\frac{1}{a},\frac{1}{b}\leq 1$}
\end{align*}


Putting it all together (using $f\left(\frac{a+b+1}{a}\right) \leq f\left(1 + \frac{2}{y}\right) \leq \frac{3}{y}$) we get that
$$\Delta \Phi \leq \frac{3}{y} + \frac{1}{y} - \frac{0.3}{\log^3(4+\log(1+y^2))} < 0. $$

\textbf{Type-(3B) link} ($c \geq \gamma b \geq \gamma^2 a$):
The total change in potential is
\begin{align}
\Delta \Phi = \tag{\ref{eq:1}}
\begin{split}
f\left(\frac{a+b+1}{a}\right) + f\left(\frac{a+b+1}{b}\right) + f\left(\frac{a+b+c+2}{a+b+1}\right) + f\left(\frac{a+b+c+2}{c}\right) \\
- f\left(\frac{a+b+c+2}{a}\right) - f\left(\frac{a+b+c+2}{b+c+1}\right) - f\left(\frac{b+c+1}{b}\right) - f\left(\frac{b+c+1}{c}\right).
\end{split}
\end{align}
We drop the $3$-rd and $7$-th terms (as a negative quantity), and also discard the $6$-th term:
$$\Delta \Phi \leq f\left(\frac{a+b+1}{a}\right) +
f\left(\frac{a+b+1}{b}\right) + f\left(\frac{a+b+c+2}{c}\right) - f\left(\frac{a+b+c+2}{a}\right) - f\left(\frac{b+c+1}{c}\right).$$
Thus, it is enough to show
\begin{align}
\label{appeqq-type6-first}
f\left(\frac{a+b+c+2}{a}\right) + f\left(\frac{b+c+1}{c}\right) \geq f\left(\frac{a+b+1}{a}\right) +
f\left(\frac{a+b+1}{b}\right) + f\left(\frac{a+b+c+2}{c}\right).
\end{align}

Let $x=\frac{c}{b} \geq \gamma,y=\frac{b}{a} \geq \gamma$. We split the rest of the proof into two cases according to the largest among $x$ and $y$.

{\bf Case $x>y$:} We discard $f\left(\frac{b+c+1}{c}\right)$ from Inequality (\ref{appeqq-type6-first}). Note that
$$f \left( \frac{a+b+1}{b} \right) + f\left(\frac{a+b+c+2}{c}\right) \leq f \left( 1+\frac{2}{y} \right) +  f \left(1+ \frac{1}{x} + \frac{3}{xy} \right) \leq
f \left( 1+\frac{2}{y} \right) +  f \left(1+ \frac{1}{y} + \frac{3}{y^2} \right).$$

Thus, by Lemma~\ref{applem1}

$$f \left( \frac{a+b+1}{b} \right) + f\left(\frac{a+b+c+2}{c}\right) \leq   1.5 \left(  \frac{2}{y} + \frac{1}{y} + \frac{3}{y^2}  \right) = \frac{4.5}{y} + \frac{4.5}{y^2}.$$

Bounding the remaining terms of Inequality (\ref{appeqq-type6-first}), we get (using concavity)
\begin{align*}
&f\left(\frac{a+b+c+2}{a}\right) - f\left(\frac{a+b+1}{a}\right)  = f\left(1+y+xy+\frac{2}{a}\right) -
f\left(1+y+\frac{1}{a}\right)\\
&\geq f\left(1+y+y^2+\frac{1}{a}\right) -
f\left(1+y+\frac{1}{a}\right) \mbox{~~~~ ($x>y$)}\\
&\geq \left[ \left(1+y+y^2+\frac{1}{a}\right) -
\left(1+y+\frac{1}{a}\right) \right] \cdot f'\left(1+y+y^2+\frac{1}{a}\right) 
\tag{Lagrange theorem}\\
&= y^2 \cdot f'\left(1+y+y^2+\frac{1}{a}\right) \geq
y^2 \cdot f'(2+y+y^2) \tag{concavity of $f$, $a \geq 1$}\\
&\geq \frac{y^2\cdot 0.32}{(2+y+y^2)\log^3(4+\log(2+y+y^2))}
\geq \frac{0.3}{\log^3(4+\log(2+y+y^2))} \tag{Lemma~\ref{applem2}}\\
&\geq \frac{4.5}{y} + \frac{4.5}{y^2}.
\end{align*}


{\bf Case $x\leq y$:} By taking the $2$-nd, $4$-th and $5$-th terms from Inequality (\ref{appeqq-type6-first}) we get that
\begin{align*}
&f \left( \frac{a+b+1}{b} \right) + f\left(\frac{a+b+c+2}{c}\right) - 
f\left(\frac{b+c+1}{c}\right) \\
&\leq f \left( \frac{2a+b}{b} \right) + f\left(\frac{2a+b+c+1}{c}\right) - 
f\left(\frac{b+c+1}{c}\right) \tag{$a \geq 1$}\\
&= f \left( 1+\frac{2}{y} \right) +  f \left(1+ \frac{1}{x}+ \frac{1}{c} + \frac{2}{xy}  \right)-  f\left(1 + \frac{1}{x} +  \frac{1}{c}\right)\\
&\leq
1.5 \left(\frac{2}{y} + \frac{2}{xy} \right) \leq
1.5 \left(\frac{2}{y} + \frac{2}{\gamma y} \right) =
\frac{3}{y} + \frac{3}{\gamma y} \tag{Lemma~\ref{applem1}}.
\end{align*}
Also, using Lagrange theorem and Lemma~\ref{applem2} we get that
\begin{align*}
&f\left(\frac{a+b+c+2}{a}\right) - f\left(\frac{a+b+1}{a}\right)  = f\left(1+y+xy+\frac{2}{a}\right) -
f\left(1+y+\frac{1}{a}\right)\\
&\geq f\left(1+y+\gamma y+\frac{1}{a}\right) -
f\left(1+y+\frac{1}{a}\right) \geq
\gamma y \cdot f'\left(1+y+\gamma y+\frac{1}{a}\right) \geq
\gamma y \cdot f'(2+y+\gamma y) \geq \\
&\frac{(\gamma y)\cdot 0.32}{(2+y+\gamma y)\log^3(4+\log(2+y+\gamma y))}
\geq \frac{0.3}{\log^3(4+\log(2+y+\gamma y))} \geq
\frac{3}{y} + \frac{3}{\gamma y}.
\end{align*}


This concludes the proof.
\end{proof}



\newpage

\bibliography{article}


\end{document}